\documentclass[aps,prx,twocolumn,superscriptaddress]{revtex4-2}
\usepackage[T1]{fontenc}
\usepackage[utf8]{inputenc}
\usepackage{amsfonts}
\usepackage{amsmath}
\usepackage{amssymb}
\usepackage{amsthm}
\usepackage{esint}
\usepackage{graphicx}
\usepackage{bm}
\usepackage{color}
\usepackage{xcolor}
\usepackage{mathrsfs}
\usepackage[colorlinks,bookmarks=true,citecolor=blue,linkcolor=red,urlcolor=blue]{hyperref}
\usepackage{appendix}
\usepackage{float}
\usepackage[export]{adjustbox}
\usepackage{ulem}

\usepackage{tikz}
\usetikzlibrary{positioning}

\newcommand{\RNum}[1]{\uppercase\expandafter{\romannumeral #1\relax}}

\newcommand{\beq}{\begin{eqnarray} }
\newcommand{\eeq}{\end{eqnarray} }
\newcommand{\Beq}{\begin{eqnarray*} }
\newcommand{\Eeq}{\end{eqnarray*} }

\newcommand{\re}{\mathrm{Re}}

\newcommand{\calH}{\mathcal{H}}

\newcommand{\calE}{\mathcal{E}}

\newcommand{\calO}{\mathcal{O}}
\newcommand{\calP}{\mathcal{P}}
\newcommand{\calQ}{\mathcal{Q}}
\newcommand{\calS}{\mathcal{S}}

\newcommand{\calM}{\mathcal{M}}
\newcommand{\calR}{\mathcal{R}}
\newcommand{\expt}{\mathbb{E}}
\newcommand{\prob}{\mathbb{P}}
\newcommand{\tr}{\mathrm{tr}}
\newcommand{\npow}{{n\text{-pow}}}
\newcommand{\Onpow}{\langle O\rangle_{\npow}}
\newcommand{\Oth}{\langle O\rangle_{\text{th}}}
\newcommand{\Oeq}{\langle O\rangle_{\text{eq}}}
\newcommand{\rect}{\mathrm{rect}}

\newcommand{\absenceIOMs}{shiraishiProofAbsenceLocal2019,fukaiAllLocalConserved2023,chibaProofAbsenceLocal2024a,yamaguchiProofAbsenceLocal2024,shiraishiAbsenceLocalConserved2024,yamaguchiCompleteClassificationIntegrability2024,hokkyoProofAbsenceLocal2025,chibaProofAbsenceLocal2025,parkProofNonintegrabilitySpin12025,parkGraphtheoreticalProofNonintegrability2025}

\newtheorem{theorem}{Theorem}
\newtheorem{proposition}{Proposition}

\renewcommand{\emph}[1]{\textit{#1}}



\usepackage{babel}

\begin{document}
\title{Simple slow operators and quantum thermalization}

\author{Tian-Hua Yang} \email{yangth@princeton.edu} \affiliation{Department of Physics, Princeton University, Princeton, New Jersey 08544, USA}
\author{Sarang Gopalakrishnan} \email{sgopalakrishnan@princeton.edu} \affiliation{Department of Electrical and Computer Engineering, Princeton University, Princeton, New Jersey 08544, USA}
\author{Dmitry A. Abanin} \email{dabanin@princeton.edu} \affiliation{Department of Physics, Princeton University, Princeton, New Jersey 08544, USA}

\begin{abstract}

We establish a rigorous relation between the thermalization of typical initial states and the dynamics of local operators. We introduce a concept of simple slow operators (SSOs), defined as operators that have a small commutator with the Hamiltonian and have significant small-sized components. We show that if typical initial states (drawn from a low-complexity state ensemble) do not thermalize on timescale $t$, then SSOs must exist that are approximately conserved up to timescale $t$. Equivalently, the absence of SSOs implies that typical initial states thermalize. We establish these results by introducing the concept of an {\it ensemble variance norm} of an operator, defined as the typical magnitude of the expectation value of that operator with respect to states in the ensemble. For low-entanglement ensembles, the norm is related to operator sizes, allowing us to establish a direct link between operator growth and thermalization.
\end{abstract}

\maketitle
 
\section{Introduction}

One fundamental assumption of statistical mechanics is that many-body systems reach thermal equilibrium under their own dynamics, a process referred to as thermalization~\cite{pathriaStatisticalMechanics2011,sethnaStatisticalMechanicsEntropy2021a,kardarStatisticalPhysicsParticles2015}. Quantum thermalization is the study of whether and how closed quantum systems thermalize under unitary evolution, a problem that lies at the foundation of quantum statistical physics~\cite{eisertQuantumManybodySystems2015,gogolinEquilibrationThermalisationEmergence2016,dalessioQuantumChaosEigenstate2016b,borgonoviQuantumChaosThermalization2016,moriThermalizationPrethermalizationIsolated2018a,uedaQuantumEquilibrationThermalization2020a}. A rich body of results exist to explain the mechanism of quantum thermalization~\cite{deutschQuantumStatisticalMechanics1991a,srednickiChaosQuantumThermalization1994a,kimTestingWhetherAll2014c,deutschEigenstateThermalizationHypothesis2018a,rigolAlternativesEigenstateThermalization2012,wangEigenstateThermalizationHypothesis2022a,pappalardiEigenstateThermalizationHypothesis2022b,hahnEigenstateCorrelationsEigenstate2024a}, and it has been demonstrated that thermalization should occur in generic quantum many-body systems~\cite{bertoniTypicalThermalizationLowentanglement2025,pilatowsky-cameoQuantumThermalizationMust2025}. As such, it is a particularly interesting question to find classes of quantum systems which avoid thermalization.

Over the decades, physicists have identified numerous systems that behave in non-thermal ways, such as integrable systems~\cite{cauxRemarksNotionQuantum2011a,cassidyGeneralizedThermalizationIntegrable2011a,pozsgayGeneralizedGibbsEnsemble2013,pozsgayCorrelationsQuantumQuenches2014a,woutersQuenchingAnisotropicHeisenberg2014}, many-body localized systems~\cite{nandkishoreManyBodyLocalizationThermalization2015, abaninColloquiumManybodyLocalization2019a, palManybodyLocalizationPhase2010a, serbynLocalConservationLaws2013,husePhenomenologyFullyManybodylocalized2014, voskTheoryManyBodyLocalization2015,chandranConstructingLocalIntegrals2015,rademakerExplicitLocalIntegrals2016}, systems with quantum many-body scars~\cite{turnerWeakErgodicityBreaking2018a,shiraishiSystematicConstructionCounterexamples2017a,moudgalyaPairingHubbardModels2020,serbynQuantumManybodyScars2021,renQuasisymmetryGroupsManyBody2021a,moudgalyaExhaustiveCharacterizationQuantum2024a,moudgalyaQuantumManybodyScars2022b}, and Hilbert space fragmented systems~\cite{salaErgodicityBreakingArising2020,yangHilbertSpaceFragmentationStrict2020a,moudgalyaQuantumManybodyScars2022b,moudgalyaHilbertSpaceFragmentation2022b,balasubramanianGlassyWordProblems2024,chenQuantumFragmentationExtended2024,zhouQuantumHilbertSpace2026}. There are also systems that thermalize but only on long timescales~\cite{bergesPrethermalization2004,rigolBreakdownThermalizationFinite2009,santosOnsetQuantumChaos2010,reimannTypicalityPrethermalization2019,moriThermalizationPrethermalizationIsolated2018a}, sometimes exponentially long~\cite{abaninRigorousTheoryManyBody2017b,abaninEffectiveHamiltoniansPrethermalization2017,yinPrethermalizationLocalRobustness2023a,wurtzEmergentConservationLaws2020,vanovacFinitesizeGeneratorsWeak2024}, meaning that they behave non-thermally in physically reasonable timescales, a phenomenon known as prethermalization. It is of particular interest to find a universal characterization of all these non-thermal systems. A strong candidate is the presence of integrals of motion (IOMs), which are operators that commute with the Hamiltonian. It is a common belief that non-thermal behavior can always be attributed to the existence of some form of IOM (apart from quantum many-body scars, which we will discuss in Section~\ref{subsec:scars}); furthermore, prethermal behavior is often associated with approximate IOMs. However, to our knowledge, no rigorous result formalizing this intuition exists in the literature. One challenge in establishing such a connection is that, in order to prevent thermalization of physical observables, IOMs must have certain locality properties. Yet, arguments about infinite-time dynamics typically lead to constructions of IOMs that are projectors onto system's eigenstates, and are generally non-local. Here, we fill this gap and establish a connection between absence of thermalization and existence of approximate, physical IOMs.

In this Article, we introduce a definition of ``simple slow operators'' (SSOs), and establish a rigorous relationship between the absence of SSOs and thermalization. Our result is based on two intuitive facts: firstly, for generic quantum many-body systems in the Heisenberg picture, local operators would experience operator growth unless they overlap with IOMs~\cite{xuScramblingDynamicsOutofTimeOrdered2024,khemaniOperatorSpreadingEmergence2018d,parkerUniversalOperatorGrowth2019b,gopalakrishnanHydrodynamicsOperatorSpreading2018b,schusterOperatorGrowthOpen2023,robertsOperatorGrowthSYK2018,swingleUnscramblingPhysicsOutoftimeorder2018,swingleMeasuringScramblingQuantum2016b}; secondly, an operator of large operator size would, on average, have a small expectation value on a random weakly-entangled state~\cite{qiMeasuringOperatorSize2019,huangPredictingManyProperties2020b}. We quantify the second fact by defining an \textit{ensemble variance norm} of operators, which characterizes the average magnitude of expectation value of an operator $O$ with respect to typical states drawn from a given ensemble. We show that when the ensemble consists of random low-entanglement states, this ensemble variance norm exponentially penalizes large-sized operators. Based on this, we call an operator ``simple'' if it has a large ensemble variance; it then follows that simple operators are exactly operators that have significant small-sized components. 

We prove that most initial states drawn from the state ensemble thermalize if the system does not have any SSOs, defined as operators that are both simple and (approximately) conserved. These results hold at infinitely long times as well as at finite times. In the latter case, thermalization on a time scale $\tau$ follows from the absence of SSOs that nearly commute with the Hamiltonian up to a precision $1/\tau$. Equivalently, if typical states do not thermalize, it follows that SSOs exist. We further show that a parent superoperator exists that has SSOs as its ground state~\cite{moudgalyaSymmetriesGroundStates2024a}, enabling numerical identification of SSOs (which, in turn, determines system's thermalization properties) given a particular Hamiltonian. The results can be generalized to Floquet systems straightforwardly.

\begin{figure}[!t]
    \centering
    \includegraphics{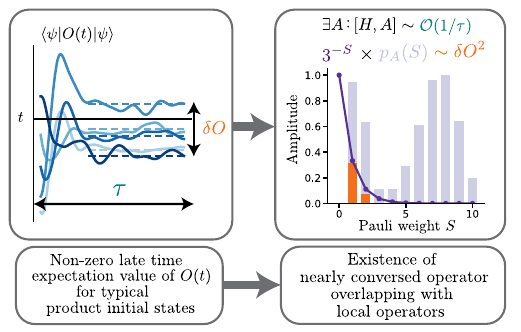}
    \caption{An illustration of the main result of this paper. We demonstrate that the late-time expectation value of any observable $O$ being non-vanishing for most choices of product initial states implies the existence of SSOs, i.e., nearly conserved operators that overlap with local operators. In Hamiltonian systems, we choose $O$ to be the orthogonalization of an observable with respect to $H$ (and its powers). The result would imply that if the late-time expectation value is different from the thermal expectation value for most product initial states, then there must exist SSOs that are orthogonal to the Hamiltonian (and its powers).}
    \label{fig:fig0}
\end{figure}

Our formalism establishes a rigorous connection between the operator-space and state-space viewpoints on quantum thermalization. It bridges three active areas of research---rigorous theorems of equilibration and thermalization~\cite{lindenQuantumMechanicalEvolution2009,shortEquilibrationQuantumSystems2011,shortQuantumEquilibrationFinite2012,eisertQuantumManybodySystems2015,biroliEffectRareFluctuations2010b,moriWeakEigenstateThermalization2016a,bucaUnifiedTheoryLocal2023a,pilatowsky-cameoQuantumThermalizationMust2025,cipolloniEigenstateThermalizationHypothesis2021,erdosEigenstateThermalizationHypothesis2024,huangRandomProductStates2024,bertoniTypicalThermalizationLowentanglement2025},
identifying the existence or proving the absence of conserved operators~\cite{lianConservedQuantitiesEntanglement2022b,chenGeneralizedGibbsEnsemble2025,moudgalyaSymmetriesGroundStates2024a,crapsUnifiedTheoryLocal2025,pawlowskiFindingLocalIntegrals2025,pawlowskiLocalIntegralsMotion2026,\absenceIOMs},
and the study of operator growth~\cite{xuScramblingDynamicsOutofTimeOrdered2024,khemaniOperatorSpreadingEmergence2018d,parkerUniversalOperatorGrowth2019b,gopalakrishnanHydrodynamicsOperatorSpreading2018b,schusterOperatorGrowthOpen2023,robertsOperatorGrowthSYK2018,swingleUnscramblingPhysicsOutoftimeorder2018,swingleMeasuringScramblingQuantum2016b} and operator size distribution in real-time dynamics~\cite{qiMeasuringOperatorSize2019,nahumOperatorSpreadingRandom2018d,zhangDynamicalTransitionOperator2023}---and enables results to be transferred between these areas. 
We mention a few examples: (1) the dynamics of operator weights~\cite{qiMeasuringOperatorSize2019,nahumOperatorSpreadingRandom2018d,schusterOperatorGrowthOpen2023,zhangDynamicalTransitionOperator2023} fully encodes the thermalization dynamics of typical initial product states; (2) thermalization implies a lower bound on the growth of out-of-time-ordered correlators (OTOCs)~\cite{swingleMeasuringScramblingQuantum2016b,swingleUnscramblingPhysicsOutoftimeorder2018,xuScramblingDynamicsOutofTimeOrdered2024}; (3) establishing the absence of simple conserved operators by generalizing the method in Refs.~\cite{\absenceIOMs} would suffice to rigorously demonstrate thermalization of most product initial states in certain classes of spin chain models.
Furthermore, our results imply that any system that fails to thermalize must have
simple conserved operators. This bridges an important gap in the context of Hilbert space fragmented systems~\cite{salaErgodicityBreakingArising2020,yangHilbertSpaceFragmentationStrict2020a,moudgalyaQuantumManybodyScars2022b,moudgalyaHilbertSpaceFragmentation2022b,balasubramanianGlassyWordProblems2024,chenQuantumFragmentationExtended2024,zhouQuantumHilbertSpace2026}, which are guaranteed to be non-thermal; in prior work it was unclear whether or not such systems must possess IOMs, but our results prove that they must.

The rest of this work is organized as follows. In Section~\ref{sec:summary}, we give a broad technical overview of our methods and results. In Section~\ref{sec:ensemble-variance-norm}, we introduce the ensemble variance norm, establish its connection to operator size, and give the formal definition of simple operators. In Section~\ref{sec:ensemble-thermalization}, we demonstrate the connection between thermalization and operator growth, and prove the main result that the absence of SSOs implies thermalization. In Section~\ref{sec:slow-operators}, we discuss several aspects of SSOs, including the numerical methods to identify them, the landscape of SSOs in chaotic, integrable, and prethermal systems, and the connection of SSOs to long-time averages. In Section~\ref{sec:implications}, we present several implications of our results in the context of operator growth, quantum many-body scars, and Hilbert space fragmentation. Finally, in Section~\ref{sec:conclusion}, we conclude our results and discuss several open directions for investigation.

\section{Summary of results}
\label{sec:summary}

\paragraph{Definitions} --- We first define equilibration and thermalization for initial states evolving under unitary dynamics. Equilibration of an initial state means that
after evolving to sufficiently late times, any physical observable $O$ approaches a time-independent expectation value:  $\langle \psi(t) | O | \psi(t) \rangle \rightarrow \Oeq(\psi)$.
Thermalization is a further property that the late-time expectation value $\Oeq(\psi)$ coincides with the thermal expectation value $\Oth(\psi)$, calculated using either the microcanonical ensemble or the canonical ensemble. 
To quantify the approach of $\langle \psi(t)|O|\psi(t)\rangle$ to $\Oth$, we define a time-average deviation,
\begin{multline}
D_f(\psi,O;\Oth) \\ = \int_0^{+\infty} \mathrm dt f(t) \left[\langle \psi(t)|O|\psi(t)\rangle-\Oth(\psi)\right]^2,
\end{multline}
where $f(t)$ is a positive envelope function that is normalized, $\int_0^{+\infty} f(t) \mathrm  dt=1$. Thermalization over a time window is equivalent to $D_f$ being vanishingly small for an envelope function $f$ that has support over that time window.

Our results pertain to thermalization of typical initial states drawn from a state ensemble. A quantum state ensemble $\mathcal{E}$ is a distribution over quantum states, $(p_\psi, |\psi\rangle)$, where $p_\psi$ is the probability of state $|\psi\rangle$. 
The expectation value over the state ensemble $\mathcal{E}$ is defined as
\begin{equation}
\expt_{\psi\sim\calE} g(\psi) = \sum_\psi p_\psi g(\psi).
\end{equation}
Therefore, we can characterize the thermalization of typical initial states drawn from the ensemble using the ensemble-average deviation,
\begin{equation}
D_f(\calE,O;\Oth) = \expt_{\psi \sim \calE} D_f(\psi,O;\Oth).
\end{equation}

\paragraph{Simple operators via ensemble variance norm} --- We establish that thermalization is implied by the absence of ``simple slow operators'' (SSOs), which are operators that are approximately conserved (slow) and have significant overlap with local operators (simple). The slowness can be straightforwardly quantified by the norm of $[H,A]$, the commutator of the Hamiltonian with a given operator. To quantify the simplicity of an operator, we introduce the concept of an ensemble variance norm. Given a state ensemble $\calE$, we define an ensemble variance norm of operator $A$ as follows:
\begin{equation}
\|A\|_{\calE} =  \sqrt{\expt_{\psi\sim\calE} |\langle \psi|A|\psi\rangle|^2}.
\end{equation}
Intuitively, this norm characterizes the typical magnitude of operator expectation value $\langle \psi|A|\psi\rangle$. If $\|A\|_\calE$ is small, it follows that for most states $|\psi\rangle$ drawn from $\calE$, $|\langle \psi|A|\psi\rangle|$ is likely to be small. Specializing to a many-body system of qubits, and choosing $\calE$ to be an ensemble of random product states (RPSs), we find that the norm is related to operator size: decomposing $A$ in the basis of Pauli strings as $A=\sum_P c_P P$, we obtain
\begin{equation}
\|A\|_\text{RPS}^2 = \sum_P |c_P|^2 3^{-|P|}, \label{eq:RPS-norm-intro}
\end{equation}
where $|P|$ is the weight of the Pauli string, defined as the number of non-identity operators in the Pauli string $P$, indicating that $\|\cdot \|_\text{RPS}$ exponentially penalizes operator size. A similar phenomenon exists for several other low-entanglement ensembles, and the results can be generalized to qudit systems. In general, saying that $A$ has a large operator size is equivalent to $\|A\|_\calE$ being small. We will thus refer to operators with large $\|A\|_\calE$ as ``simple operators''.

\paragraph{Thermalization from absence of SSOs} --- The main result of this Article is the following bound:

\vspace{0.3cm}
\fbox
{
\begin{minipage}{0.85\linewidth}
\begin{theorem}
\label{theorem-1}
\begin{equation}
D_{f_\tau}(\calE,O;\Onpow) \leq 2\nu(\tau) \|O\|^2, \label{eq:intro-therm-bound}
\end{equation}
where $f_\tau= \frac{2}{\pi}\frac{\sin ^2(t/\tau)}{t^2/\tau}$, $\Onpow$ is an approxi\-mation to the thermal expectation value (defined in Section~\ref{sec:thermal-expectation-value}), and $\nu(\tau)$ is the maximum value of $\|A\|_\calE^2$ among all operators $A$ that satisfy
\begin{align}
\|A\| & = 1, \label{eq:therm-intro-1} \\
\|[H,A]-\lambda A\| & \leq 1/\tau  \text{ for some }\lambda ,\\
\mathrm{tr}[AH^k]& = 0 \text{ for } 0 \leq k \leq n. \label{eq:therm-intro-2}
\end{align}
$\|\cdot \|$ without a subscript denotes Frobenius norm.
\end{theorem}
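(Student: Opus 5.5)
Throughout, I assume that $\Onpow(\psi)=\langle\psi|P_n(H)|\psi\rangle$, where $P_n(H)$ is the Frobenius-orthogonal projection of $O$ onto $\mathrm{span}\{1,H,\dots,H^n\}$. This is my reading of Section~\ref{sec:thermal-expectation-value}, which I have not yet seen. The plan is to reduce the deviation to a time-averaged ensemble variance norm and decompose it spectrally in the Liouvillian $\mathcal L=[H,\cdot\,]$.

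\textbf{Step 1: reduce to the orthogonalized operator.} Set $O'=O-P_n(H)$. Then $\mathrm{tr}[O'H^k]=0$ for $0\le k\le n$, and $\|O'\|\le\|O\|$. Since $P_n(H)$ commutes with $H$, $\langle\psi(t)|P_n(H)|\psi(t)\rangle=\Onpow(\psi)$ for all $t$. Hence the deviation only sees $O'$:
\[
D_{f_\tau}(\calE,O;\Onpow)=\int_0^\infty\mathrm dt\,f_\tau(t)\,\|O'(t)\|_\calE^2 ,
\]
with $O'(t)=e^{iHt}O'e^{-iHt}$.

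\textbf{Step 2: write the norm as a quadratic form and expand in frequencies.} $\|A\|_\calE^2=\expt_{\psi\sim\calE}|\langle\psi|A|\psi\rangle|^2$ is a positive semidefinite Hermitian form in $A$; call it $\langle A|M|A\rangle$. Decompose $O'=\sum_\omega O_\omega$ into eigenoperators of $\mathcal L$, so that $O_\omega(t)=e^{i\omega t}O_\omega$. Then
\[
D=\sum_{\omega,\omega'}\langle O_\omega|M|O_{\omega'}\rangle\int_0^\infty f_\tau(t)\,e^{i(\omega'-\omega)t}\,\mathrm dt .
\]
Because $D$ is real and the coefficient matrix is Hermitian, I can replace the one-sided transform by its real part. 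Since $f_\tau$ is even, that real part equals $\tfrac12\int_{-\infty}^{\infty}$. This gives the triangle function $\Lambda(\nu)=\max(0,1-|\nu|\tau/2)$ (normalized so that $\Lambda(0)=1$, consistent with $\int_0^\infty f_\tau=1$).

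\textbf{Step 3: the key step, writing the triangle as an average of spectral windows.} Use the identity
\[
\Lambda(\omega-\omega')=\frac{\tau}{2}\int\mathrm d\lambda\;\mathbf 1_{|\omega-\lambda|\le1/\tau}\,\mathbf 1_{|\omega'-\lambda|\le1/\tau}.
\]
Define $A_\lambda=\sum_{|\omega-\lambda|\le1/\tau}O_\omega$, the spectral projection of $O'$ onto a frequency window of width $2/\tau$ centered at $\lambda$. Substituting gives
\[
D=\frac{\tau}{2}\int\mathrm d\lambda\,\|A_\lambda\|_\calE^2 .
\]
Each normalized $A_\lambda/\|A_\lambda\|$ is admissible for $\nu(\tau)$:
\begin{itemize}
\item $\|[H,A_\lambda]-\lambda A_\lambda\|\le\tau^{-1}\|A_\lambda\|$, because all frequencies in the window lie within $1/\tau$ of $\lambda$.
\item $\mathrm{tr}[A_\lambda H^k]=0$, because the spectral projection is self-adjoint in the Frobenius inner product and maps $H^k$, which lies in $\ker\mathcal L$, either to itself or to $0$.
\end{itemize}
Hence $\|A_\lambda\|_\calE^2\le\nu(\tau)\|A_\lambda\|^2$.

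\textbf{Step 4: sum up.} By orthogonality of the $O_\omega$, $\int\mathrm d\lambda\,\|A_\lambda\|^2=\tfrac{2}{\tau}\|O'\|^2$. This yields $D\le\nu(\tau)\|O\|^2$ up to the prefactor; the stated factor $2$ absorbs the convention for normalizing the triangle, so there is slack.

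\textbf{Main obstacle.} The step needing most care is the one-sided time integral in Step 2. Its Fourier transform has an imaginary (Hilbert-transform) part. I plan to remove it exactly by the realness and Hermiticity argument above. If that fails, I would bound it crudely, which is where the factor $2$ would be spent.
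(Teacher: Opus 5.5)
Your architecture is the same as the paper's: frequency windows $A_\lambda$ of width $2/\tau$ (the paper's $B_\mu$), the triangle kernel written as the overlap of two rectangles, admissibility of $A_\lambda/\|A_\lambda\|$ for $\nu(\tau)$, and $\int d\lambda\,\|A_\lambda\|^2=\frac{2}{\tau}\|O'\|^2$. However, Step 2 is false as stated.

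Realness of $D$ carries no information. Because $f_\tau$ is real, the one-sided kernel $K(x)=\int_0^\infty f_\tau(t)e^{ixt}\,dt$ satisfies $K(-x)=\overline{K(x)}$. Hence $D=\sum_{\omega,\omega'}G_{\omega\omega'}K(\omega'-\omega)$ is real for every $O$, where $G_{\omega\omega'}=\langle O_\omega,O_{\omega'}\rangle_\calE$ is Hermitian. Write $K=K_R+iK_I$ with $K_I$ odd. The $K_I$ contribution is $-\sum_{\omega,\omega'}(\mathrm{Im}\,G_{\omega\omega'})\,K_I(\omega'-\omega)$, which is real and generally nonzero whenever $G$ has an imaginary part. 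Equivalently, $\|O'(t)\|_\calE^2$ need not be even in $t$, since no time-reversal symmetry is assumed.

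Here is a concrete counterexample. Take one qubit, $H=\frac{h}{2}Z$, the one-state ensemble $\calE=\{|{+_x}\rangle\}$, and $O=X+Y$, which is already orthogonal to $I$ and $Z$. Then $\langle{+_x}|O(t)|{+_x}\rangle=\cos ht+\sin ht$, so $\|O(t)\|_\calE^2=1+\sin 2ht$. This gives $D=1+\int_0^\infty f_\tau(t)\sin(2ht)\,dt\approx 1+1/(\pi h\tau)$ for $h\tau\gg1$, while $\frac{\tau}{2}\int d\lambda\,\|A_\lambda\|_\calE^2=1$. So the equality in your Step 3 holds only for the symmetrized quantity $D_{\mathrm{sym}}=\frac12\int_{-\infty}^{\infty}f_\tau(t)\|O'(t)\|_\calE^2\,dt$, not for $D$.

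The repair is the fallback you mention, and it needs nothing clever. The integrand $f_\tau(t)\|O'(t)\|_\calE^2$ is nonnegative on all of $\mathbb{R}$, so
$D\le 2D_{\mathrm{sym}}=\tau\int d\lambda\,\|A_\lambda\|_\calE^2\le \tau\,\nu(\tau)\int d\lambda\,\|A_\lambda\|^2=2\nu(\tau)\|O'\|^2\le 2\nu(\tau)\|O\|^2$.

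This is exactly where the paper spends its factor $2$. Its Fourier-side argument with $C=\tau$ gives $\Delta(t)=0$ for $t\ge 0$ and $\Delta(t)=-\tau R(t)\le 0$ for $t<0$, which is the same inequality. The paper also remarks that the factor arises from restricting the two-sided bound to $t>0$.

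Consequently your Step 4 remark is wrong. Your normalization $\Lambda(0)=1$ is already consistent with $\int_0^\infty f_\tau=1$, and the factor $2$ is fully consumed by the one-sided window, leaving no slack.
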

\end{minipage}
}
\vspace{0.3cm}

$f_\tau$ is roughly supported in the range $t\lesssim \tau$; therefore, $D_{f_\tau}(\calE,O;\Onpow)$ measures the average difference between the expectation value of $O$ with the thermal expectation value with respect to all initial states in an ensemble $\calE$ over a time window of $\tau$. The central quantity that controls the tightness of the bound, $\nu(\tau)$, is the maximum ensemble variance among all operators $A$ that are approximate IOMs or spectrum generating algebras (SGAs) up to precision $1/\tau$, excluding powers of the Hamiltonian $H$.

The theorem can be read in both directions. If the typical initial state does not thermalize on a timescale $\tau$ in a given system, the theorem implies that $\nu(\tau)$ must be large. Therefore, SSOs, i.e., simple approximate IOMs or SGAs, must exist in any non-thermal system. In the other direction, if a system has no SSOs on a timescale $\tau$, the operators $A$ that satisfy Eqs.~\eqref{eq:therm-intro-1}-\eqref{eq:therm-intro-2} would be some combinations of eigenstates that are highly complex in the Pauli strings basis; therefore, $\nu(\tau)$ will be (exponentially) small. Thus, the absence of simple IOMs and SGAs on a timescale $\tau$ implies the average thermalization of initial states drawn from the low-entanglement ensemble over the $\tau$ time window. Physically, the presence of simple SGAs often necessitates the presence of simple IOMs; therefore, all ``IOMs or SGAs'' in the statements above can often be replaced by just ``IOMs''.

\paragraph{Tradeoff between size and slowness}
--- An operator $A$ of unit Frobenius norm can be characterized by two parameters: its \emph{simplicity} (or inverse size), $\nu_A=\|A\|_\calE^2$, and the timescale on which it evolves, characterized by the commutator $\tau_A=1/\|[H,A]\|$. As $A$ varies, it traces out a trajectory in the $(\tau_A, \nu_A)$ plane. The $\nu(\tau)$ in Theorem~\ref{theorem-1} upper-bounds the region in which operators (orthogonal to powers of $H$) evolve.
As one moves rightward along the $\tau$ axis, the space of attainable operators becomes smaller due to more stringent constraint on the commutator with the Hamiltonian; consequently, the simplest operator available becomes less simple, so the function $\nu(\tau)$ is monotonically decreasing.
The shape of the region bounded by this curve measures whether SSOs exist on a timescale $\tau$, and is directly related to the thermalization dynamics of the system: for chaotic systems, $\nu(\tau)$ would steeply decay to very small $\nu$ values; for integrable systems, $\nu(\tau)$ would feature a plateau with $\nu=\calO(1)$; for prethermal systems, $\nu(\tau)$ would plateau like an integrable system up to a certain time, then decay again as in a chaotic system.

This upper envelope curve can be numerically identified with a superoperator eigenvalue problem,
\begin{equation}
\calM_\calE[A] - \theta ([H,\cdot])^2[A] = \mu A,
\end{equation}
where $\calM_\calE$ is the kernel of the $\calE$-norm, and $\theta$ is a Lagrange multiplier. It should be solved in the subspace where $\mathrm{tr}[AH^k] = 0$. For each $\theta$, we can find the largest eigenvalue $\mu(\theta)$ and the corresponding eigenoperator $A(\theta)$. We find that $A(\theta)$ would lie on the upper envelope, with $\tau(\theta) = 1/\|[H,A(\theta)]\|$, and $\nu(\theta) = \|A(\theta)\|_\calE^2$. Therefore, as we vary $\theta$, we can get the curve $\nu(\tau)$ parameterized by $\theta$, thus determining whether SSOs exist in the system, and obtaining bounds on thermalization at all timescales.

\tableofcontents

\section{The ensemble variance norm} \label{sec:ensemble-variance-norm}

Denote a quantum state ensemble as $\calE$. The ensemble would contain a set of states $|\psi\rangle$ with an associated probability distribution $p_\psi$. The expectation value over the ensemble is taken as
\begin{equation}
\expt_{\psi\sim\calE} g(\psi) = \sum_\psi p_\psi g(\psi).
\end{equation}
The probability distribution $p_\psi$ can be discrete or continuous, and should be normalized to $\sum_\psi p_\psi=1$ in both cases. Traditionally, all the states in an ensemble are required to be normalized, with $\langle \psi|\psi\rangle=1$. An ensemble satisfying this condition is called a normalized ensemble. It is also possible to study unnormalized ensembles~\cite{markMaximumEntropyPrinciple2024a}, where this condition is relaxed.

The expectation value of an operator $A$ with respect to the ensemble is given by
\begin{equation}
\langle A\rangle_\calE = \expt_{\psi \sim \calE} \langle \psi |A|\psi\rangle = \tr (A\rho_\calE),
\end{equation}
where the density matrix is defined as
\begin{equation}
\rho_\calE = \expt_{\psi\sim\calE} |\psi\rangle \langle \psi|.
\end{equation}
We will use notation $\langle A\rangle$ without a subscript to denote normalized trace $\frac{1}{D}\tr A$, where $D$ is the dimension of the Hilbert space. We define the following inner product on the space of operators,
\begin{equation}
\langle A,B \rangle_{\calE} = \expt_{\psi\sim\calE} \langle \psi|A^\dagger|\psi\rangle \langle \psi|B|\psi\rangle. \label{eq:cov-inner-product}
\end{equation}
We call this the ensemble covariance inner product~\footnote{We are abusing terminology here in Eq.~\eqref{eq:cov-inner-product}, as the formal definition of covariances would be $\mathrm{cov}(A,B)=\expt [AB]-\expt[A]\expt [B]$, so our ``covariance inner product'' would be a true covariance only when $\langle A\rangle_\calE=\langle B\rangle_\calE=0$, but we would define this inner product for any $A$, $ B$.} for $\calE$, or the $\calE$-inner product. It is easy to verify that this inner product is bilinear and positive semidefinite. For Hermitian operators $A$ and $B$, this inner product is real symmetric. The inner product induces a norm,
\begin{equation}
\|A\|_{\calE} = \sqrt{\langle A,A\rangle_\calE} = \sqrt{\expt_{\psi\sim\calE} |\langle \psi|A|\psi\rangle|^2}.
\end{equation}
We call this the ensemble variance norm with respect to $\calE$, or the $\calE$-norm for short. This norm measures the fluctuation of the expectation value of $A$ within the ensemble $\calE$, with the concentration bound
\begin{equation}
\prob\left[|\langle \psi|A|\psi\rangle|>\epsilon\right] \leq \frac{\|A\|_\calE^2}{\epsilon^2}.
\end{equation}
Therefore, if $\|A\|_\calE^2$ is small, then $\langle \psi|A|\psi\rangle$ would be concentrated around zero for most states in $\calE$.

We would still use $\langle A,B\rangle$ without subscripts to denote the Hilbert-Schmidt inner product $\langle A^\dagger B\rangle=\frac{1}{D}\tr[A^\dagger B]$, and $\|A\|=\sqrt{\langle A,A\rangle}$ denotes the Frobenius norm. For each ensemble, we also define the kernel superoperator,
\begin{equation}
\calM_\calE[B] = D \expt_{\psi\sim\calE} \langle \psi|B|\psi\rangle |\psi\rangle\langle \psi|,
\end{equation}
such that
\begin{equation}
\langle A,B\rangle_\calE = \langle A^\dagger \calM_\calE[B]\rangle = \langle A,\calM_\calE[B]\rangle.
\end{equation}
$\calM_\calE$ is always a Hermitian positive semidefinite superoperator under the Hilbert-Schmidt inner product. The superoperator $\calM_\calE$ has been studied in the context of quantum shadow tomography~\cite{huangPredictingManyProperties2020b}, where it was shown that $\calM_\calE$ has a closed form expression in many cases. Therefore, for a wide range of state ensembles, the covariance inner products and variance norms have simple and meaningful expressions.

As a general setup, we consider quantum systems with $L$ sites, where the sites are qudits with local Hilbert space dimension $d$, and the total Hilbert space dimension is $D=d^L$. We choose a set of Hermitian operators $(\Lambda_i)_{i=1}^{d^2}$ that form a basis of the local operator space, with $\Lambda_1 = I$ being the identity matrix, and the set satisfying orthonormality,
\begin{equation}
\langle \Lambda_i,\Lambda_j\rangle = \frac{1}{d} \mathrm{tr} \Lambda_i \Lambda_j = \delta_{ij}.
\end{equation}
When $d=2$, a natural choice is the set of Pauli matrices. We will abuse terminology to still refer to $\Lambda_i$ as ``Paulis'', and operators such as $\Lambda_{i_1}\otimes \Lambda_{i_2}\otimes \dots \otimes \Lambda_{i_L}$ as ``Pauli strings'' for all $d$. It is easily seen that such Pauli strings form a basis of the operator space on the $D$-dimensional Hilbert space. We will often use the symbol $P$ to denote an arbitrary Pauli string. We shall emphasize that although we use terminology such as $L$ and ``string'', our results do not assume any particular geometry of the lattice unless explicitly stated otherwise.

\subsection{Haar random states}

The ensemble covariance inner product can be straightforwardly evaluated for Haar random states~\cite{collinsIntegrationRespectHaar2006}, yielding
\begin{equation}
\langle A,B\rangle_\text{Haar} = \frac{1}{D(D+1)}(\tr A^\dagger \tr B+\tr (A^\dagger B)).
\end{equation}
Assuming that $A$ is Hermitian and traceless, we have
\begin{equation}
\| A \|_\text{Haar} = \sqrt{\frac{1}{D(D+1)}\tr A^2} = \sqrt{\frac{1}{D+1}\langle A^2\rangle}.
\end{equation}
If $\langle A^2\rangle=\calO (1)$, which is the case for common observables such as Pauli strings, $\| A \|_\text{Haar}$ would be of the order $D^{-1/2}=d^{-L/2}$, which is exponentially small in system size. This is the well-established phenomenon that the high dimensionality of the Hilbert space leads to an exponential concentration of expectation values, which is a corollary of L\'evy's lemma~\cite{popescuEntanglementFoundationsStatistical2006}, and is responsible for a wide range of phenomena in many-body quantum systems, such as barren plateaus~\cite{mccleanBarrenPlateausQuantum2018}, dynamical quantum typicality~\cite{bartschDynamicalTypicalityQuantum2009,reimannDynamicalTypicalityIsolated2018,heitmannSelectedApplicationsTypicality2020}, and deep thermalization~\cite{cotlerEmergentQuantumState2023}.

\subsection{Random product states}

A random product state (RPS) is a direct product of independent Haar random states on each site. On one site, we have that
\begin{align}
\langle \Lambda_i,\Lambda_j\rangle_\text{Haar} &
= \frac{1}{d(d+1)}(\mathrm{tr}\Lambda_i\mathrm{tr}\Lambda_j+\mathrm{tr}\Lambda_i\Lambda_j) \nonumber \\
& = \frac{1}{d+1}(\delta_{i,1}\delta_{j,1}d+\delta_{i,j}) \nonumber \\
& = \delta_{i,j} (d+1)^{-S(i)}.
\end{align}
We have introduced the notation $S(i)= \begin{cases} 0, & i =1 \\ 1, & i > 1 \end{cases}$. This shows that the Haar inner product is diagonal for this basis; the diagonal element is $1$ for the identity matrix, and $(d+1)^{-1}$ for all others.  Now, taking the direct product of $L$ copies of this, we would get the random product state inner product on a system of $L$ sites:
\begin{multline}
\langle \Lambda_{i_1}\otimes \Lambda_{i_2}\otimes \dots \otimes \Lambda_{i_L}, \Lambda_{j_1}\otimes \dots \otimes \Lambda_{j_L} \rangle_\text{RPS} \\
= \delta_{i_1j_1}\dots\delta_{i_Lj_L} (d+1)^{-\sum_k S(i_k)}.
\end{multline}
Let us define a superoperator $\calS$ as $\calS (\Lambda_{i_1}\otimes \dots \otimes \Lambda_{i_L}) = \left(\sum_k S(i_k)\right) \Lambda_{i_1}\otimes \dots \otimes \Lambda_{i_L}$. This is a superoperator that is diagonal in the Pauli string basis of the operator space, whose matrix element counts the number of non-identity operators in a string. In compact notation, $\calS(P) = |P| P$, where $P$ denotes an arbitrary Pauli string, and $|P|=\sum_k S(i_k)$ is the weight (or size) of the string. Then, we get the following form for the random product state inner product:
\begin{equation}
\langle A,B\rangle_\text{RPS} = \left\langle A (d+1)^{-\mathcal S} [B] \right\rangle.
\end{equation}
If we expand operators under the Pauli string basis, $A=\sum_P a_P P$ and $B=\sum_P b_P P$, then
\begin{equation}
\langle A,B\rangle_\text{RPS} = \sum_P (d+1)^{-|P|} a_P^\ast b_P.
\end{equation}
This expression has been known in Ref.~\cite{qiMeasuringOperatorSize2019}. Equivalently, the associated kernel superoperator has the form $\calM_\text{RPS}:A\mapsto \sum_P (d+1)^{-|P|} a_P P$. The form of this superoperator was used in deriving the shadow norm of local Pauli rotations~\cite{huangPredictingManyProperties2020b}.

The RPS norm penalizes high-weight Pauli strings exponentially: if a Pauli string $P$ has size $S$, then $\|P\|_\text{RPS} = (d+1)^{-S/2}$. Therefore, an operator $O$ having $\calO(L)$ size would have $\|O\|_\text{RPS} \sim e^{-\calO(L)}$. It can be shown that if we draw an operator from the entire operator space at random (specifically, choose the coefficients before each Pauli string at random), then the average RPS norm is
\begin{align}
\overline{\|A\|_\text{RPS}^2} & = \sum_P (d+1)^{-|P|}\overline{|a_P|^2} \nonumber \\
& = \frac{1}{D^2}\sum_{S=0}^L \binom{L}{S} (d^2-1)^S (d+1)^{-S} \nonumber \\
& = \frac{1}{D} = d^{-L}. \label{eq:average-RPS-norm}
\end{align}
This is exponentially small in $L$, which means that operators with $\|O\|_\text{RPS}$ not scaling as $e^{-\calO(L)}$ are rare in the operator space.

The RPS ensemble is a continuous ensemble, which means that the probability of any one particular state appearing is zero. In practice, it is sometimes preferable to have finite state ensembles. As the RPS ensemble is a Cartesian product of Haar ensembles on each site, if a finite ensemble on a single site gives the same inner product as the Haar ensemble, taking a Cartesian product of copies of this finite ensemble would produce a finite ensemble equivalent to the RPS. In quantum information language, ensembles that give the same inner product as the Haar ensemble are called 2-designs~\cite{ambainisQuantumTdesignsTwise2007a,cotlerEmergentQuantumState2023}. Commonly used finite 2-designs include stabilizer states (when $d$ is a prime power)~\cite{gottesmanHeisenbergRepresentationQuantum1998,koenigHowEfficientlySelect2014,webbCliffordGroupForms2016,zhuMultiqubitCliffordGroups2017b,graydonCliffordGroupsAre2021} and the SIC-POVM set~\cite{renesSymmetricInformationallyComplete2004,zhuSICPOVMsClifford2010,fuchsSICQuestionHistory2017}. In the qubit case $d=2$, the set $\calE_{\text{stab},d=2}=\{|\pm x\rangle ,|\pm y\rangle ,|\pm z\rangle\}$ is a 2-design, so that $(\calE_{\text{stab},d=2})^{\otimes L}$, or the set of stabilizer product states, would give the same ensemble variance norm as the RPS ensemble. We will generally not distinguish between the RPS ensemble and its finite equivalents.

\subsection{Random matrix product states}

For the case of 1D chains, we can generalize RPS to random matrix product states (RMPS) with a fixed bond dimension $\chi$. For the RMPS ensemble, we can also derive an inner product, whose kernel has the form of a matrix product operator (MPO). The full expression is derived in Appendix~\ref{sec:RMPS}. Notably, the ensemble covariance inner product is also diagonal in the Pauli string basis. However, different from the RPS case, the ensemble variance norm is not only determined by the weight of the Pauli $|P|$, but would also depend on the positions of the non-identity operators. Generally, the ensemble variance norm satisfies
\begin{equation}
(d\chi^2+1)^{-|P|} \leq \|P\|_{\chi\text{-RMPS}}^2\leq \left(\frac{d \chi^2-1}{d^2 \chi^2-1}\right)^{|P|},
\end{equation}
where the lower bound is obtained by letting the non-identity operators be as separated as possible, and the upper bound is achieved by putting all non-identity operators next to each other. Therefore, on top of penalizing operator size like the RPS norm does, the RMPS norm further penalizes ``long-range correlator''-like operators.

\subsection{Thermal dressing}
\label{subsec:thermal-dressing}

Both the RPS and RMPS ensembles have $\rho_\calE \propto I$. Therefore, although there can be states in the ensembles at every energy, the ensemble as a whole should be considered infinite-temperature. To construct a finite-temperature ensemble, consider the following thermal dressing,
\begin{equation}
|\psi\rangle \mapsto \frac{e^{-\beta H/2}}{\sqrt{\langle e^{-\beta H} \rangle}} |\psi\rangle.
\end{equation}
For a given ensemble $\calE$, we define $\calE_\beta$ as the ensemble obtained by dressing every state in $\calE$ in this fashion. Note that the dressed state is, in general, not normalized. Therefore, $\calE_\beta$ are unnormalized ensembles. This construction is similar to the Scrooge ensemble~\cite{parfionovLazyQuantumEnsembles2006,cotlerEmergentQuantumState2023,markMaximumEntropyPrinciple2024a,liuDeepThermalizationGaussian2024,mcginleyScroogeEnsembleManybody2025,mannaProjectedEnsembleSystem2025,mokNatureStingyUniversality2026}. We discuss more properties of this ensemble in Appendix~\ref{sec:scrooge}. This dressing corresponds to the following dressing of operators,
\begin{equation}
O \mapsto O^\beta = \frac{1}{\langle e^{-\beta H} \rangle} e^{-\frac{1}{2}\beta H} O e^{-\frac{1}{2}\beta H}, \label{eq:thermal-dressing-of-operator}
\end{equation}
which is a common construction in studies of operator dynamics~\cite{parkerUniversalOperatorGrowth2019b,xuScramblingDynamicsOutofTimeOrdered2024}. Then,
\begin{equation}
\rho_{\calE_\beta} = (\rho_\calE)^\beta,
\end{equation}
and
\begin{equation}
\langle A,B\rangle_{\calE_\beta} = \langle A^\beta,B^\beta\rangle_{\calE}.
\end{equation}
Note that $I^\beta$ is exactly $D$ times the Gibbs density matrix. Therefore, if $\calE$ has $\rho_\calE = I/D$, then $\rho_{\calE_\beta}$ would be the Gibbs density matrix at inverse temperature $\beta$. We can therefore construct the thermally dressed versions of RPS and RMPS ensembles. For high temperatures, $e^{-\beta H/2}$ can be Trotterized into a shallow quantum circuit, thus the resulting dressed ensembles are still weakly-entangled.

\subsection{Characterization of simple operators}
\label{subsec:simple-operators-characterization}

We call an operator simple if it has a large ensemble variance norm. It is apparent that the class of simple operators would vary based on the ensemble we choose to define the ensemble variance norm. We will discuss the physical characteristics of simple operators based on the RPS ensemble. In this case, a simple operator is one that has large overlap with low-weight Pauli strings. We expect that many of the properties generalize to other low-entanglement ensembles as well.

Simple operators in this sense are obviously not strictly local or $k$-local, and can have support on arbitrarily large Pauli strings. Importantly, these operators need not be quasilocal either (although quasilocal operators are automatically simple). For quasilocal operators, the coefficients of large-support strings are exponentially small; however, simple operators need not obey this constraint. Since the RPS norm $\sum_P |c_P|^2 (d+1)^{-|P|}$ is dominated by the contribution from small $|P|$, the sum of a quasilocal operator and a very long Pauli string would still have a large ensemble variance norm, though it is not quasilocal.

It is also illuminating to compare this with another definition of operator size,
\begin{equation}
O=\sum_P c_P P \mapsto \|O\|_{\calS}^2 = \sum_P |c_P|^2 \cdot |P|.
\end{equation}
This is exactly a weighted average of the weights of the Pauli strings contained in $O$. Such expression appears in the context of depolarization noise~\cite{schusterOperatorGrowthOpen2023,yangIntegralsMotionSlow2025} and OTOCs~\cite{swingleMeasuringScramblingQuantum2016b}. Arguably, $\|\cdot \|_{\calS}$ is more intuitive than $\|\cdot \|_\text{RPS}$: it is dominated by large Pauli strings, instead of small ones. Therefore, the sum of a simple operator --- defined as one with small $\|\cdot \|_\calS$ --- and a large Pauli string would no longer be a simple operator. That is, simplicity in terms of $\|\cdot \|_\calS$ measures the absence of large Paulis, while that in terms of $\|\cdot \|_\text{RPS}$ measures the presence of small Paulis. In fact, the class of simple operators in terms of $\|\cdot \|_\text{RPS}$ is provably the largest among categories of operators discussed here. It is intuitive that local, $k$-local and quasilocal operators all have small $\|\cdot \|_\calS$, while
\begin{equation}
\|O\|_\text{RPS}^2 \geq (d+1)^{-\|O\|_\calS^2} 
\end{equation}
due to Jensen's inequality (if $\|O\|=1$). Therefore, every operator with a small $\|O\|_\calS$ must have large $\|O\|_\text{RPS}$, but not vice versa.

We argue that these features of $\|\cdot\|_\text{RPS}$ are actually essential to describing thermalization. On one hand, assume that an IOM exists such that it is the sum of some low-weight Pauli strings and some high-weight Pauli strings. While traditional measures would disqualify this IOM from being quasilocal or $k$-local, this IOM can certainly prevent thermalization, as the low-weight part of it overlaps with local observables, and these observables would fail to thermalize due to the overlap. Therefore, in a statement like ``systems without simple IOMs thermalize'', it is necessary to make the class of ``simple'' operators broad enough to include this kind of small-plus-large operators. Secondly, we show that there are mathematical conditions on what kind of operator norms can be expressed as an ensemble variance. We prescribe and prove such conditions in Appendix~\ref{sec:gen-op-norm-as-evn}, where it is shown that $\|\cdot\|_\calS$ cannot be expressed as an ensemble variance norm for any pure state ensembles. This indicates that the properties of $\|\cdot \|_\text{RPS}$ are innate to the problem of ensemble thermalization.

\section{Thermalization bounds}
\label{sec:ensemble-thermalization}

\subsection{Thermalization from operator growth} \label{subsec:thermalization-from-operator-growth}

With the construction above, we can relate the typical thermalization of states in an ensemble to the dynamics of operators. Consider an observable $O$, and assume that the dynamics is generated by a time-independent Hamiltonian $H$. (This formalism can be naturally generalized to the case where the dynamics is given by a Floquet unitary $U$, which we detail in Appendix~\ref{sec:floquet}.) The evolution of operators in Heisenberg picture is
\begin{equation}
O(t) = e^{iHt} O e^{-iHt}.
\end{equation}
To start, assume that the thermal expectation value of $O$ is zero. In this case, saying that most states in an ensemble $\calE$ thermalize is equivalent to saying
\begin{align}
D_f(\calE,O;0) & = \expt_{\psi\sim \calE} D_f(\psi,O;0) \nonumber \\
& = \int_0^{+\infty} f(t) \expt_{\psi\sim \calE}\left[\langle \psi|O(t)|\psi\rangle^2\right] \mathrm dt  \nonumber \\
& = \int_0^{+\infty} f(t) \|O(t)\|_\calE^2 \mathrm dt 
\end{align}
is small. Since $\|O(t)\|_\calE$ is roughly $\exp (-\text{size of }O(t))$, this establishes a direct link between operator growth and thermalization: if $O(t)$ has a large operator size at late times, then the observable $O$ would thermalize for most initial states in the ensemble.

Generally, $\langle \psi|O(t)|\psi\rangle$ would equilibrate not to zero, but to some finite thermal expectation value. On the other side of the equation, $\|O(t)\|_\calE$ may not relax to zero either, as $O$ may overlap with the energy operator $H$, and the overlapping part is conserved in the evolution, meaning that there is a small-sized part of $O(t)$ that survive the evolution. To formulate such generic cases, consider choosing a set of conserved operators $\{Q_i\}$ that capture the non-growing part of $O$. Take the set to be orthonormal under the Hilbert-Schmidt inner product. Then, any operator adopts a decomposition,
\begin{align}
O & = \calP[O] + \calQ[O], \\
\calQ[O] & = \sum_i \langle Q_i,O\rangle Q_i, \\
\calP[O] & = O - \sum_i \langle Q_i,O\rangle Q_i.
\end{align}
By definition, $\calP[O(t)] = \calP[O](t)$, and $\calQ[O](t)=\calQ[O(t)]=\calQ[O]$. $\calP[O]$ characterizes the part of $O$ that are not protected by conservation laws. Therefore, we can reasonably expect $\|\calP[O(t)]\|_\calE$ to be small at late times. In that case, we know that $|\langle \psi|O(t)|\psi\rangle - \langle \psi|\calQ[O]|\psi\rangle|$ is small on average, which means that for most $|\psi\rangle \sim \calE$, we have
\begin{equation}
\langle \psi|O(t)|\psi\rangle  \approx \langle \psi | \calQ[O]|\psi\rangle = \sum_i \langle Q_i,O\rangle \langle \psi|Q_i|\psi\rangle. \label{eq:slow-mode-part-expectation}
\end{equation}
This is the corresponding statement in the generic case: if the part of $O$ orthogonal to known conserved operators grows to large operator size at late times, then the expectation value of $O(t)$ would be given by the projection of $O$ onto the conserved operators. The most natural choice of $\{Q_i\}$ is (a linear recombination of) the first few powers of the energy operator $H$, in which case $\calQ[O]$ would give an approximation to the thermal expectation value, as we will show in the next section. The derivation should generalize naturally to the case where the equilibrium is characterized by the generalized Gibbs ensemble, in which case $\{Q_i\}$ should include low-order polynomials of $H$ and other conserved charges.

\subsection{Conserved charges and thermal expectation value}
\label{sec:thermal-expectation-value}

We now argue that when $\{Q_i\}$ is chosen to be the first few powers of $H$, $\calQ[O]$ is an approximation to the thermal expectation value. Specifically,  let $\{Q_i\}$ be chosen as (an orthonormal basis of) $\{H^k\}_{k=0}^n$ for some $n>0$. We will use $O_\npow$ to denote $\calQ[O]$ for this particular choice of $\{Q_i\}$. We will give a semi-quantitative argument that $\Onpow(\psi)=\langle \psi|O_\npow|\psi\rangle$ approximates the thermal expectation value. A more thorough discussion is presented in Appendix~\ref{sec:appendix-thermal-value}. 

Given a state $|\psi\rangle$ with energy $E_\psi$, the thermal expectation value can be taken as the microcanonical-ensemble expectation value
\begin{equation}
\langle O\rangle_\text{MC}(\psi) = \frac{1}{N_{E_\psi}} \sum_{|E-E_\psi|<\epsilon} \langle E|O|E\rangle,
\end{equation}
where $|E\rangle$ are the eigenstates of $H$, and $N_{E_\psi}$ counts the number of eigenvalues in a shell around $E_\psi$ with width $\epsilon$. Alternatively, one could also use the Gibbs density matrix,
\begin{equation}
\langle O\rangle_\text{Gibbs}(\psi) = \frac{\tr[e^{-\beta H}O]}{\tr[e^{-\beta H}]},
\end{equation}
where the inverse temperature $\beta$ is determined by $\langle H\rangle_\beta=E_\psi$. These two constructions are equivalent in the thermodynamic limit~\cite{brandaoEquivalenceStatisticalMechanical2015,tasakiLocalEquivalenceCanonical2018} up to $\calO(L^{-1/2})$ for local operators $O$.

Without assuming anything about thermalization, the long-time expectation value of an operator will always be given by the diagonal ensemble, 
\begin{equation}
O_\text{diag} = \sum_E \langle E|O|E\rangle |E\rangle \langle E|.
\end{equation}
(For simplicity, we have written this assuming that the spectrum of $H$ is non-degenerate; however, this is not an essential assumption, and the argument can be naturally generalized to cases where the eigenvalues are degenerate.) The microcanonical expectation value can be written as the following operator,
\begin{equation}
O_\text{MC} = \sum_E \left(\frac{1}{N_{E}} \sum_{|E^\prime-E|<\epsilon} \langle E^\prime|O|E^\prime\rangle\right) |E\rangle \langle E|.
\end{equation}
Both have a form $O=\sum_E f_O(E) |E\rangle \langle E|$. While $f_\text{diag}(E)=\langle E|O|E\rangle$ could be any function, $f_\text{MC}(E)$ is a sliding-window average --- equivalently, a smoothing --- of $f_\text{diag}$. The eigenstate thermalization hypothesis (ETH)~\cite{deutschQuantumStatisticalMechanics1991a,deutschEigenstateThermalizationHypothesis2018a,dalessioQuantumChaosEigenstate2016b} can be naturally understood in this context: if $f_\text{diag}(E)$ is a smooth function to begin with (which is the ETH's assumption), then its smoothing is equal to itself, hence $O_\text{MC}=O_\text{diag}$, which implies thermalization.

In our construction, $O_{\npow}=\sum_{k=0}^n c_k H^k$, which gives the diagonal entries $f_{\npow}(E)=\sum_{k=0}^n c_k E^k$. This is an order-$n$ polynomial. Moreover, as the coefficients $c_k$ are obtained via projecting $O$ onto $H^k$, $f_{\npow}(E)$ would be a ``polynomial fitting'' of $f_\text{diag}(E)$; therefore, it should be similar to the smoothing of $f_\text{diag}(E)$, which is $f_\text{MC}(E)$. This can be made more rigorous following a Taylor-expansion argument as in Ref.~\cite{wangEigenstateThermalizationHypothesis2025}. By definition,
\begin{equation}
\left\langle H^\ell, O_\npow - O \right\rangle = 0, \forall 0\leq \ell \leq n.
\end{equation}
In line with Ref.~\cite{wangEigenstateThermalizationHypothesis2022a}, this implies
\begin{equation}
\left.\frac{\partial^\ell}{\partial \beta^\ell} \frac{\tr[e^{-\beta H}(O_\npow-O)]}{\tr[e^{-\beta H}]}\right|_{\beta=0} = 0, \forall 0\leq \ell \leq n. \label{eq:cluster-poly-equal}
\end{equation}
Equivalently, $\langle O\rangle_\beta$ and $\langle O_\npow\rangle_\beta$ are equal near $\beta=0$ up to order $\calO(|\beta|^{n+1})$. This implies that $f_{\npow}(E)$ and $f_\text{MC}(E)$ are similar near $E=0$ (which is the energy corresponding to $\beta=0$, assuming $H$ is traceless), with the error being $\calO(|E/L|^{n+1})$. Therefore, the value $\Onpow(\psi)=\langle \psi|O_\npow|\psi\rangle$ serves as a good approximation to the thermal expectation value for large enough $n$. More detailed discussions over the accuracy of this approximation are presented in Appendix~\ref{sec:appendix-thermal-value}.

\subsection{Thermalization from absence of SSOs} \label{subsec:thermalization-and-slow-operators}

We are now equipped to establish our main result: thermalization of most initial states in an ensemble $\calE$ is implied by the absence of simple slow operators, where ``simple'' is defined with respect to that ensemble.
To simplify the notation, we will consider an operator $O$ has already been orthogonalized against all ``known'' conserved charges $\{Q_i\}$ (e.g., powers of the Hamiltonian, particle number, etc.). It shall be understood that $O$ is the $\calP[\tilde O]$ for some observable $\tilde O$, and thus $\|O(t)\|_\calE$ measures the average deviation between $\langle \psi(t)|\tilde O|\psi(t)\rangle$ and $\langle \psi|\calQ[\tilde O]|\psi\rangle$. We expand the operator $O$ in the eigenbasis of the Liouvillian $[H,\cdot]$ as
\begin{equation}
O(t) = \sum_\lambda O_\lambda e^{i\lambda t}, \label{eq:O-eigen-decomposition}
\end{equation}
where each $O_\lambda$ satisfies $[H,O_\lambda]=\lambda O_\lambda$. Since $[H,\cdot]$ is a Hermitian superoperator under the Hilbert-Schmidt inner product, the set $O_\lambda$ is orthogonal, and $\|O(t)\|^2 = \|O\|^2 = \sum_\lambda \|O_\lambda\|^2$.

The time-averaged $\calE$-norm can be calculated as
\begin{multline}
\int_0^{+\infty} f(t)\mathrm dt \|O(t)\|_{\calE}^2 \\
=  \sum_\lambda \|O_\lambda\|_{\calE}^2
 + \sum_{\lambda \neq \lambda^\prime} \int_0^{+\infty} f(t)\mathrm dt
e^{i(\lambda-\lambda^\prime)t}
\langle O_{\lambda^\prime},O_\lambda \rangle_{\calE}.
\end{multline}
In the infinite-time limit (i.e., take $f_T=\Theta(T-t)/T$, then send $T\to\infty$), the second term vanishes.
Therefore,
\begin{align}
\lim_{T\to\infty} & \int_0^{+\infty} f_T(t) \mathrm dt \|O(t)\|_{\calE}^2 \nonumber \\
 & = \sum_{\lambda}\|O_\lambda\|_{\calE}^2 \nonumber \\
 & = \sum_{\lambda} \|O_\lambda\|^2 \left\| \frac{O_\lambda}{\|O_\lambda\|} \right\|_{\calE}^2 \nonumber \\
 & \leq \|O\|^2 \max_\lambda \left\| \frac{O_\lambda}{\|O_\lambda\|} \right\|_{\calE}^2,
\end{align}
where in the third line we multiplied and divided each term by $\|O_\lambda\|^2$. Therefore, if the left-hand side of this inequality---namely, the time-averaged deviation---is greater than $\nu \|O\|^2$ for some $\nu$, it must also be the case that $\max_\lambda \left\| \frac{O_\lambda}{\|O_\lambda\|} \right\|_{\calE}^2 \geq \nu$. Suppose the maximum is achieved for $\lambda = \lambda^*$. Then we have explicitly constructed an operator $A \equiv \frac{O_{\lambda^*}}{\|O_{\lambda^*}\|}$ satisfying the following properties:
\begin{align}
\|A\| & = 1, \label{eq:finite-time-slow-op-1} \\
[H,A]& = \lambda^\ast A \text{ for some }\lambda^\ast , \label{eq:finite-time-slow-op-2} \\
\mathrm{tr}[AQ_i]& = 0 \,\, \forall Q_i, \label{eq:finite-time-slow-op-3} \\
\|A\|_\calE^2 & \geq \nu\label{eq:finite-time-slow-op-4}.
\end{align}
Thus, if the system does not thermalize, simple IOMs or SGAs must exist. Equivalently, if we posit that there are no operators satisfying the four properties above for some given $\nu>0$, then the time-and-ensemble-average deviation is upper bounded by $\nu$. If $\nu$ decreases fast enough with system size, it follows that most initial states thermalize in the thermodynamic limit.

The results obtained above require all the oscillating terms $e^{i(\lambda-\lambda^\prime)t}$ to average to zero. This would happen on a timescale of $1/|\lambda-\lambda^\prime|$, inverse of the smallest gap spacing, that is expected to be exponentially long for quantum many-body systems~\cite{pilatowsky-cameoQuantumThermalizationMust2025}. Therefore, it would be of interest to derive results that are applicable at shorter, physically relevant timescales. In Appendix~\ref{sec:thermalization-in-finite-time}, we prove that 
\begin{equation}
\int_{0}^{+\infty} \|\calP[O](t)\|_\calE^2 \frac{2}{\pi} \frac{\sin^2 (t/\tau)}{t^2/\tau} \mathrm dt \leq 2 \nu \|O\|^2, \label{eq:O-t-thermalization-finite-time-bound}
\end{equation}
given that for all operators $A$ satisfying
\begin{align}
\|A\| & = 1,  \label{eq:finite-time-slowop-cond-1} \\
\|[H,A] - \lambda A \|& \leq 1/\tau \text{ for some }\lambda , \label{eq:finite-time-slowop-cond-2} \\
\mathrm{tr}[AQ_i]& = 0\,\,\forall Q_i,
\end{align}
there is
\begin{equation}
\|A\|_\calE^2 \leq \nu. \label{eq:finite-time-slowop-cond-4}
\end{equation}
The envelope function $f_\tau(t)=\frac{2}{\pi} \frac{\sin^2 (t/\tau)}{t^2/\tau}$ is normalized and roughly measures the time interval $[0,\tau]$: one can see that $\frac{2}{\pi} \frac{\sin^2 (t/\tau)}{t^2/\tau}\approx \frac{2}{\pi\tau}$ if $t/\tau \ll 1$. The $\nu$ here, which would depend on $\tau$, characterizes how well the system relaxes on a timescale of $\tau$. If $\tau_\ast$ is the value at which $\nu$ becomes a small enough value, then relaxation happens on a timescale of $\tau_\ast$.

We can also switch back to a more conventional time-average. For any $T<\pi\tau$,
\begin{equation}
\frac{1}{T} \int_0^{T} \|\calP[O(t)]\|_\calE^2 \mathrm dt < \frac{\pi T/\tau}{(\sin (T/\tau))^2} \nu \|O\|^2. \label{eq:O-t-thermalization-finite-time-bound-with-T}
\end{equation}
Assume that $T\ll \tau$, the right-hand side simplifies to
\begin{equation}
\frac{1}{T} \int_0^{T} \|\calP[O(t)]\|_\calE^2 \mathrm dt < \pi \nu \frac{\tau}{T} \|O\|^2. \label{eq:finite-time-therm-one-over-T}
\end{equation}
Therefore, the integral relaxes at most as fast as $1/T$. This can be expected, as $\|\calP[O(t)]\|_\calE$ is of order one near $t=0$, $\int_0^{T} \|\calP[O(t)]\|_\calE^2 \mathrm dt$ would also be at least order one. A more detailed discussion about this scaling is presented in Appendix~\ref{sec:one-over-T-scaling}.

\subsection{Thermalization at finite temperature}
\label{subsec:thermalization-at-finite-temperature}

The thermalization bounds derived in the previous section, Eq.~\eqref{eq:O-t-thermalization-finite-time-bound}, works for any ensemble $\calE$, including RPS, RMPS, and any finite-temperature ensemble, provided that ``simple operators'' are defined accordingly; i.e., the same ensemble should be used in Eq.~\eqref{eq:finite-time-slowop-cond-4}. However, the fact that the ensemble variance norm is (upper bounded by) the exponential of negative operator size is only proven for RPS and RMPS, both infinite-temperature ensembles; meanwhile, it is less clear how to characterize operators that have a large ensemble variance norm for finite-temperature ensembles. Fortunately, we find that for certain choices of the finite-temperature ensemble, a thermalization bound at infinite temperature can be transferred to the finite-temperature case. That is, the absence of SSOs as defined by the RPS or RMPS norm also implies the thermalization of certain finite-temperature ensembles. We discuss two choices of the finite-temperature ensemble that allow this transfer: (1) stabilizer product states within an energy shell~\cite{huangRandomProductStates2024} and (2) Scrooge-like thermally dressed R(M)PS ensembles~\cite{mcginleyScroogeEnsembleManybody2025}.

\begin{figure*}[!t]
    \centering
    \includegraphics{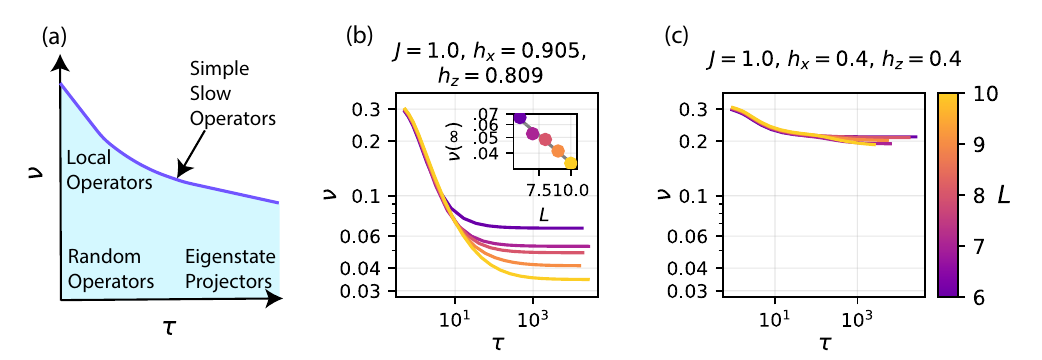}
    \caption{(a) Schematic demonstration of the $\nu$-$\tau$ plane characterization of operators. (b) The $\nu(\tau)$ curve for a chaotic mixed-field Ising model, $H=\sum_{i} JZ_i Z_{i+1} + h_x X_i + h_z Z_i$ on a chain with periodic boundary conditions. $\{Q_i\}$ is taken as $\{H^k\}_{k=0}^4$. $\nu$ shows a clear decay as $\tau$ increases, until plateauing at large $\tau$. The plateau value $\nu(\infty)$ behaves consistently with the theoretical expectation $e^{-\calO(L)}$. (c) The $\nu(\tau)$ curve for a mixed-field Ising model known to experience prethermalization~\cite{wurtzEmergentConservationLaws2020}. The $\nu(\tau)$ curve plateaus early at a large value, with $\nu(\infty)$ not showing visible decay with respect to $L$.}
    \label{fig1}
\end{figure*}

First, consider the set of stabilizer product states in a given energy shell. Ref.~\cite{huangRandomProductStates2024} proposed the set
\begin{equation}
\text{RPS}_\beta = \{|\psi\rangle \in \text{RPS}| |\langle \psi|H|\psi\rangle-E(\beta)| =o(L) \},
\end{equation}
where $o(L)$ means the quantity divided by $L$ vanishes in the limit of large $L$, $E(\beta)$ is the energy corresponding to inverse temperature $\beta$, and RPS should be understood as the stabilizer product state set. It was proven that
\begin{equation}
\frac{|\text{RPS}_\beta|}{|\text{RPS}|} \sim e^{-c\beta L}
\end{equation}
at small enough $\beta$. As a consequence, we can define the ensemble $\calE_{\text{RPS}_\beta}$ that draws uniformly from $\text{RPS}_\beta$, such that its ensemble variance norm satisfies
\begin{align}
\| O\|_{\text{RPS}_\beta}^2 & = \frac{1}{|\text{RPS}_\beta|} \sum_{|\psi\rangle \in \text{RPS}_\beta} |\langle \psi|O|\psi\rangle|^2 \nonumber \\
& \leq \frac{1}{|\text{RPS}_\beta|} \sum_{|\psi\rangle \in \text{RPS}} |\langle \psi|O|\psi\rangle|^2 \nonumber \\
& = \frac{|\text{RPS}|}{|\text{RPS}_\beta|} \frac{1}{|\text{RPS}|} \sum_{|\psi\rangle \in \text{RPS}} |\langle \psi|O|\psi\rangle|^2 \nonumber \\
&\sim e^{c\beta L} \|O\|_\text{RPS}^2.
\end{align}
Therefore, if
\begin{equation}
\int_{0}^{+\infty}  \|\calP[O(t)]\|_\text{RPS}^2 \frac{2}{\pi} \frac{\sin^2 (t/\tau)}{t^2/\tau} \mathrm dt = \nu,
\end{equation}
this would imply that
\begin{equation}
\int_{0}^{+\infty}  \|\calP[O(t)]\|_{\text{RPS}_\beta}^2 \frac{2}{\pi} \frac{\sin^2 (t/\tau)}{t^2/\tau} \mathrm dt \lesssim e^{c\beta L} \nu.
\end{equation}
If $\nu$ is $e^{-\calO(L)}$, the right-hand side would also be $e^{-\calO(L)}$ at small $\beta$. Therefore, the infinite-temperature bound implies a bound at high but finite temperature.

One could also consider the dressed ensemble $\calE_\beta$ based on an infinite-temperature ensemble $\calE$, the RPS or RMPS. We show that a thermalization bound at infinite temperature similarly implies a bound that holds for high but finite temperatures. Notice that Eq.~\eqref{eq:O-t-thermalization-finite-time-bound} holds for any operator $O$, we can insert $O^\beta$ (as defined in Eq.~\eqref{eq:thermal-dressing-of-operator}) to get
\begin{equation}
\int_{0}^{+\infty} \|\calP[O^\beta(t)]\|_\calE^2 \frac{2}{\pi} \frac{\sin^2 (t/\tau)}{t^2/\tau} \mathrm dt < 2 \nu(\tau) \|O^\beta\|^2.
\end{equation}
We show that this implies (see Appendix~\ref{sec:scrooge})
\begin{equation}
\int_{0}^{+\infty} \|\calP_\beta[O(t)]\|_{\calE_\beta}^2 \frac{2}{\pi} \frac{\sin^2 (t/\tau)}{t^2/\tau} \mathrm dt < 2 \nu(\tau) \|O^\beta\|^2.
\end{equation}
Here, the $\nu(\tau)$ is the same $\nu$ as in $\calE$. $\calP_\beta$ is a projection against finite-temperature-dressed $\{H^k\}_{k=0}^n$, and thus $\calQ_\beta[O] = O-\calP_\beta[O]$ gives a Taylor expansion of the thermal expectation value around a non-zero energy density (see Appendix~\ref{sec:appendix-thermal-value} for details). $\|O^\beta\|$ can be bounded by
\begin{equation}
\|O^\beta\| = \frac{\|e^{-\frac{\beta H}{2}} Oe^{-\frac{\beta H}{2}} \|}{\langle e^{-\beta H} \rangle} \leq \frac{e^{-\beta E_\mathrm{min}}}{\langle e^{-\beta H}\rangle} \|O\| \lesssim e^{\calO(\beta L)} \|O\|.
\end{equation}
Therefore, similar to the previous case, a bound in the infinite temperature case implies a bound at high but finite temperature, assuming $\nu$ is $e^{-\calO(L)}$.

\section{Simple slow operators} \label{sec:slow-operators}

In this section, we will discuss several aspects of SSOs. We make two brief comments before proceeding. First, while most discussions should hold for general ensembles $\calE$, all the numerical calculations are performed specifically with $\calE=\text{RPS}$. Second, while the thermalization bound derived in Section~\ref{subsec:thermalization-and-slow-operators} requires the absence of both simple approximate IOMs and simple approximate SGAs, we will exclusively study SSOs as simple approximate IOMs. The physical reason is that SGAs are rare in quantum many-body systems, and are often associated with IOMs. A detailed discussion over the role of SGAs is presented in Appendix~\ref{sec:sga}.

\subsection{The $\nu$-$\tau$ plane of operators}

Given any operator $A$, normalized to have unit Frobenius norm, $\|A\|=1$, we can use two parameters $(\tau,\nu)$ to characterize it: $\nu = \|A\|_\calE^2 = \langle A,\calM_\calE[A]\rangle$, and $1/\tau^2 = \|[H,A]\|^2 = \langle A, [H,[H,A]]\rangle$. We study the set of representative points $(\tau,\nu)$ of all possible operators (orthogonalized against a given set of $\{Q_i\}$),
\begin{equation}
\calR = \left\{\left(1/\|[H,A]\|, \|A\|_\calE^2\right) : \|A\|=1, A \perp \{Q_i\} \right\}.
\end{equation}
This construction is related to the joint numerical range (JNR) of the two matrices~\cite{barvinokCourseConvexity2002}. We used $\tau=1/\sqrt{\|[H,A]\|^2}$, since it has a clear physical meaning of timescale. It is worth noting that using $1/\tau^2$ will render $\calR$ an actual JNR, which has nicer mathematical properties. We discuss this in detail in Appendix~\ref{sec:jnr}.

\begin{figure*}[!t]
    \centering
    \includegraphics{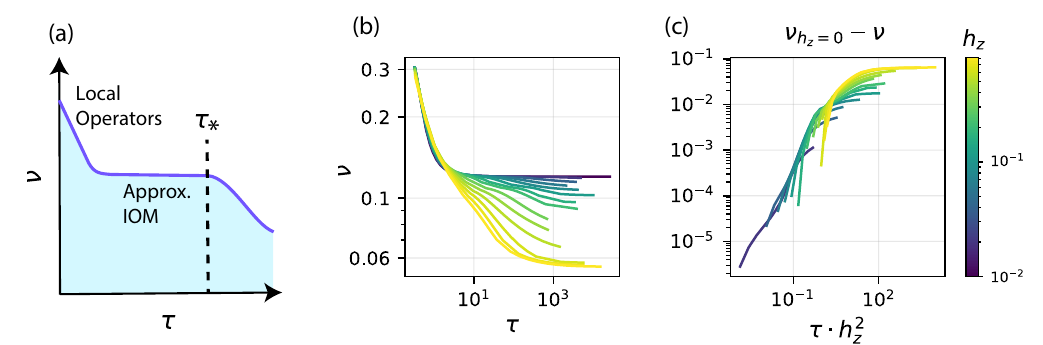}
    \caption{The $\nu(\tau)$ curve for systems with approximate IOMs. (a) Schematic illustration of the $\nu(\tau)$ curve in the presence of an approximate IOM. An approximate IOM with $\|[H,A]\|=1/\tau_\ast$ would lead to a plateau after a fast initial decay, which extends until $\tau\approx \tau_\ast$, after which the $\nu(\tau)$ curve turns downward again. (b) The $\nu(\tau)$ curve for a family of mixed-field Ising models with $J=1$ and $h_x=2$, with varying $h_z$, at system size $L=10$. The $h_z=0$ model is integrable, showing a $\nu(\tau)$ plateau that extends indefinitely. The curves at finite but small $h_z$ overlaps with the $h_z=0$ curve at small $\tau$, and deviates at large $\tau$. (c) The deviations of $\nu(\tau)$ from the $h_z=0$ curve for different $h_z$ collapses when $\tau$ is rescaled by $h_z^2$, suggesting that $\tau_\ast \sim 1/h_z^2$, consistent with the thermalization timescale of a perturbed integrable system.}
    \label{fig2}
\end{figure*}

As shown in Fig.~\ref{fig1}(a), any operator can be represented as a point in this $\nu$-$\tau$ plane. A randomly drawn operator from the entire operator space would have small $\nu$, since it would mainly consist of large Pauli strings (see Eq.~\eqref{eq:average-RPS-norm}), and small $\tau$, since it is not conserved and would have an $\calO(1)$ commutator with the Hamiltonian. For any system, there are always operators with either large $\nu$ or large $\tau$. To get large $\nu$, we can pick single-site operators (properly chosen to be orthogonal to the $Q_i$'s), such that $\nu=\calO(1)$; however, such operators would generally not be conserved, hence still have small $\tau$. On the other hand, to get large $\tau$, we can choose eigenstate projectors, which are rigorously conserved, with $\tau \to \infty$; yet eigenstate projectors are likely to have very small $\nu$. Whether or not operators with both large $\nu$ and large $\tau$ exist is not guaranteed, and that exactly corresponds to the thermalization properties of the system.

The SSOs that serve as obstructions to thermalization, as in Eqs.~\eqref{eq:finite-time-slowop-cond-1} to \eqref{eq:finite-time-slowop-cond-4}, are the operators that have large $\nu$ for a given $\tau$. This can be characterized by the upper boundary of $\calR$. The boundary can be found by solving the following optimization problem, 
\begin{equation}
\begin{array}{ll}
\text{maximize} \quad & \nu = \|
A\|_\calE^2, \\
\text{subject to} & \|A\|^2 = 1,  \\
& \mathrm{tr}[AQ_i] = 0 \,\, \forall Q_i, \\
& \|[H,A]\|^2 = 1/\tau^2.
\end{array} \label{eq:slow-op-sigma-eq}
\end{equation}
This can be solved with Lagrange multipliers,
\begin{equation}
\frac{\partial}{\partial A}\left[ \|
A\|_\calE^2 - \theta \|[H,A]\|^2 - \mu  \|A\|^2\right] = 0,
\end{equation}
which becomes
\begin{equation}
\calM_\calE[A] - \theta [H,[H,A]] = \mu A. \label{eq:Lagrange-mult-eigenvalue-problem}
\end{equation}
This is an eigenvalue problem in the operator space. It should be solved subject to the constraint $\mathrm{tr}[AQ_i] = 0$. For each $\theta$, we find the largest eigenvalue $\mu(\theta)$ and the corresponding eigenoperator $A(\theta)$. It can be shown that $A(\theta)$ is the solution to the optimization in Eq.~\eqref{eq:slow-op-sigma-eq} with $\tau =\tau(\theta) = 1/\|[H,A(\theta)]\|$, with the maximal value being $\nu = \nu(\theta) = \|A(\theta)\|_\calE^2$. Therefore, as we vary $\theta$, we can get a curve $\nu(\tau)$ parameterized by $\theta$. The properties of JNRs guarantee that the function $\tau(\theta)$ is invertible, hence by varying $\theta$ we are guaranteed to get the full curve (see Appendix~\ref{sec:jnr}).

\subsection{The $\nu(\tau)$ curve and thermalization dynamics}

As implied by the theorem in Section~\ref{subsec:thermalization-and-slow-operators}, the value of $\nu(\tau)$ at a given $\tau$ gives an upper bound to the average deviation from the thermal expectation value on the timescale of $\tau$. Therefore, the function $\nu(\tau)$ would encode the thermalization properties of the system at different timescales. We will demonstrate this by presenting the numerically calculated $\nu(\tau)$ curves for several models.

Fig.~\ref{fig1}(b) shows the $\nu(\tau)$ curve for a chaotic spin chain, the mixed-field Ising model. The curve initially decreases sharply until plateauing at some small value. The plateau value $\nu(\infty)$ corresponds the largest ensemble variance norm among all operators that commute with the Hamiltonian, excluding the first $n$ powers of $H$. Intuitively, a chaotic spin chain should not have any structured conserved quantities other than those derived from $H$, hence any operator commuting with the Hamiltonian should have $\calO(L)$ operator size, suggesting that $\nu(\infty) \sim e^{-\calO(L)}$, provided $n$ is taken large enough. Numerical results are consistent with this trend. This indicates that the average deviation of $\langle \psi(t)|O|\psi(t)\rangle$ from thermal expectation value at long times is exponentially small in system size, as expected for a chaotic model.

Fig.~\ref{fig1}(c) shows the $\nu(\tau)$ curve in a mixed-field Ising model with weak longitudinal and transverse fields. This model is known to experience prethermalization, which can be attributed to a quasilocal charge derived from the domain wall operator that is conserved exponentially well~\cite{wurtzEmergentConservationLaws2020}. It is shown that the $\nu(\tau)$ quickly plateaus to a value that doesn't visibly change with $L$. The SSO living on the $\nu(\tau)$ curve is exactly the quasilocal charge, which survives up to exponentially long $\tau$, and has $\nu=\calO(1)$ regardless of system size. Therefore, no meaningful thermalization bound exists up to at least $\tau=10^4$ in this model. This is consistent with the fact that the model is expected to behave non-thermally up to an exponentially long time.

\begin{figure*}[!t]
    \centering
    \includegraphics{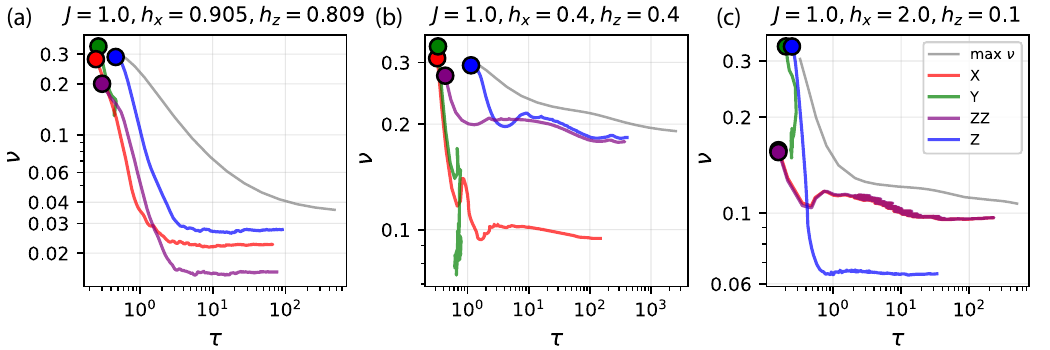}
    \caption{The motion of time-evolved-and-averaged local operators on the $\nu$-$\tau$ plane. All initial operators are extensive (e.g., ``$X$'' means the sum of the $X$ operators on each site) and orthogonalized against $\{H^k\}_{k=0}^4$. Numerics are performed on $L=10$ systems, from $T=0$ (shown by dot) to $T=500$.}
    \label{fig3}
\end{figure*}

Fig.~\ref{fig2}(a) shows the $\nu(\tau)$ curve during a crossover from integrability to chaos. At the integrable point, $\nu(\tau)$ plateaus at an $\calO(1)$ value, as expected from the existence of IOMs. For a model where integrability is weakly broken, we expect the model to have approximate IOMs inherited from the integrable case. Therefore, $\nu(\tau)$ should have a plateau at an $\calO(1)$ value at small $\tau$, up to some $\tau=\tau_\ast$, where the effect of the integrability-breaking perturbation starts to show, and $\nu(\tau)$ starts decaying again. In Fig.~\ref{fig2}(b), this is exemplified by a family of transverse-field Ising models perturbed with different strengths of longitudinal fields $h_z$. The curves with small $h_z$ behaves in line with the qualitative description of Fig.~\ref{fig2}(a): $\nu(\tau)$ essentially overlaps with the $h_z=0$ curve at small $\tau$, until some point $\tau=\tau_\ast$ when it starts to deviate downwards. The crossover timescale $\tau_\ast$ decreases as $h_z$ is increased. Fig.~\ref{fig2}(c) shows that $\tau_\ast$ scales as $1/h_z^2$, consistent with the thermalization rate for weakly perturbed integrable systems~\cite{moriThermalizationPrethermalizationIsolated2018a}.

We close this section by noting two properties that $\nu(\tau)$ must satisfy for all models. Firstly, $\nu(1/\tau^2)$ is a concave function. This derives from the properties of JNRs and is proven in Appendix~\ref{sec:jnr}. Secondly,
\begin{equation}
\nu(\tau) \geq \frac{1}{\Lambda \tau+1} \max_{\substack{\|A\|=1 \\ A\perp \{Q_i\}}} \|A\|_\calE^2, \label{eq:nu-lower-bound}
\end{equation}
where $\Lambda$ is the spectral radius of the superoperator $[H,\cdot]$. At large $\tau$, this implies that $\nu(\tau)\gtrsim \frac{1}{\tau}$. A proof and relevant discussions are offered in Appendix~\ref{sec:one-over-T-scaling}.

\subsection{Slow operators from long-time averages}
\label{subsec:slow-operators-from-long-time-averages}

Given any operator $O$, we can define a long-time average of it. Consider
\begin{equation}
O^T = \frac{1}{T} \int_0^T O(t) \mathrm dt.
\end{equation}
The superscript $T$ should not be confused with matrix transposes. By construction, we would have $[H,O^T] \sim \frac{1}{T}$. It is interesting to see how operators constructed in this way relate to our $\nu$-$\tau$ formalism for SSOs. We can define $\nu_{O}(T) =\frac{\|O^T\|_\calE^2}{\|O^T\|^2}$ and $\tau_{O}(T) =  \frac{\|O^T\|}{\|[H,O^T]\|}$. Then, for each $O$, $(\tau_O,\nu_O)$ would be a curve on the $\nu$-$\tau$ plane parametrized by $T$. This curve should always lie below the upper envelope $\nu(\tau)$ curve, given that $O$ is properly orthogonalized against $\{Q_i\}$.

In Fig.~\ref{fig3}, we show the $(\tau_O,\nu_O)$ curve for several models and several initial operators $O$. Such curves always begin in the upper-left corner, indicating that the operator $O$ has large $\nu$ and small $\tau$. The behavior of the curve as $T$ increases depends on the choice of the initial operator. For the $Y$ operator, as $T$ increases, $\tau_Y(T)$ doesn't increase beyond $\tau=1$. Among the operators $X$, $Z$, and $ZZ$, $\tau(T)$ increases with $T$, while $\nu(T)$ drops sharply and then plateaus at large $T$ (or $\tau$). The $\nu$ value of the plateau differs visibly for different initial operators. This is due to the fact that $O^T$ essentially projects an operator into its small-$\lambda$ sector. If the operator doesn't have significant weight within the small-$\lambda$ sector to begin with, as is the case of $Y$, $\|O^T\|$ would decrease as fast as $\|O^T\|_\calE$ and $\|[H,O^T]\|$, meaning that we do not obtain an actual slow operator even at large $T$. In the case of $X$ and $Z$, $\|O^T\|_\text{RPS}$ essentially measures the operator size of the projection of $X$ and $Z$ onto the small-$\lambda$ sector. The fact that certain operators ($Z$ in Fig.~\ref{fig3}(a,b), $X$ in Fig.~\ref{fig3}(c)) have $\nu_O(T)$ very close to $\nu(\infty)$ indicates that the SSO we found is essentially the same as the small-$\lambda$ part of some local operator.

However, all $\nu_O(T)$ curves we plotted have a significant gap with the upper envelope at intermediate $\tau$, especially in the chaotic model in Fig.~\ref{fig3}(a). This has two implications. Firstly, it means that some simple operators exist that commute with the Hamiltonian on the order $\sim 10^{-1}$ which do not significantly overlap with the one- and two-site local operators we constructed. On the other hand, since the local operators are usually the most physically relevant observables, if the SSO on the upper boundary does not overlap with them, it might not meaningfully obstruct thermalization. This indicates that the upper boundary $\nu(\tau)$ curve actually overestimates the thermalization timescale. For example, in Fig.~\ref{fig3}(a), the $\nu$ plateau for $\nu_O(T)$ occurs at $\tau \lesssim 5$, suggesting that no large-$\nu$ SSOs which meaningfully overlap with local observables exist beyond that point. As a result, $\tau=5$ is closer to the thermalization timescale of the system, in contrast to the plateau of the upper boundary, which occurs at $\tau > 100$.

\section{Implications}
\label{sec:implications}

In this section, we discuss several implications of our results, in the context of Hilbert space fragmentation (Section~\ref{subsec:HSF}), weak ergodicity breaking (Section~\ref{subsec:scars}), and operator growth (Section~\ref{subsec:operator-growth}).

\subsection{Hilbert space fragmentation}
\label{subsec:HSF}

Hilbert space fragmentation (HSF) is a strong form of ergodicity breaking where the Hilbert space shatters into exponentially many Krylov subspaces that cannot be connected by the Hamiltonian dynamics. Using the commutant algebra language~\cite{moudgalyaHilbertSpaceFragmentation2022b}, Hilbert space fragmentation occurs for a family of models $H=\sum_i c_i h_i$, where $h_i$ are local terms, if the Hilbert space decomposes into an exponential number of Krylov subspaces, $\calH = \bigoplus_\alpha \calH_\alpha$, where each $\calH_\alpha$ is a representation of the algebra generated by $\{h_i\}$; equivalently, $h_i \calH_\alpha \subset \calH_\alpha$. Most known models that experiences the HSF are classically fragmented, meaning that each $\calH_\alpha$ is spanned by several computational basis product states.

In systems experiencing classical HSF, product initial states cannot explore the full Hilbert space. Therefore, from the state-space perspective, such models do not thermalize. However, the picture from the operator-space perspective is not clear. One key feature of HSF systems is that Krylov subspaces cannot be explained by the quantum numbers of traditional IOMs alone. In certain models, there are families of $k$-local conserved quantities that can explain the Krylov sectors~\cite{moudgalyaHilbertSpaceFragmentation2022b}, while in others such quantities have not been explicitly constructed~\cite{brighiHilbertSpaceFragmentation2023a,chenQuantumFragmentationExtended2024}. Generally, the ``IOMs'' that control the Krylov sector decomposition is the commutant algebra~\cite{moudgalyaHilbertSpaceFragmentation2022b}. However, it was unclear whether or not the commutant algebra must contain local 
(or quasilocal / $k$-local) operators. Our results demonstrate that for any model experiencing the HSF, as long as the majority of product initial states do not thermalize (which is guaranteed to be true in the case of strong fragmentation), then simple IOMs must exist.

\subsection{Weak ergodicity breaking}
\label{subsec:scars}

It is apparent that our discussion of thermalization has excluded weak ergodicity breaking in quantum many-body scarred systems, where most initial states will thermalize, yet a few low-entanglement initial states, known as the scar states, fail to. The number of scar states is usually exponentially small compared to the number of all product (or matrix product) states; therefore, the non-thermal nature of scar states would only contribute an exponentially small amount to $D_f(\calE,O;\Oth)$. Hence, our formalism would fail to distinguish a fully thermal system from a system with weak ergodicity breaking. We argue that this fact is inevitable. On one hand, an operator supported on $k$ sites acts as the identity on the rest $L-k$ sites of the system, and therefore must have rank at least $d^{L-k}$, which is exponentially large. Therefore, a formalism centered around simple operators will always translate into a statement about a collection of an exponential number of states in state space. On the other hand, it is known that no algorithm exists that can determine whether or not thermalization occurs for a particular initial state~\cite{shiraishiUndecidabilityQuantumThermalization2021,devulapalliComplexityThermalizationFinite2025}. Therefore, there might not exist any universal characterization of thermalization that can capture weak ergodicity breaking.

\subsection{Thermalization and operator size distribution}
\label{subsec:operator-growth}

Our results establish a direct connection between thermalization and the dynamics of operator size distribution. A series of work~\cite{schusterOperatorGrowthOpen2023,zhangDynamicalTransitionOperator2023} have studied the quantity $p(S,t)$, defined as $p_O(S,t) = \sum_{|P|=S} |c_P(t)|^2$ for an operator expanded in the Pauli string basis as $O(t)=\sum_P c_P(t) P$. The relationship between $p(S,t)$ and OTOCs are well-established~\cite{xuScramblingDynamicsOutofTimeOrdered2024}:
\begin{equation}
\sum_{x,i} \|[O(t),\Lambda_i^{(x)}]\|^2= 2d^2 \sum_S p_O(S,t)S,
\end{equation}
where $\{\Lambda_i\}$ is the basis of the operator space, and the superscript $(x)$ indicates the operator at site $x$. This is proportional to the $\|\cdot \|_\calS$ norm introduced in Section~\ref{subsec:simple-operators-characterization}. Meanwhile, the ensemble variance norm can also be expressed in terms of $p(S,t)$ as~\cite{qiMeasuringOperatorSize2019}
\begin{equation}
\|O(t)\|_\text{RPS}^2 = \sum_S (d+1)^{-S} p_O(S,t).
\end{equation}
We have shown that thermalization is equivalent to the fact that for any operator $O$, the part of it orthogonal to the powers of $H$, $\calP[O(t)]$, has a vanishing ensemble variance norm at late times. Formally,
\begin{multline}
\expt_{\psi\sim \text{RPS}} \left[\langle \psi|O(t)|\psi\rangle - \Onpow(\psi)\right]^2\\
= \|\calP[O(t)]\|_\text{RPS}^2=\sum_S (d+1)^{-S} p_{\calP [O]}(S,t).
\end{multline}
This establishes a rigorous link between operator growth and thermalization: the average deviation of the expectation value of $O$ from the thermal expectation value can be directly read off the operator size distribution $p_{\calP[O]}(S,t)$. Furthermore, by Jensen's inequality, this also implies that
\begin{equation}
\sum_{x,i} \|[\calP[O](t),\Lambda_i^{(x)}]\|^2 \geq 2d^2 \frac{\log \|\calP[O(t)]\|_\text{RPS}^{-2}}{\log (d+1)},
\end{equation}
assuming $\|\calP[O]\|=1$. Therefore, thermalization provides a lower bound on the growth of OTOCs. Note that since Jensen's inequality holds only in one direction, the converse is not true: there is no guarantee that the growth of OTOCs would lead to thermalization of most product states.

\section{Conclusion and outlook}
\label{sec:conclusion}

We have established a rigorous relationship between the thermalization of typical initial states drawn from a general state ensemble $\mathcal{E}$ and the absence of SSOs with respect to that ensemble, quantitatively characterized by the $\nu(\tau)$ curve. 
Our main result is that if typical states from ensemble $\mathcal{E}$ fail to thermalize on timescale $\tau$, there must be SSOs that are approximately conserved on that timescale.
When applied to the specific ensemble of random product states, these SSOs are operators that have large weight on short Pauli strings, so the $\nu(\tau)$ curve is a generalization of the ``slowest local operator'' construction in Ref.~\cite{kimSlowestLocalOperators2015a}. This result can also be generalized to high but finite temperatures.
By relating the dynamics of specific state ensembles to that of operators, this result leads to two interesting implications that were not previously known: (i)~that thermalization of typical product initial states implies the growth of OTOCs, and (ii)~that strong Hilbert space fragmentation implies the existence of SSOs.

Our results rigorously reduce quantum thermalization to an operator space property, namely the $\nu(\tau)$ function, which can in principle be determined by diagonalizing a superoperator. While explicit diagonalization is limited to small systems, we see two encouraging signs that the determination of $\nu(\tau)$ can be done in larger systems.
First, for some systems, $\nu(\tau)$, or at least $\nu(\infty)$, might be analytically tractable. For instance, a recently developed technique of utilizing operator commutator relations to rigorously prove absence of IOMs and SGAs with finite support has found success in a wide range of spin chain models~\cite{\absenceIOMs}. These techniques do not yet apply to operators that are simple in the RPS norm, but an extension to this case is potentially within reach.
A second promising approach for future work is to use tensor-network methods to extract the extremal eigenvalues of the superoperator, and thus obtain $\nu(\tau)$, at large system sizes.

While our results already establish thermalization bounds for most physical initial states at finite time, the bound can potentially be further tightened in two directions: towards shorter time and lower temperature. We have shown that our thermalization bound over a timescale of $\tau$ is at most as tight as $1/\tau$, and this usually overestimates the timescale on which thermalization happens, especially for fully chaotic systems. This could be improved by incorporating the locality of the observable, and drawing connections to Ruelle-Pollicott resonances~\cite{prosenRuelleResonancesQuantum2002,prosenRuelleResonancesKicked2004,moriLiouvilliangapAnalysisOpen2024b,znidaricMomentumdependentQuantumRuellePollicott2024,jacobySpectralGapsLocal2025a,zhangThermalizationRatesQuantum2025,yoshimuraTheoryIrreversibilityQuantum2025a,duhRuellePollicottResonancesDiffusive2026}. Besides, the bound for finite-temperature ensemble is exponentially worse than the infinite-temperature bound. A better bound could be attainable by using a more natively finite-temperature ensemble. A promising starting point is provided by the stabilizer product state decomposition of the high-temperature Gibbs state~\cite{bakshiHighTemperatureGibbsStates2024a}, which was adopted in Ref.~\cite{pilatowsky-cameoQuantumThermalizationMust2025}, and recently generalized to lower temperatures~\cite{bakshiEntanglementQuantumSpin2026}. This requires working out a characterization of simple operators under these ensembles, which would make interesting future work.

\section*{Acknowledgments}

T.-H. Y. thanks Wen Wei Ho, Qi Camm Huang, David A. Huse, J. Alexander Jacoby, Biao Lian, Liang-Hong Mo, Yi-Ting Tu, Weijun Wu, and Hironobu Yoshida for insightful discussions. Numerical calculations in this work are performed using the QuSpin library~\cite{weinbergQuSpinPythonPackage2017,weinbergQuSpinPythonPackage2019}. This work was partially supported by the Brown Investigator Award. 

\appendix

\section{Properties of the ensemble variance norm} \label{sec:norm-properties}

In this section, we discuss certain properties of the ensemble variance norm.

The ensemble variance norm derives from an inner product, and is non-negative. Therefore, using the condition $\|A+\lambda B\|_\calE^2 \geq 0$ for all $\lambda$, we can establish the Cauchy-Schwarz inequality,
\begin{equation}
|\langle A,B\rangle_\calE| \leq \|A\|_\calE \|B\|_\calE.
\end{equation}
As a result, the triangular inequality also holds:
\begin{equation}
\|A+B\|_\calE \leq \|A\|_\calE + \|B\|_\calE.
\end{equation}

We will prove two specific properties that will be useful in the discussion of SGAs.

\begin{proposition}
For any operators $A$ and $B$,
\begin{equation}
\|AB\|_\calE^2 \leq \|AA^\dagger \|_\calE \|B^\dagger B\|_\calE.
\end{equation}
\label{prop:AB-product}
\end{proposition}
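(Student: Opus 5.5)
Fix a state $|\psi\rangle$ in the ensemble. The plan is to prove a pointwise inequality $|\langle\psi|AB|\psi\rangle|^2 \le \langle\psi|AA^\dagger|\psi\rangle\,\langle\psi|B^\dagger B|\psi\rangle$ and then average over $\psi\sim\calE$ using Cauchy--Schwarz for the ensemble expectation. The ensemble may be unnormalized, so we will not need $\langle\psi|\psi\rangle=1$ at any step.

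For the pointwise step, write $\langle\psi|AB|\psi\rangle = \langle A^\dagger\psi | B\psi\rangle$, an inner product of the vectors $A^\dagger|\psi\rangle$ and $B|\psi\rangle$. The Cauchy--Schwarz inequality in Hilbert space gives
\begin{equation}
|\langle\psi|AB|\psi\rangle|^2 \le \|A^\dagger|\psi\rangle\|^2\,\|B|\psi\rangle\|^2 = \langle\psi|AA^\dagger|\psi\rangle\,\langle\psi|B^\dagger B|\psi\rangle .
\end{equation}
Both factors on the right are nonnegative reals, so they equal their absolute values.

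Next, take $\expt_{\psi\sim\calE}$ of both sides. The left side becomes $\|AB\|_\calE^2$ by definition. For the right side, apply Cauchy--Schwarz for expectations, $\expt[XY]\le\sqrt{\expt[X^2]\,\expt[Y^2]}$, with $X=\langle\psi|AA^\dagger|\psi\rangle\ge0$ and $Y=\langle\psi|B^\dagger B|\psi\rangle\ge0$. This yields
\begin{equation}
\expt_{\psi\sim\calE}[XY] \le \sqrt{\expt_{\psi\sim\calE}|\langle\psi|AA^\dagger|\psi\rangle|^2}\,\sqrt{\expt_{\psi\sim\calE}|\langle\psi|B^\dagger B|\psi\rangle|^2} = \|AA^\dagger\|_\calE\,\|B^\dagger B\|_\calE ,
\end{equation}
which is the claimed bound.

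The only step needing care is recognizing that $\|\cdot\|_\calE$ is not submultiplicative, so the argument must be routed through positivity of $AA^\dagger$ and $B^\dagger B$. Once that is seen, the rest is two applications of Cauchy--Schwarz: one in Hilbert space and one in the probability space of the ensemble.
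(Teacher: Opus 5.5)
Your proof is correct and matches the paper's: the pointwise Cauchy--Schwarz bound $|\langle\psi|AB|\psi\rangle|^2 \le \langle\psi|AA^\dagger|\psi\rangle\,\langle\psi|B^\dagger B|\psi\rangle$, followed by averaging and a second Cauchy--Schwarz. The only cosmetic difference is that the paper writes the averaged right-hand side as $\langle AA^\dagger, B^\dagger B\rangle_\calE$ and applies Cauchy--Schwarz for the $\calE$-inner product, which is the same as your $\expt[XY]$ step.
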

\begin{proof}
We have
\begin{equation}
|\langle \psi|AB|\psi\rangle| \leq \| A^\dagger|\psi\rangle \| \|B|\psi\rangle\|,
\end{equation}
which is
\begin{equation}
|\langle \psi|AB|\psi\rangle|^2 \leq \langle \psi| AA^\dagger |\psi\rangle \langle \psi|B^\dagger B|\psi\rangle. \label{eq:AB-expt-AAd-BBd}
\end{equation}
Therefore,
\begin{equation}
\|AB\|_\calE^2 \leq \langle AA^\dagger, B^\dagger B\rangle_\calE.
\end{equation}
By the Cauchy-Schwarz inequality, this gives
\begin{equation}
\|A B\|_\calE^2 \leq \|AA^\dagger\|_\calE \|B^\dagger B\|_\calE.
\end{equation}
\end{proof}

\begin{proposition}
For any operator $A$,
\begin{equation}
\|A\|_\calE \leq \sqrt{D\|\tilde \rho_\calE\|_\infty}\|A\|, 
\end{equation}
where $\|\tilde \rho_\calE\|_\infty$ is the largest eigenvalue of
\begin{equation}
\tilde \rho_\calE = \calM_\calE[I/D] = \expt_{\psi\sim\calE} \langle \psi|\psi\rangle |\psi\rangle \langle \psi|.
\end{equation}
If $\calE$ is a normalized ensemble, then $\tilde\rho_\calE = \rho_\calE$.
\label{prop:A-EV-HS}
\end{proposition}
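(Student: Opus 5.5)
The plan is to bound $|\langle\psi|A|\psi\rangle|^2$ pointwise by a quantity involving $A^\dagger A$, then average over the ensemble. The averaged quantity will be a trace against $\tilde\rho_\calE$, and the operator-norm bound on $\tilde\rho_\calE$ will finish the argument.

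First step: apply Cauchy--Schwarz in Hilbert space,
\begin{equation}
|\langle \psi|A|\psi\rangle|^2 \leq \langle\psi|\psi\rangle\,\langle \psi|A^\dagger A|\psi\rangle .
\end{equation}
This holds even for unnormalized $|\psi\rangle$. This is the same move as Eq.~\eqref{eq:AB-expt-AAd-BBd} in the proof of Proposition~\ref{prop:AB-product}, with the first factor taken to be the identity.

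Second step: take the ensemble expectation and rewrite the right-hand side as a trace,
\begin{equation}
\|A\|_\calE^2 \leq \expt_{\psi\sim\calE} \langle\psi|\psi\rangle \tr\left[A^\dagger A\,|\psi\rangle\langle\psi|\right] = \tr\left[A^\dagger A\,\tilde\rho_\calE\right].
\end{equation}
This uses linearity of the trace and the definition $\tilde\rho_\calE=\expt_{\psi\sim\calE}\langle\psi|\psi\rangle|\psi\rangle\langle\psi|$. I will also check the consistency claim $\tilde\rho_\calE=\calM_\calE[I/D]$, which follows directly from the definition of $\calM_\calE$ with $B=I/D$.

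Third step: since $A^\dagger A\geq 0$ and $\tilde\rho_\calE\geq 0$, use $\tr[XY]\leq \|Y\|_\infty \tr X$ for positive semidefinite $X$. Concretely, diagonalize $\tilde\rho_\calE$ and use that the diagonal entries of $A^\dagger A$ are nonnegative. This gives
\begin{equation}
\|A\|_\calE^2 \leq \|\tilde\rho_\calE\|_\infty \tr[A^\dagger A] = D\|\tilde\rho_\calE\|_\infty \|A\|^2 .
\end{equation}
The last equality holds because $\|A\|$ is the normalized Frobenius norm. Taking square roots gives the claim.

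The final remark is immediate: if every $\langle\psi|\psi\rangle=1$, the weight drops out and $\tilde\rho_\calE=\rho_\calE$.

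There is no real obstacle here. The only points needing care are the normalization conventions, namely the factor $D$ from the normalized trace, and keeping the $\langle\psi|\psi\rangle$ weight for unnormalized ensembles such as $\calE_\beta$. That weight is exactly why $\tilde\rho_\calE$ rather than $\rho_\calE$ appears in the statement.
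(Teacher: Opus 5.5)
Your proposal is correct and follows the paper's proof step for step: a pointwise Cauchy--Schwarz bound, averaging to get a trace against $\tilde\rho_\calE$, and then bounding by its largest eigenvalue. The only cosmetic difference is that the paper sets $B=I$ in Proposition~\ref{prop:AB-product} and so works with $\tr[AA^\dagger\tilde\rho_\calE]$ rather than your $\tr[A^\dagger A\,\tilde\rho_\calE]$; this changes nothing, since $\tr[AA^\dagger]=\tr[A^\dagger A]$.
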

\begin{proof}
Inserting $B=I$ to Eq.~\eqref{eq:AB-expt-AAd-BBd}, we have
\begin{equation}
|\langle \psi|A|\psi\rangle|^2 \leq \langle \psi|\psi\rangle \langle \psi| AA^\dagger |\psi\rangle.
\end{equation}
Taking average on both sides,
\begin{equation}
\|A\|_\calE^2 \leq \mathrm{tr} \left[AA^\dagger \tilde\rho_\calE \right].
\end{equation}
Therefore,
\begin{equation}
\|A\|_\calE^2  \leq \|\tilde \rho_\calE\|_\infty \mathrm{tr}[AA^\dagger] = D \|\tilde \rho_\calE\|_\infty \|A\|^2.
\end{equation}
\end{proof}

In particular, if the ensemble is normalized, Proposition~\ref{prop:AB-product} implies $\|A\|_\calE^2 \leq \min\left(\|AA^\dagger\|_\calE, \|A^\dagger A\|_\calE\right)$; if we also have $\rho_\calE=\tilde \rho_\calE = I/D$, Proposition~\ref{prop:A-EV-HS} implies $\|A\|_\calE \leq \|A\|$. A version of the latter result was known in Ref.~\cite{schusterOperatorGrowthOpen2023}.

\section{Operator norms representable as ensemble variance} \label{sec:gen-op-norm-as-evn}

In this section, we investigate the general question: among all possible norms on operators, which can be represented as the ensemble variance norm for some ensemble?

Firstly, the norm must be induced by an inner product. This is equivalent to the norm satisfying the parallelogram law. This rules out norms like the operator norm. Furthermore, it is known that no two different inner products can induce the same norm on a complex vector space~\cite{rudinFunctionalAnalysis1991}. Therefore, if a norm is representable as an ensemble variance norm, it must be induced by an inner product that is representable as an ensemble covariance inner product.

Notice that
\begin{multline}
\langle A,B\rangle_\calE = \expt_{\psi\sim \calE} \langle \psi|A^\dagger|\psi\rangle \langle \psi|B|\psi\rangle \\ = \tr\left[(A^\dagger \otimes B) \left(\expt_{\psi\sim \calE} |\psi\rangle \langle \psi| \otimes |\psi\rangle \langle \psi|\right) \right].
\end{multline}
Define
\begin{equation}
\rho^{(2)}_\calE = \expt_{\psi\sim \calE} \left[ |\psi\rangle \langle \psi| \otimes |\psi\rangle \langle \psi| \right],
\end{equation}
which is an operator on the double Hilbert space. This is also called the second moment of the state ensemble $\calE$~\cite{cotlerEmergentQuantumState2023}. Any inner product on the operator space can also be uniquely associated with an $\rho^{(2)}$. For example, the Hilbert-Schmidt inner product
\begin{equation}
\langle A,B\rangle = \frac{1}{D} \tr [A^\dagger B]
\end{equation}
is associated with
\begin{equation}
\rho^{(2)}_\text{HS} = \frac{1}{D} \sum_{i,j}|i\rangle\langle j| \otimes |j\rangle \langle i| = \frac{1}{D}\text{SWAP}.
\end{equation}
The inner product whose norm measures operator size,
\begin{equation}
\langle A,B\rangle_S = \sum_P a_P^\ast b_P |P|,
\end{equation}
is associated with
\begin{equation}
\rho^{(2)}_S = \frac{1}{D^2} \sum_P |P| P\otimes P.
\end{equation}

\begin{proposition}
A second-moment density matrix $\rho^{(2)}$ can be realized by an ensemble if and only if $\rho^{(2)}$ is positive semidefinite, separable, and supported on the symmetric subspace.
\end{proposition}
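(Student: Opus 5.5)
The plan is to prove the two directions separately. Necessity follows from closure of all three properties under convex combinations. Sufficiency rests on the observation that a product vector lying in the symmetric subspace must be of the form $c\,|a\rangle\otimes|a\rangle$. Throughout, ensembles may be unnormalized, as allowed in Section~\ref{sec:ensemble-variance-norm}, so no trace condition on $\rho^{(2)}$ is needed.

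\emph{Necessity.} For a single (possibly unnormalized) state, the operator $|\psi\rangle\langle\psi|\otimes|\psi\rangle\langle\psi| = |\psi\psi\rangle\langle\psi\psi|$ has three properties. It is positive semidefinite, it is a product of positive operators and hence separable, and its range is spanned by $|\psi\rangle\otimes|\psi\rangle$, which is invariant under SWAP and so lies in the symmetric subspace. All three properties are preserved under convex combinations with weights $p_\psi\ge 0$. For PSD-ness and separability this is immediate. For the support condition, note that a sum of PSD operators each annihilating the antisymmetric subspace also annihilates it. For a continuous ensemble the same holds because the expectation is a limit of convex combinations, and all three conditions define closed cones.

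\emph{Sufficiency.} Let $\rho^{(2)}$ be separable. We write it as a finite sum $\sum_k \sigma_k\otimes\tau_k$ with $\sigma_k,\tau_k\ge 0$; finiteness follows from Carath\'eodory's theorem in finite dimension. Spectrally decomposing each $\sigma_k$ and $\tau_k$ refines this into $\rho^{(2)}=\sum_k w_k\,|a_k\rangle\langle a_k|\otimes|b_k\rangle\langle b_k|$ with $w_k>0$ and nonzero vectors $a_k,b_k$. Since every term is PSD, $\rho^{(2)}\ge w_k|a_kb_k\rangle\langle a_kb_k|$. Therefore $a_k\otimes b_k\in\mathrm{range}(\rho^{(2)})$, which by assumption is contained in the symmetric subspace.

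The key lemma is that if $a\otimes b=b\otimes a$ with $a,b\neq 0$, then $b=c\,a$ for some $c\in\mathbb{C}$. To prove it, pair both sides with $\langle a|\otimes \mathbb{1}$: this gives $\langle a|a\rangle\, b=\langle a|b\rangle\, a$.

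Applying the lemma gives $w_k|a_kb_k\rangle\langle a_kb_k| = w_k|c_k|^2\,(|a_k\rangle\langle a_k|)^{\otimes 2} = (|\psi_k\rangle\langle\psi_k|)^{\otimes 2}$ with $\psi_k = (w_k|c_k|^2)^{1/4}a_k$. To obtain a probability distribution, set $W=\sum_k w_k|c_k|^2\|a_k\|^4$ (assuming $\rho^{(2)}\neq0$), take $p_k = w_k|c_k|^2\|a_k\|^4/W$, and take the states $|\phi_k\rangle = W^{1/4}a_k/\|a_k\|$. Then $\sum_k p_k(|\phi_k\rangle\langle\phi_k|)^{\otimes 2}=\rho^{(2)}$, so this finite ensemble realizes $\rho^{(2)}$. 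It is in general unnormalized. The trivial case $\rho^{(2)}=0$ is realized by the zero state.

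The main obstacle is the step passing from ``the operator $\rho^{(2)}$ is supported on the symmetric subspace'' to ``each product vector in some separable decomposition is symmetric''. This works only after refining to a pure-product decomposition and using the range-domination argument $\rho\ge w_k|v_k\rangle\langle v_k|\Rightarrow v_k\in\mathrm{range}(\rho)$. An arbitrary mixed-product decomposition would not immediately give the result, so I would make this refinement explicit first.
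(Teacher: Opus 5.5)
Your proof is correct and follows essentially the same route as the paper: you refine the separable decomposition into weighted pure product terms, use the positivity of each term to force every $a_k\otimes b_k$ into the symmetric subspace, and conclude $b_k\propto a_k$. Your range-domination step is the paper's observation that $\sum_k \lambda_k |\langle\psi|\alpha_k\otimes\beta_k\rangle|^2=0$ for antisymmetric $\psi$, and your version is more careful than the paper's, which asserts $|\alpha_k\rangle=|\beta_k\rangle$ without tracking the proportionality constant or the reweighting into a probability distribution (note that your $W$ equals $\tr\rho^{(2)}$, so the constructed ensemble is normalized exactly when $\rho^{(2)}$ has unit trace).
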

\begin{proof}
The ``only if'' part is obvious, since $|\psi\rangle \langle \psi| \otimes |\psi\rangle \langle \psi|$ is positive semidefinite and supported on the symmetric subspace, and $\expt_{\psi\sim\calE}$ is a classical superposition of them.

Conversely, if $\rho^{(2)}$ is positive semidefinite and separable, it can be represented as
\begin{equation}
\rho^{(2)} = \sum_k \lambda_k |\alpha_k\rangle \langle \alpha_k| \otimes |\beta_k\rangle \langle \beta_k|.
\end{equation}
As $\rho^{(2)}$ is positive semidefinite, being supported on the symmetric subspace means that $\langle \psi|\rho^{(2)}|\psi\rangle=0$ for any $|\psi\rangle$ that is anti-symmetric. This translates to
\begin{equation}
 \sum_k \lambda_k |\langle \psi|\alpha_k \otimes \beta_k\rangle|^2 =0,
\end{equation}
which requires that $\langle \psi|\alpha_k\otimes \beta_k\rangle=0$ for any $k$ and any anti-symmetric $|\psi\rangle$. This is only possible if $|\alpha_k\rangle =|\beta_k\rangle$. This proves the result.
\end{proof}

In particular, in qubit systems, we can design a necessary condition for a Pauli-diagonal inner product,
\begin{equation}
\langle P,Q\rangle = f(P) \delta_{P,Q},
\end{equation}
to be representable as an ensemble covariance. The associated second moment is
\begin{equation}
\rho^{(2)} = \frac{1}{D^2}\sum_P f(P) P\otimes P.
\end{equation}
We notice that the operators $P\otimes P$ can be simultaneously diagonalized in the generalized Bell basis~\cite{wernerAllTeleportationDense2001},
\begin{equation}
|\beta_P\rangle = \frac{1}{\sqrt{D}} (P\otimes I) \sum_i |i\rangle \otimes |i\rangle.
\end{equation}
Given the action of $P$ on the computational basis, one can verify that
\begin{equation}
P\otimes P\sum_i |i\rangle \otimes |i\rangle = (-1)^{n_Y(P)} \sum_i |i\rangle \otimes |i\rangle,
\end{equation}
where $n_Y(P)$ is the number of $Y$-operators in $P$. This also implies that
\begin{equation}
\text{SWAP} |\beta_P\rangle = (-1)^{n_Y(P)} |\beta_P\rangle.
\end{equation}
Therefore,
\begin{equation}
P\otimes P |\beta_Q\rangle = (-1)^{n_Y(P)+s(P,Q)} |\beta_Q\rangle,
\end{equation}
with $s$ defined as $PQ=(-1)^{s(P,Q)} QP$. Therefore,
\begin{equation}
\sum_P f(P)P\otimes P = \sum_Q g(Q) |\beta_Q\rangle \langle \beta_Q|,
\end{equation}
where
\begin{equation}
g(Q) = \sum_P (-1)^{n_Y(P)+s(P,Q)} f(P).
\end{equation}
Therefore, for the inner product to be representable by an ensemble, it is necessary (not sufficient) that:
\begin{enumerate}
    \item $g(Q) \geq 0$ for all $Q$,
    \item $g(Q)=0$ whenever $n_Y(Q)$ is odd.
\end{enumerate}

We will verify that two known inner products, namely the Haar and RPS inner products, satisfy this condition.

For the Haar inner product, $f(I)=1$, and $f(P)=(2^L+1)^{-1}$ for all other Pauli strings. In this case,
\begin{equation}
g(Q) = \frac{2^L + \sum_P (-1)^{n_Y(P)+s(P,Q)}}{2^L+1}.
\end{equation}
We find that $(-1)^{n_Y(P)+s(P,Q)}$ admits a multiplicative decomposition. Therefore,
\begin{equation}
\sum_P (-1)^{n_Y(P)+s(P,Q)} = \prod_{i=1}^L \sum_{P_i\in \{I,X,Y,Z\}} (-1)^{\delta_{P_i,Y} + s(P_i,Q_i)}.
\end{equation}
The single-site sum can be easily evaluated as
\begin{equation}
\sum_{P_i\in \{I,X,Y,Z\}} (-1)^{\delta_{P_i,Y} + s(P_i,Q_i)} = \begin{cases}
2, & Q_i \in \{I,X,Z\}; \\
-2, & Q_i = Y.
\end{cases}
\end{equation}
Therefore,
\begin{equation}
g(Q) = \frac{2^L}{2^L+1} \left(1+(-1)^{n_Y(Q)}\right).
\end{equation}
This indeed satisfies the two conditions.

For the RPS inner product, $f(P) = 3^{-|P|}$. We find that
\begin{equation}
g(Q) = \prod_{i=1}^L \sum_{P_i\in \{I,X,Y,Z\}} (-1)^{\delta_{P_i,Y} + s(P_i,Q_i)} 3^{-1+\delta_{P_i,I}},
\end{equation}
where the sum is
\begin{multline}
\sum_{P_i\in \{I,X,Y,Z\}} (-1)^{\delta_{P_i,Y} + s(P_i,Q_i)} 3^{-1+\delta_{P_i,I}}  \\= \begin{cases}
\frac{4}{3}, & Q_i\in \{I,X,Z\}; \\
0, & Q_i = Y.
\end{cases}
\end{multline}
Therefore,
\begin{equation}
g(Q) = \begin{cases}
\left(\frac{4}{3}\right)^L, & n_Y(Q) = 0; \\
0, & n_Y(Q) > 0.
\end{cases}
\end{equation}
This also satisfies the two conditions.

By contrast, if we consider $f(P)=|P|$, we get
\begin{multline}
g(Q) = \sum_{\{P_i\}} \left(\sum_i |P_i|\right) \prod_{i=1}^L (-1)^{\delta_{P_i,Y} + s(P_i,Q_i)} \\
= \left. \frac{\partial}{\partial t}  \prod_{i=1}^L \sum_{P_i\in \{I,X,Y,Z\}} (-1)^{\delta_{P_i,Y} + s(P_i,Q_i)} e^{t|P_i|} \right|_{t=0} \\
= \left. \frac{\partial}{\partial t}  \left(1+e^t\right)^{L-n_Y(Q)} \left(1-3e^t\right)^{n_Y(Q)} \right|_{t=0}  \\
= (-1)^{n_Y(Q)}2^{L-1} (L+2n_Y(Q)).
\end{multline}
Obviously, $g(Q)$ does not vanish for odd $n_Y(Q)$. This implies that the associated norm is not supported in the symmetric subspace, and therefore cannot be represented as an ensemble variance norm.

\section{Polynomial approximation of thermal expectation value} \label{sec:appendix-thermal-value}

In this section, we will detail the properties of the polynomial approximation to the thermal expectation value.

In Section~\ref{sec:thermal-expectation-value}, we argued that
\begin{equation}
O_{\npow} = \sum_{k=0}^n c_k H^k \label{eq:Onpow-poly-expression}
\end{equation}
with the coefficients determined by
\begin{equation}
\langle O_{\npow}-O, H^\ell\rangle = 0, \forall 0\leq \ell \leq n \label{eq:Npow-perp-O}
\end{equation}
will make $\langle \psi|O_{\npow}|\psi\rangle$ a good approximation to the thermal expectation value $\Oth$. The argument goes as follows. Firstly, $O_{\npow}$ commutes with the Hamiltonian. Therefore, it can be written as
\begin{align}
O_{\npow} & = \sum_E f_{\npow}(E) |E\rangle \langle E|, \\ f_\npow(E) & = \sum_{k=0}^n c_k E^k. \label{eq:fnpow-polynomial}
\end{align}
Note that this does not require the spectrum of $H$ to be non-degenerate. Similarly, we write
\begin{equation}
O = \sum_E f_{\text{diag}}(E)|E\rangle \langle E|.
\end{equation}
We define $f_\text{MC}(E)$ to be the local smoothing of $f_{\text{diag}}(E)$. It follows that for any smooth function $g(E)$, $\int f_\text{diag}(E) g(E)\mathrm dE = \int f_{\text{MC}}(E) g(E)\mathrm dE$. Then, Eq.~\eqref{eq:Npow-perp-O} implies that
\begin{equation}
\int \left[f_\npow(E)-f_\text{diag}(E)\right]E^\ell D(E)\mathrm dE = 0,\forall 0\leq \ell \leq n,
\end{equation}
where $D(E)$ is the spectral density, which should have the form $D(E)=e^{LS(\epsilon)}$, where $\epsilon=E/L$ is the energy density. Assume that infinite temperature corresponds to zero energy density, such that $S^\prime(0)=0$, invoking the saddle point approximation in the limit of large $L$, one obtains $\frac{\partial^\ell}{\partial \epsilon^\ell}\left[f_\npow(\epsilon)-f_\text{diag}(\epsilon)\right] \approx 0$, which implies $f_\npow(\epsilon) \approx f_\text{diag}(\epsilon)+\calO(\epsilon^{n+1})$. However, at finite $L$, the error to the saddle point approximation is on the order $1/L$, which implies that $\frac{\partial^\ell}{\partial \epsilon^\ell}\left[f_\npow(\epsilon)-f_\text{diag}(\epsilon)\right] = \calO(1/L)$, and therefore,
\begin{equation}
f_\npow(\epsilon) \approx f_\text{diag}(\epsilon)+\calO(\epsilon^{n+1}) + \calO(1/L).
\end{equation}

We make several remarks about this argument. Firstly, we notice that this argument can be generalized to finite temperature. If Eq.~\eqref{eq:Npow-perp-O} is generalized to $\langle O_{\npow}-O, e^{-\beta H} H^\ell\rangle = 0$, the same argument gives
\begin{equation}
f_\npow(\epsilon) \approx f_\text{diag}(\epsilon)+\calO((\epsilon-\epsilon_\beta)^{n+1}) + \calO(1/L),
\end{equation}
where $\epsilon_\beta$ satisfies $S^\prime(\epsilon_\beta) = \beta$, such that it is the maximum of the new exponent $e^{L(S(\epsilon)-\beta\epsilon)}$. Secondly, the argument does not require that $O_{\npow}$ take the exact form in Eq.~\eqref{eq:Onpow-poly-expression}; instead, it suffices that $f_\npow$ is smooth with respect to $\epsilon$. Thirdly, expanding $f_\npow$ at $\beta=0$ compared to a finite-temperature base point $\epsilon_\beta$ would incur an error $\calO\left(\epsilon_\beta^{n+1}\right)$. This is exponentially small in $n$ (for small $\beta$), a magnitude that is comparable to the tightness of the thermalization bound $\nu$ ($H^{n+1}$ orthogonalized against lower powers of $H$ would be a conserved operator with $\nu \sim e^{-\calO(n)}$). The latter two points justify the approximation to the thermal expectation value used in the thermally-dressed ensemble discussed in Section~\ref{subsec:thermalization-at-finite-temperature}.

It is notable that the typical energy of a state in $\calE$ can be estimated with
\begin{equation}
\expt_{\psi\sim \calE} \langle \psi|H|\psi\rangle^2 = \|H\|_\calE^2.
\end{equation}
If $\calE$ is the RPS ensemble, $\|H\|_\calE^2 \sim \calO(L)$ for local Hamiltonians. Therefore, $|\langle \psi|H|\psi\rangle |\sim \calO(\sqrt L)$ for typical RPS, which implies that the $\epsilon^{n+1}$ error is on the order $\calO(L^{-(n+1)/2})$. Therefore, the average difference $f_\npow(\epsilon) - f_\text{diag}(\epsilon)$ would vanish for the RPS ensemble in the thermodynamic limit even when $n=1$. For the finite-temperature ensembles, this argument is less straightforward. However, it is plausible to assume that the scaling with $L$ is the same as the infinite temperature case when $\beta$ is small.

In all cases, the difference between $\Onpow$ and $\Oth$ always includes a $\calO(1/L)$ error term. This is inevitable, as different thermal ensembles already differ from each other on a similar, or even larger, scale. Therefore, $\Onpow$ should be considered one of the equivalent thermal ensembles, given adequately large $n$. In fact, as our thermalization bound can potentially be exponentially tight in $L$, $\Onpow$ is more accurate at predicting the late-time expectation value than the canonical or microcanonical ensemble, whose accuracy is at most power-law. Physically, this is due to the fact that $\Onpow$ contains information about not just the total energy, but also the energy variance and higher cumulants of $H$, therefore is more fine-grained than thermal expectation values that only depend on energy.

\section{Thermalization in finite time}

\label{sec:thermalization-in-finite-time}

In this section, we set up a generic formalism for thermalization in finite time. Consider an operator $O$, which adopts the spectral decomposition,
\begin{equation}
O(t) = \sum_\lambda O_\lambda e^{i\lambda t}. \label{eq:O-spectral-decomp}
\end{equation}
The operators $O_\lambda$ satisfy $[H,O_\lambda]=\lambda O_\lambda$. As a result,
\begin{align}
\langle O_{\lambda},O_{\lambda^\prime}\rangle & = \|O_\lambda\|^2 \delta_{\lambda,\lambda^\prime}; \\
\sum_\lambda \|O_\lambda\|^2 & = \|O\|^2.
\end{align}

For finite-time thermalization, we have an envelope function $f(t)$, and consider the average
\begin{equation}
D_f = \int_0^{+\infty} \|O(t)\|_\calE^2 f(t) \mathrm dt.
\end{equation}
The common average on $[0,T]$ corresponds to choosing $f(t) = \chi([0,T])/T$, where $\chi$ is the characteristic function that gives value one for points in the given set and zero elsewhere. Inserting the spectral decomposition, we have
\begin{equation}
D_f = \sum_{\lambda,\lambda^\prime} \langle O_{\lambda^\prime},O_\lambda\rangle_\calE \int_0^{+\infty} f(t)e^{i(\lambda-\lambda^\prime)t} \mathrm dt.
\end{equation}
The last integral is the Fourier transform of $f$, which we denote as $\tilde f(\lambda-\lambda^\prime)$. We want to upper bound $D_f$ using the absence of SSOs, namely, conditions Eqs.~\eqref{eq:finite-time-slowop-cond-1} to \eqref{eq:finite-time-slowop-cond-4}. To this end, we decompose $O$ into a series of slow operators by projecting it into small energy windows. We define
\begin{equation}
B_\mu = \sum_{|\lambda-\mu|\leq 1/\tau} O_\lambda.
\end{equation}
This can also be written as
\begin{equation}
B_\mu = \sum_\lambda \rect\left(\tau(\lambda-\mu)\right) O_\lambda, \label{eq:B-def-rect}
\end{equation}
where $\rect(x)=\Theta(1-|x|)$ is the characteristic function of the interval $[-1,1]$.
By construction,
\begin{multline}
\| [H,B_\mu]-\mu B_\mu \| = \sqrt{\sum_{|\lambda-\mu|\leq 1/\tau} \|O_\lambda\|^2 (\lambda-\mu)^2 } \\
\leq \frac{1}{\tau} \sqrt{\sum_{|\lambda-\mu|\leq 1/\tau} \|O_\lambda\|^2} = \frac{1}{\tau} \|B_\mu\|. \label{eq:Bmu-small-commutation}
\end{multline}
Therefore, by assumption, $\| B_\mu \|_\calE^2 \leq \nu \|B_\mu\|^2$.

We now claim that there exists a coefficient $C$ such that
\begin{equation}
D_f \leq C \int \mathrm d\mu \| B_\mu\|_\calE^2. \label{eq:DA-bound-with-C}
\end{equation}
The range of integration can be taken to infinity, as $B_\mu$ would naturally vanish when $\mu$ falls outside of the spectral radius. Should this be true, we would have
\begin{equation}
D_f \leq \nu C \int \mathrm d\mu \| B_\mu\|^2.
\end{equation}
Substituting
\begin{equation}
\|B_\mu\|^2 = \sum_{|\lambda-\mu|\leq 1/\tau} \|O_\lambda\|^2,
\end{equation}
we get
\begin{align}
D_f & \leq \nu C \sum_{\lambda} \|O_\lambda\|^2 \int \rect\left(\tau(\lambda-\mu)\right)  d\mu \nonumber \\
& \leq \frac{2\nu C}{\tau} \sum_{\lambda} \|O_\lambda\|^2 = \frac{2\nu C}{\tau} \|O\|^2.
\end{align}
Hereby, we reduce the problem to proving Eq.~\eqref{eq:DA-bound-with-C} and finding the ratio $C$.

The right-hand side of Eq.~\eqref{eq:DA-bound-with-C} can be expanded as
\begin{multline}
\int \mathrm d\mu \| B_\mu\|_\calE^2 = \int \mathrm d\mu \sum_{\lambda,\lambda^\prime}  \langle O_{\lambda^\prime},O_\lambda\rangle_\calE \\ \times \rect\left(\tau(\lambda-\mu)\right) \rect\left(\tau(\lambda^\prime-\mu)\right).
\end{multline}
The integration with respect to $\mu$ can be carried out explicitly, with
\begin{multline}
\int \mathrm d\mu \rect\left(\tau(\lambda-\mu)\right) \rect\left(\tau(\lambda^\prime-\mu)\right) \\ = \max\left(\frac{2}{\tau} - |\lambda-\lambda^\prime| , 0\right) =: \tilde R(\lambda-\lambda^\prime). \label{eq:rect-to-tilde-R}
\end{multline}
Therefore, Eq.~\eqref{eq:DA-bound-with-C} is equivalent to
\begin{equation}
\sum_{\lambda,\lambda^\prime} \langle O_{\lambda^\prime},O_\lambda\rangle_\calE \left[\tilde f(\lambda-\lambda^\prime) - C \tilde R(\lambda-\lambda^\prime)\right] \leq 0. \label{eq:B7-equiv-2}
\end{equation}
We will define $\tilde \Delta(\lambda-\lambda^\prime) = \tilde f(\lambda-\lambda^\prime) - C \tilde R(\lambda-\lambda^\prime)$.

Further notice that $\langle O_{\lambda^\prime},O_\lambda\rangle_\calE$, if viewed as a matrix with indices $(\lambda^\prime,\lambda)$, is a positive semidefinite form:
\begin{equation}
\sum_{\lambda} c(\lambda) c(\lambda^\prime)^\ast \langle O_{\lambda^\prime},O_\lambda\rangle_\calE = \left\|\sum_\lambda  c(\lambda) O_\lambda\right\|_\calE^2 \geq 0
\end{equation}
for any $c(\lambda)$. Therefore, it can be diagonalized and expressed as
\begin{equation}
\langle O_{\lambda^\prime},O_\lambda\rangle_\calE = \sum_{n} a_n(\lambda) a_n(\lambda^\prime)^\ast.
\end{equation}
Therefore, Eq.~\eqref{eq:B7-equiv-2} is equivalent to
\begin{equation}
\sum_n \sum_{\lambda,\lambda^\prime} a_n(\lambda) a_n(\lambda^\prime)^\ast  \tilde \Delta(\lambda-\lambda^\prime) \leq 0. \label{eq:B7-equiv-3}
\end{equation}
Eq.~\eqref{eq:B7-equiv-3} would be implied by
\begin{equation}
\sum_{\lambda,\lambda^\prime} a_n(\lambda) a_n(\lambda^\prime)^\ast \tilde \Delta(\lambda-\lambda^\prime) \leq 0, \forall n. \label{eq:B7-equiv-4}
\end{equation}
Let
\begin{equation}
\Delta(t) = \frac{1}{2\pi} \int e^{-i\lambda t} \tilde \Delta(\lambda) \mathrm d\lambda
\end{equation}
be the inverse Fourier transform of $\tilde \Delta$. Then, Eq.~\eqref{eq:B7-equiv-4} is equivalent to
\begin{equation}
\int \mathrm dt |a_n(t)|^2 \Delta (t) \leq 0, \forall n, \label{eq:B7-equiv-5}
\end{equation}
with
\begin{equation}
a_n(t) = \sum_\lambda a_n(\lambda) e^{i\lambda t}.
\end{equation}
Eq.~\eqref{eq:B7-equiv-5} would be satisfied trivially if
\begin{equation}
\Delta(t) \leq 0, \forall t,
\end{equation}
which is equivalent to
\begin{equation}
f(t) \leq C R(t),
\end{equation}
where
\begin{equation}
R(t) =\frac{1}{2\pi} \int_{-2/\tau}^{2/\tau} e^{-i\omega t} \left(\frac{2}{\tau}-|\omega|\right)\mathrm d\omega = \frac{2}{\pi}\frac{\sin^2 (t/\tau)}{t^2}.
\end{equation}

Ultimately, we arrive at the conclusion: for any given envelope function $f(t)$ and parameter $\tau$, if
\begin{equation}
f(t) \leq C \frac{2}{\pi}\frac{\sin^2 (t/\tau)}{t^2}, \forall t,
\end{equation}
then
\begin{equation}
D_f \leq \frac{2\nu C}{\tau} \|O\|^2.
\end{equation}
Choosing $f(t)=\frac{2}{\pi}\frac{\sin^2 (t/\tau)}{t^2/\tau}$ and $C=\tau$ gives Eq.~\eqref{eq:O-t-thermalization-finite-time-bound}, choosing $f(t) = \chi([0,T])/T$ and $C=\frac{\pi T}{2\sin^2(T/\tau)}$ gives Eq.~\eqref{eq:O-t-thermalization-finite-time-bound-with-T}.

Note that this bound is looser by a factor of $2$ than the infinite-time bound when we take $\tau\to\infty$. This is due to the fact that we have actually proven
\begin{equation}
\int_{-\infty}^{+\infty} \|\calP[O](t)\|_\calE^2 \frac{2}{\pi}\frac{\sin^2 (t/\tau)}{t^2} \mathrm dt \leq 2\nu \|O\|^2,
\end{equation}
but then restricted the left-hand side to the physical half $t>0$.

\section{The role of spectrum generating algebra}
\label{sec:sga}

In the thermalization bounds presented, ``SSOs'' include both simple IOMs and simple SGAs. However, in the statement of our results, we have often conflated ``SSOs'' with ``simple approximate IOMs''. In this section, we will justify this leap, and discuss the exact roles that SGA-SSOs play in the thermalization dynamics.

Our infinite-time thermalization bound is obtained by expanding $O$ in the eigenbasis of $[H,\cdot]$. The $\lambda=0$ component, $O_0$, is the long-time average of $O$: $O_0 = \lim_{T\to\infty} \int_0^T \frac{\mathrm dt}{T} O(t)$. If we subtract it from $O$, we get
\begin{equation}
\lim_{T\to\infty} \int_0^T \frac{\mathrm dt}{T} \|O(t)-O_0\|_\calE^2 \leq \|O\|^2 \max_{\lambda\neq 0} \left\| \frac{O_\lambda}{\|O_\lambda\|} \right\|_\calE^2.
\end{equation}
The value $\max_{\lambda\neq 0} \left\| \frac{O_\lambda}{\|O_\lambda\|} \right\|_\calE^2$ is the maximum ensemble variance norm of any normalized SGA. We will denote this as $\nu_\text{SGA}$. Then,
\begin{multline}
\expt_{\psi\sim \calE}\lim_{T\to\infty} \int_0^T \frac{\mathrm dt}{T} \left(\langle \psi|O(t)|\psi\rangle-\langle \psi|O_0|\psi\rangle\right)^2  \\ \leq \nu_\text{SGA} \|O\|^2.
\end{multline}
Notice that $O_0$ does not evolve over time. This implies that if $\nu_\text{SGA}$ is small, then most initial states in $\calE$ would \textbf{equilibrate} to the value $\langle \psi|O_0|\psi\rangle$. In the eigenstate language, $O_0$ is exactly the part of $O$ that is diagonal in the energy eigenbasis, and $\langle \psi|O_0|\psi\rangle$ is the diagonal ensemble expectation value of $O$.

From this, we see that SGAs and IOMs play different roles in quantum thermalization. The absence of simple SGAs, characterized by $\nu_\text{SGA}$, implies equilibration. Thermalization further requires that $\langle \psi|O_0|\psi\rangle \approx \Oth(\psi)$. This is controlled by
\begin{equation}
\expt_{\psi\sim\calE}\left(\langle \psi|O_0|\psi\rangle-\Onpow(\psi)\right)^2 = \left\| O_0 - \calQ[O] \right\|_\calE^2.
\end{equation}
This quantity is upper bounded by the Frobenius norm of $O$ multiplied by
\begin{equation}
\nu_\text{IOM} = \max_{\substack{\|A\|=1\\ [H,A]=0\\ \tr AH^\ell=0,\ell \leq n}} \|A\|_\calE^2.
\end{equation}
The overall $\nu$ is the larger one between $\nu_\text{IOM}$ and $\nu_\text{SGA}$. Therefore, a small $\nu$ implies the smallness of both $\nu_\text{IOM}$ and $\nu_\text{SGA}$, thus implying both equilibration and the equality between the diagonal ensemble and the thermal expectation value.

We further argue that physically, the absence of simple IOMs often implies the absence of simple SGAs; therefore, in a statement ``absence of SSOs implies thermalization'', it suffices to show the absence of simple IOMs. We provide two semi-quantitative justifications for this: (1) simple SGAs are unlikely to exist in quantum many-body systems, and (2) if they exist, they are often accompanied by IOMs.

For the first argument, notice that an SGA is defined as
\begin{equation}
[H,A] = \lambda A.
\end{equation}
This implies that if $|E\rangle$ is an eigenstate of $H$, then $A$ either annihilates the state, or brings it to another eigenstate of energy $E+\lambda$. Then one of the following must be true: (a) For a considerable proportion of eigenstates in the system, the action of a simple operator creates an excitation that has a fixed energy on top of it. This is a very strong structure that does not exist even in the Heisenberg spin chain, the prototypical integrable model. (b) $A|E\rangle$ vanishes for all but a small fraction of the eigenstates, which is the case for SGAs in models with quantum many-body scars~\cite{odeaTunnelsTowersQuantum2020,moudgalyaPairingHubbardModels2020,markPairingStatesTrue2020,serbynQuantumManybodyScars2021,renQuasisymmetryGroupsManyBody2021a,renDeformedSymmetryStructures2022,moudgalyaQuantumManybodyScars2022b,desaulesExtensiveMultipartiteEntanglement2022,dengUsingModelsStatic2023,moudgalyaExhaustiveCharacterizationQuantum2024a}. In this case, $A$ would have a small rank and be a combination of a small number of projectors. However, projectors contain Pauli strings of all sizes, and if the rank of $A$ is small, the large Pauli strings cannot be canceled out, leaving $A$ with a large operator size. This disqualifies it from being a simple operator.

It is worth noting that there are interacting models where SGAs are known to exist. The most well-known example is Yang's $\eta$-pairing in the Hubbard model~\cite{yangPairingOffdiagonalLongrange1989,yangRemarksGeneralizationsSU2xSU21991,moudgalyaPairingHubbardModels2020,markPairingStatesTrue2020}, defined by
\begin{equation}
H_\text{Hubbard} = -t \sum_{\langle ij\rangle,\sigma} c_{i\sigma}^\dagger c_{j\sigma} + U \sum_i n_{i\uparrow} n_{i\downarrow}.
\end{equation}
The operator
\begin{equation}
\eta^+ = \sum_{i} (-1)^i c_{i\uparrow}^\dagger c_{i\downarrow}^\dagger
\end{equation}
exactly satisfies
\begin{equation}
\left[H_\text{Hubbard},\eta^+\right] = U \eta^+.
\end{equation}
In this case, $\eta^+$ is part of an $\mathrm{SU}(2)$ algebra of operators (hence the ``algebra'' in ``spectrum generating algebra''), with $\eta^-=(\eta^+)^\dagger$ and $\eta^z = \frac{1}{2}[\eta^+,\eta^-]$ also satisfying corresponding algebraic relations with the Hamiltonian. In particular, $[H,\eta^z]=0$, meaning that the SGA $\eta^+$ is accompanied by an IOM. Generally, whenever $[H,A]=\lambda A$, $AA^\dagger $and $A^\dagger A$ would commute with the Hamiltonian; and if $A$ is a simple operator, $AA^\dagger$ and $A^\dagger A$ will also be. Quantitatively (see Appendix~\ref{sec:norm-properties} for a proof),
\begin{equation}
\|A^\dagger A\|_\calE \geq \|A\|_\calE^2
\end{equation}
for normalized ensembles. If the SGA is approximate, the approximate IOM's conservedness can also be bounded by
 \begin{equation}
\|[H,A^\dagger A]\| \leq 2 \sigma_{\max}(A) \|[H,A]\|,
 \end{equation}
 where $\sigma_{\max}$ is the largest singular value of $A$.
 Therefore, $A^\dagger A$ (or $AA^\dagger$) is likely to be a simple (approximate) IOM.
 
We should note that this argument is not mathematically rigorous: it might happen that $\|A^\dagger A\|_\calE$ is large, but $A^\dagger A$ has large overlaps with the $Q_i$'s, and after orthogonalizing against $\{Q_i\}$ it becomes a large operator. This requires $A^\dagger A$ to be very close to a low-order polynomial of $H$, which is physically unlikely; should such an operator exist, it would make an interesting case study. For most physically realistic situations, we settle with the conclusion that studying approximate IOMs as SSOs suffices in terms of determining thermalization.

\section{Lower bound on $\nu$}
\label{sec:one-over-T-scaling}

In this section, we prove and discuss the implications of the bound Eq.~\eqref{eq:nu-lower-bound}.

Let an operator $O$ adopt spectral decomposition as in Eq.~\eqref{eq:O-spectral-decomp} and define $B_\mu$ as in Eq.~\eqref{eq:B-def-rect}. We have that
\begin{equation}
\int \mathrm d\mu B_\mu = \sum_\lambda O_\lambda \int \mathrm \rect\left(\tau(\lambda-\mu)\right)  d\mu = \frac{2}{\tau} O.
\end{equation}
By the triangular inequality, this implies that
\begin{equation}
\|O\|_\calE \leq \frac{\tau}{2} \int \mathrm d\mu \| B_\mu\|_\calE.
\end{equation}
Furthermore, as $\|B_\mu\|_\calE^2 \leq \nu \|B_\mu\|^2$,
\begin{equation}
\|O\|_\calE \leq \frac{\tau \sqrt{\nu}}{2} \int \mathrm d\mu \|B_\mu\|.
\end{equation}
Now, let $\Lambda$ be the spectral radius of $[H,\cdot]$, meaning that $O_\lambda$ exists only when $|\lambda|\leq \Lambda$. Then, $B_\mu=0$ if $|\mu| > \Lambda+1/\tau$. Therefore,
\begin{equation}
\int \mathrm d\mu \|B_\mu\| = \int \mathrm d\mu \|B_\mu\| \rect\left[\frac{\mu}{\Lambda+1/\tau}\right].
\end{equation}
Using Cauchy-Schwarz,
\begin{align}
\int \mathrm d\mu \|B_\mu\| & \leq \sqrt{\int \mathrm d\mu \rect\left[\frac{\mu}{\Lambda+1/\tau}\right]^2\int \mathrm d\mu \|B_\mu\|^2} \nonumber \\
&= \sqrt{2(\Lambda+1/\tau)\int \mathrm d\mu \|B_\mu\|^2} \nonumber \\
&= 2 \sqrt{\frac{\Lambda+1/\tau}{\tau}} \|O\|.
\end{align}
Therefore,
\begin{equation}
\|O\|_\calE \leq \sqrt{\nu(\Lambda\tau + 1)} \|O\|.
\end{equation}
Since this must hold for any $O$ (that are orthogonal to $\{Q_i\}$, if applicable), we conclude that
\begin{equation}
\nu(\tau) \geq \frac{1}{\Lambda\tau+1} \max_{\substack{\|A\|=1 \\ A \perp \{Q_i\}}} \|A\|_\calE^2.
\end{equation}

Importantly, the quantity $\max_{\substack{\|A\|=1 \\ A\perp \{Q_i\}}} \|A\|_\calE^2$ is an $\calO(1)$ constant that does not depend on $\tau$. Therefore, at large $\tau$, $\nu(\tau)$ would at most decay as $1/\Lambda\tau$. Therefore, our thermalization bound intrinsically has the property that the bound on the averaged fluctuation around the thermal expectation value over a timescale of $\tau$ is on the order of $1/\tau$. This shows that the $1/T$ scaling of the bound in Eq.~\eqref{eq:finite-time-therm-one-over-T} is not due to the sub-optimal choice of the envelope function, but a reflection of the intrinsic limitation of the formalism. This is the same scaling obtained in some previous results~\cite{shortQuantumEquilibrationFinite2012}. We argue that obtaining a better scaling requires substantially more assumptions than those currently involved in this formulation.

The reason for the scaling in Eq.~\eqref{eq:finite-time-therm-one-over-T} is, in fact, very straightforward: since $\|O(t)\|_\calE$ at small $t$ is $\calO(1)$, $\int_0^T \|O(t)\|_\calE^2 \mathrm dt$ must be at least $\calO(1)$, and therefore, $\frac{1}{T}\int_0^T \|O(t)\|_\calE^2 \mathrm dt$ must be at least $\calO(1/T)$. This would be true even if $\|O(t)\|_\calE$ decays as $e^{-\calO(t)}$. In the latter case, a time-interval average that excludes the initial time, such as $\frac{1}{T}\int_T^{2T} \|O(t)\|_\calE^2 \mathrm dt$, can be expected to have a better scaling. However, we are also not able to obtain a better bound on this quantity, as there is the possibility that $\|O(t)\|_\calE$ becomes $\calO(1)$ again for some $t \in [T,2T]$. Even though such an event, similar in spirit to Poincar\'e recurrence, is physically unlikely to occur in any reasonable time, our machinery is unable to rule it out. In fact, since our thermalization bound applies uniformly to all operators, it is always possible to reverse-engineer an operator that would exactly evolve into a simple operator at a given time $T$, making any uniform bound on $\frac{1}{T}\int_T^{2T} \|O(t)\|_\calE^2 \mathrm dt$ impossible. Strengthening the results by incorporating the fact that $O$ is local at $t=0$ will be left for future work.

\section{Floquet formalism} \label{sec:floquet}

All of the formalism we developed for time-independent Hamiltonians can naturally be generalized to Floquet systems. Assume that the time-dependent Hamiltonian is $H(t)$, with $H(t+1)=H(t)$. Define the propagator
\begin{equation}
U(t_2,t_1) = \mathcal{T} \exp\left[-i \int_{t_1}^{t_2} H(t) \mathrm dt\right].
\end{equation}
The Floquet unitary $U$ is defined as $U=U(1,0)$. With this, the general propagator starting from $t_1=0$ has the form
\begin{equation}
U(t,0) = U(t,\lfloor t\rfloor) U^{\lfloor t\rfloor}.
\end{equation}
Due to periodicity, $U(t,\lfloor t \rfloor) = U(t-\lfloor t \rfloor,0)$. The operator evolution is given by
\begin{equation}
O(t) = U(t,0)^\dagger O U(t,0) = (U^\dagger)^{\lfloor t\rfloor} O(t-\lfloor t\rfloor ) U^{\lfloor t\rfloor}.
\end{equation}
Therefore,
\begin{multline}
\int_0^{+\infty} f(t) \|O(t)\|_\calE^2 \mathrm dt \\
= \int_0^1 \mathrm d\tau \sum_{t=0}^{+\infty} f(t+\tau) \left\| (U^\dagger)^{t} O(\tau) U^{t} \right\|_\calE^2.
\end{multline}
Therefore, a bound on the average fluctuation at stroboscopic times,
\begin{equation}
\sum_{t=0}^{+\infty} f(t) \left\| (U^\dagger)^{t} O U^{t} \right\|_\calE^2, \label{eq:bound-O-norm-discrete-times}
\end{equation}
can generalize to a bound averaging over continuous times. We will restrict ourselves to bounding Eq.~\eqref{eq:bound-O-norm-discrete-times} hereafter.

The superoperator $O \mapsto U^\dagger O U$ is a unitary superoperator with respect to the Hilbert-Schmidt inner product, as
\begin{equation}
\tr [(U^\dagger AU)^\dagger U^\dagger BU] = \tr[A^\dagger B].
\end{equation}
Therefore, any operator $O$ can similarly be decomposed into an orthogonal eigenbasis of it,
\begin{equation}
O = \sum_\lambda O_\lambda,
\end{equation}
such that
\begin{align}
U^\dagger O_\lambda U & = e^{i\lambda} O_\lambda, \\
\langle O_{\lambda},O_{\lambda^\prime}\rangle & = \|O_\lambda\|^2 \delta_{\lambda,\lambda^\prime}, \\
\sum_\lambda \|O_\lambda\|^2 & = \|O\|^2.
\end{align}
Note that due to the eigenvalue being defined as $e^{i\lambda}$, $\lambda$ is defined on the interval $[0,2\pi]$ with periodic boundary conditions (PBC). With the spectral decomposition, we have the time evolution
\begin{equation}
O(t) = (U^\dagger)^{t} O U^{t} = \sum_{\lambda} e^{i\lambda t}O_\lambda.
\end{equation}
All times $t$ should be understood as integers (i.e., stroboscopic) hereafter.

With this, the proof in Section~\ref{sec:thermalization-in-finite-time} can be repeated almost verbatim. There are only two technical differences with the Hamiltonian case. Firstly, Eq.~\eqref{eq:Bmu-small-commutation} becomes
\begin{multline}
\| U^\dagger B_\mu U - e^{i\mu} B_\mu \| = \sqrt{\sum_{|\lambda-\mu|\leq 1/\tau} \|O_\lambda\|^2 |e^{i\lambda}-e^{i\mu}|^2 } \\
\leq 2\sin \frac{1}{2\tau} \|B_\mu\|. \label{eq:Bmu-small-commutation-Floquet}
\end{multline}
When $\tau$ is large, $2\sin \frac{1}{2\tau} \sim \frac{1}{\tau}$, this becomes identical to Eq.~\eqref{eq:Bmu-small-commutation}. In general, the Floquet analog of Eq.~\eqref{eq:finite-time-slowop-cond-2} would be
\begin{equation}
\left\| U^\dagger AU - e^{i\lambda}  A \right\| \leq 2\sin \frac{1}{2\tau}. \label{eq:UaU-bound-one-over-two-tau}
\end{equation}
Secondly, in Eq.~\eqref{eq:rect-to-tilde-R}, we need to properly define $|\lambda-\lambda^\prime|$ on a circle. We define it as
\begin{equation}
|\lambda-\lambda^\prime| = {\min}\left((\lambda-\lambda^\prime)\text{ mod }2\pi, (\lambda^\prime-\lambda)\text{ mod }2\pi\right).
\end{equation}
Furthermore, we assume that $\tau>\frac{1}{\pi}$. This is reasonable since Eq.~\eqref{eq:UaU-bound-one-over-two-tau} involves $\sin \frac{1}{2\tau}$ only. Then, the expression for $\tilde R(\lambda-\lambda^\prime)$ takes one of the following two forms. If $\tau>\frac{2}{\pi}$,
\begin{equation}
\tilde R(\lambda-\lambda^\prime) = \max\left(\frac{2}{\tau} - |\lambda-\lambda^\prime| , 0\right),
\end{equation}
the same as Eq.~\eqref{eq:rect-to-tilde-R}. If $\frac{1}{\pi}<\tau<\frac{2}{\pi}$,
\begin{equation}
\tilde R(\lambda-\lambda^\prime) = \frac{2}{\tau} - {\min}\left(|\lambda-\lambda^\prime|,2\pi-\frac{2}{\tau}\right).
\end{equation}
Equivalently,
\begin{equation}
\tilde R(\lambda-\lambda^\prime) = \max \left(2\pi-\frac{2}{\tau}-|\lambda-\lambda^\prime|,0\right) + \frac{4}{\tau}-2\pi.
\end{equation}
The inverse Fourier transform would be
\begin{equation}
R(t) = \frac{1}{2\pi} \int_0^{2\pi} \tilde R(\omega) e^{-i\omega t} \mathrm d\omega.
\end{equation}
For the $\tau > \frac{2}{\pi}$ case, this produces
\begin{equation}
R(t) = \frac{2}{\pi} \frac{\sin^2 (t/\tau)}{t^2},
\end{equation}
similar to the Hamiltonian case. For $\tau < \frac{2}{\pi}$,
\begin{equation}
R(t) = \frac{2}{\pi} \frac{\sin^2\left[\left(\pi - \frac{1}{\tau}\right)t\right]}{t^2} + \left(\frac{4}{\tau}-2\pi\right)\delta_{t,0},
\end{equation}
which also reduces to
\begin{equation}
R(t) = \frac{2}{\pi} \frac{\sin^2 (t/\tau)}{t^2}.
\end{equation}
Therefore, the form of $R(t)$ remains the same as in the Hamiltonian case, regardless of the range of $\tau$. One subtle difference is that the normalization of $R(t)$ changes. While
\begin{equation}
\int_0^{+\infty} R(t) \mathrm dt = \frac{1}{2} \int_{-\infty}^{+\infty} R(t) \mathrm dt = \frac{1}{2} \tilde R(0) = \frac{1}{\tau},
\end{equation}
in the discrete time case, this becomes
\begin{equation}
\sum_{t=-\infty}^{+\infty} R(t) = \frac{2}{\tau}.
\end{equation}
Therefore,
\begin{equation}
\sum_{t=0}^{+\infty} R(t) = \frac{1}{\tau} + \frac{R(0)}{2} = \frac{1}{\tau}\left(1+\frac{1}{\pi\tau}\right).
\end{equation}

To conclude, we offer a full statement of the thermalization bound in the Floquet case. Assuming that for all operators $A$ such that
\begin{align}
\|A\| & = 1,  \\
\|U^\dagger A U- e^{i\lambda} A \|& \leq 2\sin \frac{1}{2\tau} \text{ for some }\lambda , \\
\mathrm{tr}[AQ_i]& = 0 \text{ for all }i,
\end{align}
there is
\begin{equation}
\|A\|_\calE^2 \leq \nu,
\end{equation}
then,
\begin{multline}
\expt_{\psi\sim \calE}\sum_{t=0}^{+\infty} \frac{2}{\pi} \frac{\sin^2 (t/\tau)}{t^2/\tau} \left[\langle \psi|O(t)|\psi\rangle - \langle \psi|\calQ[O]|\psi\rangle\right]^2 \\
\leq 2\nu \|O\|^2,
\end{multline}
where $\calQ[O]$ is the orthogonal projection of $O$ onto the subspace spanned by the IOMs $\{Q_i\}$. For generic Floquet unitaries, no orthogonalization is needed (apart from making the operator traceless, corresponding to $\{Q_i\}=\{I\}$). The set $\{Q_i\}$ becomes relevant when, for example, we have a $U(1)$-conserving Floquet dynamics, in which case we will take $\{Q_i\}$ to be the powers of the total charge operator.

\section{Ensemble variance norm for random matrix product states}
\label{sec:RMPS}

We derive the ensemble variance norm for random matrix product states of a fixed bond dimension. Our derivation is based on the averaging of Haar unitary ensemble matrices~\cite{collinsIntegrationRespectHaar2006}, which has been extensively used in the study of random unitary circuits~\cite{skinnerMeasurementInducedPhaseTransitions2019,nahumQuantumEntanglementGrowth2017a,nahumOperatorSpreadingRandom2018d,fisherRandomQuantumCircuits2023} and similar studies of random matrix product states~\cite{garneroneStatisticalPropertiesRandom2010,garneroneTypicalityRandomMatrix2010,garneroneGeneralizedQuantumMicrocanonical2013,rolandiExtensiveRenyiEntropies2020,haferkampEmergentStatisticalMechanics2021,chenMagicRandomMatrix2024,dowlingFreeIndependenceUnitary2025}.

Let us consider a MPS under periodic boundary conditions of the following form,
\begin{equation}
|\psi\rangle = \begin{tikzpicture}[
    tensor/.style={rectangle, draw, thick, minimum size=0.3cm},
    baseline=(A1.center),
    scale=0.7,
    every node/.style={font=\normalsize}
]

\node[tensor] (A1) at (0.5, 0) {$A_1$};
\node[tensor] (A2) at (2.5, 0) {$A_2$};
\node[tensor] (An) at (6.0, 0) {$A_L$};

\draw[thick] (A1.east) -- (A2.west) node[midway, above, yshift=0.1cm] {$s_1$};
\draw[thick] (A2.east) -- ++(0.5,0) node[above,yshift=0.1cm] {$s_2$};
\node[right=0.5cm of A2] (dots1) {$\dots$};
\draw[thick] (An.west) -- ++(-0.5,0) node[above,yshift=0.1cm,xshift=-0.1cm] {$s_{L-1}$};
\draw[thick] (A1.west) -- ++(-0.5,0) node[left] {$s_0$};
\draw[thick] (An.east) -- ++(0.5,0) node[right] (sn) {$s_L$}
;
\node[above=0.4cm of sn] {$s_0$};
\node[above=0.2cm of sn, xshift=0.2cm, rotate=90] {$=$};

\draw[thick] (A1.north) -- ++(0,0.5) node[above] {$i_1$};
\draw[thick] (A2.north) -- ++(0,0.5) node[above] {$i_2$};
\draw[thick] (An.north) -- ++(0,0.5) node[above] {$i_L$};
\node[right=0.5cm of A2,yshift=0.9cm] (dots2) {$\dots$};
\end{tikzpicture}.
\end{equation}
Assume that the bond indices $s_k$ have dimension $\chi$, and the site indices $i_k$ have dimension $d$. We also assume that the blocks are in normal form, with
\begin{equation}
\sum_{i_n,s_n} (A_n)_{s_{n-1} i_n s_n} (A_n)_{s_{n-1}^\prime i_n s_n}^\ast = \delta_{s_{n-1},s_{n-1}^\prime}.
\end{equation}
Equivalently, we can envision $A_n$ as part of the unitary matrix $U_n$, whose size is $q \times q$ with $q=d\chi$, where the index can be written as $(i,s)$ with $i=1,\dots,d$ and $s=1,\dots,\chi$, and let $(A_n)_{s_{n-1} i_n s_n} = (U_n)_{(1,s_{n-1}),(i_n,s_n)}$. An MPS that has this form is called unitarily embeddable. We consider the set of unitarily embeddable RMPSs, where the randomness is characterized by individually sampling each $U_n$ from the Haar ensemble. Consider the formulae~\cite{collinsIntegrationRespectHaar2006}
\begin{equation}
\overline{U_{ij} U^\ast_{i^\prime, j^\prime}} = \frac{1}{q} \delta_{ii^\prime}\delta_{jj^\prime},
\end{equation}
and
\begin{align}
&\overline{U_{i_1j_1} U_{i_2j_2} U^\ast_{i_1^\prime, j_1^\prime} U^\ast_{i_2^\prime, j_2^\prime}} \nonumber \\ 
= & \frac{1}{q^2-1} \bigg[
\delta_{i_1i_1^\prime}\delta_{i_2i_2^\prime}\delta_{j_1j_1^\prime} \delta_{j_2j_2^\prime}
+ \delta_{i_1i_2^\prime}\delta_{i_2i_1^\prime}\delta_{j_1j_2^\prime} \delta_{j_2j_1^\prime} \nonumber \\
&
- \frac{1}{q} \delta_{i_1i_1^\prime}\delta_{i_2i_2^\prime} \delta_{j_1j_2^\prime} \delta_{j_2j_1^\prime}
- \frac{1}{q} \delta_{i_1i_2^\prime}\delta_{i_2i_1^\prime} \delta_{j_1j_1^\prime} \delta_{j_2j_2^\prime} \bigg].
\end{align}
This gives 
\begin{equation}
\begin{tikzpicture}[
    tensor/.style={rectangle, draw, thick, minimum size=0.3cm},
    baseline=(current bounding box.center),
    scale=0.7,
    every node/.style={font=\normalsize}
]
\node[tensor] (A) at (0, -1.7) {$A$};
\node[tensor] (As) at (0, 1.7) {$A^\ast$};

\draw[thick] (A.west) -- ++(-0.5,0) node[left] {$r$};
\draw[thick] (A.east) -- ++(0.5,0) node[right] {$s$};
\draw[thick] (As.west) -- ++(-0.5,0) node[left] {$r^\prime$};
\draw[thick] (As.east) -- ++(0.5,0) node[right] {$s^\prime$};
\draw[thick] (A.north) -- ++(0,0.5) node[above] {$i$};
\draw[thick] (As.south) -- ++(0,-0.5) node[below] {$i^\prime$};

\end{tikzpicture}
=\frac{1}{d\chi}
\begin{tikzpicture}[
    tensor/.style={rectangle, draw, thick, minimum size=0.3cm},
    baseline=(current bounding box.center),
    scale=0.7,
    every node/.style={font=\normalsize}
]
\node (r) at (-2, -1.7) {$r$};
\node (s) at (2, -1.7) {$s$};
\node (rp) at (-2, 1.7) {$r^\prime$};
\node (sp) at (2, 1.7) {$s^\prime$};
\node (i) at (0, -1) {$i$};
\node (ip) at (0, 1) {$i^\prime$};
\draw[thick] (r) to[in=0, out=0] (rp);
\draw[thick] (s) to[in=180, out=180] (sp);
\draw[thick] (i) to[in=-90, out=90] (ip);
\end{tikzpicture}.
\end{equation}
Average over $A$ ($U$) is understood on the left-hand side. Therefore,
\begin{equation}
\expt_{\psi\sim\chi\text{-RMPS}} \left[ |\psi\rangle \langle \psi| \right] = \prod_{n} \frac{1}{d} \delta_{i_ni_n^\prime} = \frac{I}{D}.
\end{equation}
The second moment is more complicated, with
\begin{widetext}
\begin{align}
\begin{tikzpicture}[
    tensor/.style={rectangle, draw, thick, minimum size=0.3cm},
    baseline=(current bounding box.center),
    scale=0.7,
    every node/.style={font=\normalsize}
]
\node[tensor] (A) at (0, -1.7) {$A$};
\node[tensor] (As) at (0, 1.7) {$A^\ast$};
\draw[thick] (A.west) -- ++(-0.5,0) node[left] {$r_1$};
\draw[thick] (A.east) -- ++(0.5,0) node[right] {$s_1$};
\draw[thick] (As.west) -- ++(-0.5,0) node[left] {$r_1^\prime$};
\draw[thick] (As.east) -- ++(0.5,0) node[right] {$s_1^\prime$};
\draw[thick] (A.north) -- ++(0,0.5) node[above] {$i_1$};
\draw[thick] (As.south) -- ++(0,-0.5) node[below] {$i_1^\prime$};
\end{tikzpicture}
\begin{tikzpicture}[
    tensor/.style={rectangle, draw, thick, minimum size=0.3cm},
    baseline=(current bounding box.center),
    scale=0.7,
    every node/.style={font=\normalsize}
]
\node[tensor] (A) at (0, -1.7) {$A$};
\node[tensor] (As) at (0, 1.7) {$A^\ast$};
\draw[thick] (A.west) -- ++(-0.5,0) node[left] {$r_2$};
\draw[thick] (A.east) -- ++(0.5,0) node[right] {$s_2$};
\draw[thick] (As.west) -- ++(-0.5,0) node[left] {$r_2^\prime$};
\draw[thick] (As.east) -- ++(0.5,0) node[right] {$s_2^\prime$};
\draw[thick] (A.north) -- ++(0,0.5) node[above] {$i_2$};
\draw[thick] (As.south) -- ++(0,-0.5) node[below] {$i_2^\prime$};
\end{tikzpicture}
= & \frac{1}{q^2-1} \left[
\begin{tikzpicture}[
    baseline=(current bounding box.center),
    scale=0.7,
    every node/.style={font=\normalsize}
]
\node (i1p) at (0, 1) {$i_1^\prime$};
\node (i1) at (0, -1) {$i_1$};
\node (i2p) at (0.5, 1) {$i_2^\prime$};
\node (i2) at (0.5, -1) {$i_2$};
\draw[thick] (i1) to (i1p);
\draw[thick] (i2) to (i2p);
\end{tikzpicture}
\begin{tikzpicture}[
    baseline=(current bounding box.center),
    scale=0.7,
    every node/.style={font=\normalsize}
]
\node (r1p) at (0, 1.5) {$r_1^\prime$};
\node (r2p) at (0, -0.5) {$r_2^\prime$};
\node (r1) at (0, 0.5) {$r_1$};
\node (r2) at (0, -1.5) {$r_2$};
\draw[thick] (r1p) to[in=0, out=0] (r1);
\draw[thick] (r2p) to[in=0, out=0] (r2);
\end{tikzpicture}
\begin{tikzpicture}[
    baseline=(current bounding box.center),
    scale=0.7,
    every node/.style={font=\normalsize}
]
\node (s1p) at (0, 1.5) {$s_1^\prime$};
\node (s2p) at (0, -0.5) {$s_2^\prime$};
\node (s1) at (0, 0.5) {$s_1$};
\node (s2) at (0, -1.5) {$s_2$};
\draw[thick] (s1p) to[in=180, out=180] (s1);
\draw[thick] (s2p) to[in=180, out=180] (s2);
\end{tikzpicture}
+
\begin{tikzpicture}[
    baseline=(current bounding box.center),
    scale=0.7,
    every node/.style={font=\normalsize}
]
\node (i1p) at (0, 1) {$i_1^\prime$};
\node (i1) at (0, -1) {$i_1$};
\node (i2p) at (0.5, 1) {$i_2^\prime$};
\node (i2) at (0.5, -1) {$i_2$};
\draw[thick] (i1) to (i2p);
\draw[thick] (i2) to (i1p);
\end{tikzpicture}
\begin{tikzpicture}[
    baseline=(current bounding box.center),
    scale=0.7,
    every node/.style={font=\normalsize}
]
\node (r1p) at (0, 1.5) {$r_1^\prime$};
\node (r2p) at (0, -0.5) {$r_2^\prime$};
\node (r1) at (0, 0.5) {$r_1$};
\node (r2) at (0, -1.5) {$r_2$};
\draw[thick] (r1p) to[in=0, out=0] (r2);
\draw[thick] (r2p) to[in=0, out=0] (r1);
\end{tikzpicture}
\begin{tikzpicture}[
    baseline=(current bounding box.center),
    scale=0.7,
    every node/.style={font=\normalsize}
]
\node (s1p) at (0, 1.5) {$s_1^\prime$};
\node (s2p) at (0, -0.5) {$s_2^\prime$};
\node (s1) at (0, 0.5) {$s_1$};
\node (s2) at (0, -1.5) {$s_2$};
\draw[thick] (s1p) to[in=180, out=180] (s2);
\draw[thick] (s2p) to[in=180, out=180] (s1);
\end{tikzpicture}
\right] \nonumber \\
&-\frac{1}{q(q^2-1)}\left[
\begin{tikzpicture}[
    baseline=(current bounding box.center),
    scale=0.7,
    every node/.style={font=\normalsize}
]
\node (i1p) at (0, 1) {$i_1^\prime$};
\node (i1) at (0, -1) {$i_1$};
\node (i2p) at (0.5, 1) {$i_2^\prime$};
\node (i2) at (0.5, -1) {$i_2$};
\draw[thick] (i1) to (i1p);
\draw[thick] (i2) to (i2p);
\end{tikzpicture}
\begin{tikzpicture}[
    baseline=(current bounding box.center),
    scale=0.7,
    every node/.style={font=\normalsize}
]
\node (r1p) at (0, 1.5) {$r_1^\prime$};
\node (r2p) at (0, -0.5) {$r_2^\prime$};
\node (r1) at (0, 0.5) {$r_1$};
\node (r2) at (0, -1.5) {$r_2$};
\draw[thick] (r1p) to[in=0, out=0] (r2);
\draw[thick] (r2p) to[in=0, out=0] (r1);
\end{tikzpicture}
\begin{tikzpicture}[
    baseline=(current bounding box.center),
    scale=0.7,
    every node/.style={font=\normalsize}
]
\node (s1p) at (0, 1.5) {$s_1^\prime$};
\node (s2p) at (0, -0.5) {$s_2^\prime$};
\node (s1) at (0, 0.5) {$s_1$};
\node (s2) at (0, -1.5) {$s_2$};
\draw[thick] (s1p) to[in=180, out=180] (s1);
\draw[thick] (s2p) to[in=180, out=180] (s2);
\end{tikzpicture}
+
\begin{tikzpicture}[
    baseline=(current bounding box.center),
    scale=0.7,
    every node/.style={font=\normalsize}
]
\node (i1p) at (0, 1) {$i_1^\prime$};
\node (i1) at (0, -1) {$i_1$};
\node (i2p) at (0.5, 1) {$i_2^\prime$};
\node (i2) at (0.5, -1) {$i_2$};
\draw[thick] (i1) to (i2p);
\draw[thick] (i2) to (i1p);
\end{tikzpicture}
\begin{tikzpicture}[
    baseline=(current bounding box.center),
    scale=0.7,
    every node/.style={font=\normalsize}
]
\node (r1p) at (0, 1.5) {$r_1^\prime$};
\node (r2p) at (0, -0.5) {$r_2^\prime$};
\node (r1) at (0, 0.5) {$r_1$};
\node (r2) at (0, -1.5) {$r_2$};
\draw[thick] (r1p) to[in=0, out=0] (r1);
\draw[thick] (r2p) to[in=0, out=0] (r2);
\end{tikzpicture}
\begin{tikzpicture}[
    baseline=(current bounding box.center),
    scale=0.7,
    every node/.style={font=\normalsize}
]
\node (s1p) at (0, 1.5) {$s_1^\prime$};
\node (s2p) at (0, -0.5) {$s_2^\prime$};
\node (s1) at (0, 0.5) {$s_1$};
\node (s2) at (0, -1.5) {$s_2$};
\draw[thick] (s1p) to[in=180, out=180] (s2);
\draw[thick] (s2p) to[in=180, out=180] (s1);
\end{tikzpicture}
\right]. \label{eq:A-second-moment}
\end{align}
\end{widetext}
We can see that the configuration with respect to $r$ or $s$ has only two possibilities: it is either one of
\begin{equation}
|\sigma\rangle = \begin{tikzpicture}[
    baseline=(current bounding box.center),
    scale=0.7,
    every node/.style={font=\normalsize}
]
\node (s1p) at (0, 1.5) {};
\node (s2p) at (0, -0.5) {};
\node (s1) at (0, 0.5) {};
\node (s2) at (0, -1.5) {};
\draw[thick] (s1p) to[in=0, out=0] (s1);
\draw[thick] (s2p) to[in=0, out=0] (s2);
\end{tikzpicture}, \quad
|\tau\rangle = \begin{tikzpicture}[
    baseline=(current bounding box.center),
    scale=0.7,
    every node/.style={font=\normalsize}
]
\node (s1p) at (0, 1.5) {};
\node (s2p) at (0, -0.5) {};
\node (s1) at (0, 0.5) {};
\node (s2) at (0, -1.5) {};
\draw[thick] (s1p) to[in=0, out=0] (s2);
\draw[thick] (s2p) to[in=0, out=0] (s1);
\end{tikzpicture}. \label{eq:sigma-and-tau}
\end{equation}
We note that when contracting on neighboring sites, we have
\begin{equation}
\langle \sigma | \sigma \rangle = \langle \tau | \tau \rangle  = \chi^2, 
\end{equation}
\begin{equation}
\langle \sigma | \tau \rangle  = \chi.
\end{equation}
We can construct a two-dimensional representation,
\begin{equation}
|\sigma \rangle  = \left(\sqrt{\frac{\chi(\chi+1)}{2}}, \sqrt{\frac{\chi(\chi-1)}{2}}\right)^T, \label{eq:transfer-sigma}
\end{equation}
\begin{equation}
|\tau\rangle  = \left(\sqrt{\frac{\chi(\chi+1)}{2}}, -\sqrt{\frac{\chi(\chi-1)}{2}}\right)^T. \label{eq:transfer-tau}
\end{equation}
We let kets denote the configuration on $r$, and bras denote the configuration on $s$, then, Eq.~\eqref{eq:A-second-moment} can be simplified as
\begin{widetext}
\begin{equation}
\begin{tikzpicture}[
    tensor/.style={rectangle, draw, thick, minimum size=0.3cm},
    baseline=(current bounding box.center),
    scale=0.7,
    every node/.style={font=\normalsize}
]
\node[tensor] (A) at (0, -1.7) {$A$};
\node[tensor] (As) at (0, 1.7) {$A^\ast$};
\draw[thick] (A.west) -- ++(-0.5,0) node[left] {$r_1$};
\draw[thick] (A.east) -- ++(0.5,0) node[right] {$s_1$};
\draw[thick] (As.west) -- ++(-0.5,0) node[left] {$r_1^\prime$};
\draw[thick] (As.east) -- ++(0.5,0) node[right] {$s_1^\prime$};
\draw[thick] (A.north) -- ++(0,0.5) node[above] {$i_1$};
\draw[thick] (As.south) -- ++(0,-0.5) node[below] {$i_1^\prime$};
\end{tikzpicture}
\begin{tikzpicture}[
    tensor/.style={rectangle, draw, thick, minimum size=0.3cm},
    baseline=(current bounding box.center),
    scale=0.7,
    every node/.style={font=\normalsize}
]
\node[tensor] (A) at (0, -1.7) {$A$};
\node[tensor] (As) at (0, 1.7) {$A^\ast$};
\draw[thick] (A.west) -- ++(-0.5,0) node[left] {$r_2$};
\draw[thick] (A.east) -- ++(0.5,0) node[right] {$s_2$};
\draw[thick] (As.west) -- ++(-0.5,0) node[left] {$r_2^\prime$};
\draw[thick] (As.east) -- ++(0.5,0) node[right] {$s_2^\prime$};
\draw[thick] (A.north) -- ++(0,0.5) node[above] {$i_2$};
\draw[thick] (As.south) -- ++(0,-0.5) node[below] {$i_2^\prime$};
\end{tikzpicture}
 =\frac{1}{q^2-1} \left[
\begin{tikzpicture}[
    baseline=(current bounding box.center),
    scale=0.7,
    every node/.style={font=\normalsize}
]
\node (i1p) at (0, 1) {$i_1^\prime$};
\node (i1) at (0, -1) {$i_1$};
\node (i2p) at (0.5, 1) {$i_2^\prime$};
\node (i2) at (0.5, -1) {$i_2$};
\draw[thick] (i1) to (i1p);
\draw[thick] (i2) to (i2p);
\end{tikzpicture}
|\sigma\rangle \langle \sigma|
+
\begin{tikzpicture}[
    baseline=(current bounding box.center),
    scale=0.7,
    every node/.style={font=\normalsize}
]
\node (i1p) at (0, 1) {$i_1^\prime$};
\node (i1) at (0, -1) {$i_1$};
\node (i2p) at (0.5, 1) {$i_2^\prime$};
\node (i2) at (0.5, -1) {$i_2$};
\draw[thick] (i1) to (i2p);
\draw[thick] (i2) to (i1p);
\end{tikzpicture}
|\tau\rangle \langle \tau|
\right] -\frac{1}{q(q^2-1)}\left[
\begin{tikzpicture}[
    baseline=(current bounding box.center),
    scale=0.7,
    every node/.style={font=\normalsize}
]
\node (i1p) at (0, 1) {$i_1^\prime$};
\node (i1) at (0, -1) {$i_1$};
\node (i2p) at (0.5, 1) {$i_2^\prime$};
\node (i2) at (0.5, -1) {$i_2$};
\draw[thick] (i1) to (i1p);
\draw[thick] (i2) to (i2p);
\end{tikzpicture}
|\tau\rangle \langle \sigma|
+
\begin{tikzpicture}[
    baseline=(current bounding box.center),
    scale=0.7,
    every node/.style={font=\normalsize}
]
\node (i1p) at (0, 1) {$i_1^\prime$};
\node (i1) at (0, -1) {$i_1$};
\node (i2p) at (0.5, 1) {$i_2^\prime$};
\node (i2) at (0.5, -1) {$i_2$};
\draw[thick] (i1) to (i2p);
\draw[thick] (i2) to (i1p);
\end{tikzpicture}
|\sigma\rangle \langle \tau|
\right].
\end{equation}
\end{widetext}
We can also write down the 2-by-2 matrix representations of the $|\sigma\rangle$ and $|\tau\rangle$ involved, to get
\begin{equation}
|\sigma \rangle \langle \sigma | = \begin{pmatrix}\frac{\chi(\chi+1)}{2} & \frac{\chi\sqrt{\chi^2-1}}{2} \\ \frac{\chi\sqrt{\chi^2-1}}{2} &  \frac{\chi(\chi-1)}{2}\end{pmatrix},
\end{equation}
\begin{equation}
|\tau \rangle \langle \tau | = \begin{pmatrix}\frac{\chi(\chi+1)}{2} & -\frac{\chi\sqrt{\chi^2-1}}{2} \\ -\frac{\chi\sqrt{\chi^2-1}}{2} &  \frac{\chi(\chi-1)}{2}\end{pmatrix},
\end{equation}
\begin{equation}
|\tau \rangle \langle \sigma | = \begin{pmatrix} \frac{\chi(\chi+1)}{2} &\frac{\chi\sqrt{\chi^2-1}}{2} \\ -\frac{\chi\sqrt{\chi^2-1}}{2} &-\frac{\chi(\chi-1)}{2} \end{pmatrix},
\end{equation}
\begin{equation}
|\sigma\rangle \langle \tau| = \left(|\tau\rangle \langle \sigma|\right)^T.
\end{equation}

Now, consider stacking two Pauli strings into the middle. We notice that
\begin{widetext}
\noindent
\begin{minipage}{.5\textwidth}
\begin{equation}
\begin{tikzpicture}[
    tensor/.style={rectangle, draw, thick, minimum size=0.3cm},
    baseline=(current bounding box.center),
    scale=0.7,
    every node/.style={font=\normalsize}
]
\node[tensor,
label={[label distance=3mm]north:$i_1^\prime$},
label={[label distance=3mm]south:$i_1$}] (l1) at (0,0) {$\Lambda_a$};
\coordinate (l1n) at (0.75,0) {};
\draw[thick] (l1.north) to[out=90,in=90,looseness=2] (l1n);
\draw[thick] (l1.south) to[out=-90,in=-90,looseness=2] (l1n);
\node[tensor,
label={[label distance=3mm]north:$i_2^\prime$},
label={[label distance=3mm]south:$i_2$}] (l2) at (1.5,0) {$\Lambda_b$};
\coordinate (l2n) at (2.25,0) {};
\draw[thick] (l2.north) to[out=90,in=90,looseness=2] (l2n);
\draw[thick] (l2.south) to[out=-90,in=-90,looseness=2] (l2n);
\end{tikzpicture}
= \tr \Lambda_a \tr \Lambda_b = d^2 \delta_{a,1}\delta_{b,1}, \label{eq:RMPS-site-1} 
\end{equation}
\end{minipage}
\begin{minipage}{.5\textwidth}
\begin{equation}
\begin{tikzpicture}[
    tensor/.style={rectangle, draw, thick, minimum size=0.3cm},
    baseline=(current bounding box.center),
    scale=0.7,
    every node/.style={font=\normalsize}
]
\node[tensor,
label={[label distance=3mm]north:$i_1^\prime$},
label={[label distance=3mm]south:$i_1$}] (l1) at (0,0) {$\Lambda_a$};
\node[tensor,
label={[label distance=3mm]north:$i_2^\prime$},
label={[label distance=3mm]south:$i_2$}] (l2) at (1.5,0) {$\Lambda_b$};
\draw[thick] (l1.north) to[out=90,in=-90,looseness=2] (l2.south);
\draw[thick] (l2.north) to[out=90,in=-90,looseness=2] (l1.south);
\end{tikzpicture}
= \tr \Lambda_a \Lambda_b = d \delta_{a,b}. \label{eq:RMPS-site-2}
\end{equation}
\end{minipage}

From this, we immediately see that the inner product of two different Pauli strings is zero, since Eqs.~\eqref{eq:RMPS-site-1} and \eqref{eq:RMPS-site-2} both vanish unless $a=b$. We would only have to consider two cases: $a=b=1$ and $a=b\neq 1$. We would have, respectively,
\begin{equation}
\begin{tikzpicture}[
    tensor/.style={rectangle, draw, thick, minimum size=0.3cm},
    baseline=(current bounding box.center),
    scale=0.7,
    every node/.style={font=\normalsize}
]
\node[tensor] (A) at (0, -1.5) {$A$};
\node[tensor] (As) at (0, 1.5) {$A^\ast$};
\node[tensor] (L) at (0,0) {$I$};
\draw[thick] (A.west) -- ++(-0.5,0) node[left] {$r_1$};
\draw[thick] (A.east) -- ++(0.5,0) node[right] {$s_1$};
\draw[thick] (As.west) -- ++(-0.5,0) node[left] {$r_1^\prime$};
\draw[thick] (As.east) -- ++(0.5,0) node[right] {$s_1^\prime$};
\draw[thick] (A.north) -- (L.south);
\draw[thick] (As.south) -- (L.north);
\end{tikzpicture}
\begin{tikzpicture}[
    tensor/.style={rectangle, draw, thick, minimum size=0.3cm},
    baseline=(current bounding box.center),
    scale=0.7,
    every node/.style={font=\normalsize}
]
\node[tensor] (A) at (0, -1.5) {$A$};
\node[tensor] (As) at (0, 1.5) {$A^\ast$};
\node[tensor] (L) at (0,0) {$I$};
\draw[thick] (A.west) -- ++(-0.5,0) node[left] {$r_2$};
\draw[thick] (A.east) -- ++(0.5,0) node[right] {$s_2$};
\draw[thick] (As.west) -- ++(-0.5,0) node[left] {$r_2^\prime$};
\draw[thick] (As.east) -- ++(0.5,0) node[right] {$s_2^\prime$};
\draw[thick] (A.north) -- (L.south);
\draw[thick] (As.south) -- (L.north);
\end{tikzpicture}
 =
 \begin{pmatrix}
     \cfrac{(d+1) (\chi+1)}{2 (\chi d+1)} & \cfrac{(d-1) \sqrt{\chi^2-1}}{2 (\chi d+1)} \\
 \cfrac{(d-1) \sqrt{\chi^2-1}}{2 (\chi d-1)} & \cfrac{(d+1) (\chi-1)}{2 (\chi d-1)}
 \end{pmatrix} =: B^{11}, \label{eq:RMPS-MPO-1}
 \end{equation}
\begin{equation}
\begin{tikzpicture}[
    tensor/.style={rectangle, draw, thick, minimum size=0.3cm},
    baseline=(current bounding box.center),
    scale=0.7,
    every node/.style={font=\normalsize}
]
\node[tensor] (A) at (0, -1.5) {$A$};
\node[tensor] (As) at (0, 1.5) {$A^\ast$};
\node[tensor] (L) at (0,0) {$\Lambda_a$};
\draw[thick] (A.west) -- ++(-0.5,0) node[left] {$r_1$};
\draw[thick] (A.east) -- ++(0.5,0) node[right] {$s_1$};
\draw[thick] (As.west) -- ++(-0.5,0) node[left] {$r_1^\prime$};
\draw[thick] (As.east) -- ++(0.5,0) node[right] {$s_1^\prime$};
\draw[thick] (A.north) -- (L.south);
\draw[thick] (As.south) -- (L.north);
\end{tikzpicture}
\begin{tikzpicture}[
    tensor/.style={rectangle, draw, thick, minimum size=0.3cm},
    baseline=(current bounding box.center),
    scale=0.7,
    every node/.style={font=\normalsize}
]
\node[tensor] (A) at (0, -1.5) {$A$};
\node[tensor] (As) at (0, 1.5) {$A^\ast$};
\node[tensor] (L) at (0,0) {$\Lambda_a$};
\draw[thick] (A.west) -- ++(-0.5,0) node[left] {$r_2$};
\draw[thick] (A.east) -- ++(0.5,0) node[right] {$s_2$};
\draw[thick] (As.west) -- ++(-0.5,0) node[left] {$r_2^\prime$};
\draw[thick] (As.east) -- ++(0.5,0) node[right] {$s_2^\prime$};
\draw[thick] (A.north) -- (L.south);
\draw[thick] (As.south) -- (L.north);
\end{tikzpicture}
 = \begin{pmatrix}
     \cfrac{\chi +1}{2(\chi d+1)} & -\cfrac{\sqrt{\chi ^2-1}}{2(\chi d+1)} \\
 -\cfrac{\sqrt{\chi ^2-1}}{2(\chi d-1)} & \cfrac{\chi -1}{2(\chi d-1)}
 \end{pmatrix} =: B^{aa},\quad a \neq 1. \label{eq:RMPS-MPO-2}
\end{equation}
\end{widetext}
We have treated the left-hand sides of Eqs.~\eqref{eq:RMPS-MPO-1} and \eqref{eq:RMPS-MPO-2} as transfer matrices from $(s_1,s_2,s_1^\prime,s_2^\prime)$ to $(r_1,r_2,r_1^\prime,r_2^\prime)$, where the $s$- and $r$-space are encoded with a two-dimensional representation given by Eqs.~\eqref{eq:sigma-and-tau}, \eqref{eq:transfer-sigma}, and \eqref{eq:transfer-tau}. Combining all the results above, we get the MPO representation desired,
\begin{multline}
\langle \Lambda_{a_1}\dots \Lambda_{a_L}, \Lambda_{b_1}\dots \Lambda_{b_L}\rangle_{\chi\text{-RMPS}} \\
= \mathrm{tr} \left[ B^{a_1b_1} B^{a_2b_2}\dots B^{a_Lb_L}\right]. \label{eq:B-sum}
\end{multline}
$B^{ab}$ for $a\neq b$ are understood as zero.

It would be convenient to make a change of basis such that $B^{11}$ is diagonal. The result is
\begin{align}
B^{11} & = \begin{pmatrix} 1 & \\ & \frac{d(\chi^2-1)}{\chi^2d^2 -1} \end{pmatrix}, \\
B^{aa} & = \begin{pmatrix} \frac{1}{\chi^2 d+1} & \frac{d\chi(\chi-1)}{(\chi d+1)(\chi^2 d+1)} \\ \frac{d\chi(\chi+1)}{(\chi d-1)(\chi^2 d+1)} & \frac{d^2 \chi^2 (\chi^2-1)}{(\chi^2 d+1)(\chi^2d^2 -1)} \end{pmatrix},\quad a\neq 1.
\end{align}
It is notable that $B^{aa}$ is rank-1, with
\begin{equation}
B^{aa} = \frac{
\begin{pmatrix}d\chi -1 \\ d\chi (\chi+1)\end{pmatrix}
\begin{pmatrix}d\chi +1 & d\chi(\chi-1)\end{pmatrix}
}{(d\chi^2+1)(d^2\chi^2-1)} .
\end{equation}
This means that
\begin{equation}
\left(B^{aa}\right)^n = \left(\frac{d\chi^2-1}{d^2\chi^2-1}\right)^{n-1} B^{aa}.
\end{equation}
Therefore, consider an operator $P$ of the form
\begin{equation}
P = I^{\otimes m_1} \Lambda^{\otimes n_1} I^{\otimes m_2} \Lambda ^{\otimes n_2}\dots I^{\otimes m_k} \Lambda ^{\otimes n_k},
\end{equation}
where $\Lambda$ are any non-identity Paulis, then
\begin{align}
\|P\|&_{\chi\text{-RMPS}}^2 = \left(\frac{d \chi^2-1}{d^2 \chi^2-1}\right)^{\sum_{j=1}^k n_j}  \nonumber \\ & \times \prod_{j=1}^{k} \frac{d^2\chi^2(\chi^2-1) \left(\frac{d(\chi^2-1)}{d^2 \chi^2-1}\right)^{m_j}+d^2\chi^2-1}{d^2\chi^4 - 1}.
\end{align}
We see that $\sum_{j=1}^k n_j$ is the Pauli weight $|P|$. Therefore, the first part of the expression is an exponential penalization of $|P|$, similar to the RPS norm. The second part, however, makes the RMPS norm dependent on additional information, namely $\{m_j\}$. It is easy to show that the factors in the second part are all less than unity, implying that
\begin{equation}
\|P\|_{\chi\text{-RMPS}}^2 \leq \left(\frac{d \chi^2-1}{d^2 \chi^2-1}\right)^{|P|}. \label{eq:RMPS-norm-upper-bound}
\end{equation}

More specifically, as $\sum_j m_j$ is fixed, from a convexity argument, we can see that given $|P|$, $\|P\|_{\chi\text{-RMPS}}^2$ is maximized when $m_j$ is consolidated in one $m_1=L-|P|$, in which case
\begin{align}
\|P\|&_{\chi\text{-RMPS}}^2 =  \left(\frac{d \chi^2-1}{d^2 \chi^2-1}\right)^{|P|} \nonumber \\ & \times  \frac{d^2\chi^2(\chi^2-1) \left(\frac{d(\chi^2-1)}{d^2 \chi^2-1}\right)^{L-|P|}+d^2\chi^2-1}{d^2\chi^4 - 1}.
\end{align}
In the limit where $L\gg |P|$, this becomes
\begin{equation}
\|P\|_{\chi\text{-RMPS}}^2 = \frac{1}{d\chi^2+1}\left(\frac{d \chi^2-1}{d^2 \chi^2-1}\right)^{|P|-1}.
\end{equation}
Note that this expression assumed $P\neq I$. This implies that the upper bound Eq.~\eqref{eq:RMPS-norm-upper-bound} is asymptotically tight. On the other hand, the multiplicative factor becomes small when $m_j$ is evenly distributed. In the worst case, assume that $k=|P|$ and $m_j\gg 1$ for each $j$, then,
\begin{equation}
\|P\|_{\chi\text{-RMPS}}^2 \approx (d\chi^2+1)^{-|P|}.
\end{equation}
This coincides with the upper bound at $\chi=1$, while being much smaller at $\chi>1$.

\section{Properties of thermally-dressed ensembles} \label{sec:scrooge}

\subsection{The expressions of $\calP_\beta$ and $\calQ_\beta$ } \label{subsec:P-beta-Q-beta}

In Section~\ref{subsec:thermalization-at-finite-temperature}, we introduced the finite-temperature projection $\calP_\beta$, which shall satisfy
\begin{equation}
\|\calP[O^\beta(t)]\|_\calE = \|\calP_\beta[O(t)]\|_{\calE_\beta}.
\end{equation}
Notice that $\|A\|_{\calE_\beta} = \|A^\beta\|_\calE$, this is equivalent to
\begin{equation}
\calP[O^\beta(t)] = \left(\calP_\beta[O(t)]\right)^\beta.
\end{equation}
Since both the projection $\calP$ and the dressing $\cdot^\beta$ commute with time evolution, we can drop the $t$ argument. Therefore, the projection $\calP_\beta$ would be defined by
\begin{equation}
\left(\calP_\beta[O]\right)^\beta = \calP[O^\beta],
\end{equation}
and $\calQ_\beta[O] = O-\calP_\beta[O]$, accordingly.

To derive an explicit form of $\calQ_\beta[O]$, note that
\begin{equation}
\calQ[O^\beta] = \sum_i Q_i \langle Q_i,O^\beta\rangle = \sum_i Q_i \langle Q_i^\beta,O\rangle.
\end{equation}
Let $Q_i^{\bar\beta}$ denote the operator such that $\left(Q_i^{\bar\beta}\right)^\beta = Q_i$. Explicitly,
\begin{equation}
O^{\bar\beta} = \langle e^{-\beta H}\rangle e^{\frac{\beta H}{2}} O e^{\frac{\beta H}{2}}.
\end{equation}
Then, we see
\begin{equation}
\calQ_\beta[O] = \sum_i Q_i^{\bar\beta} \langle Q_i^\beta,O\rangle.
\end{equation}
Importantly,
\begin{equation}
\langle Q_j^\beta,\calQ_\beta[O]\rangle = \sum_i \langle Q_j^\beta, Q_i^{\bar\beta}\rangle \langle Q_i^\beta,O\rangle,
\end{equation}
with
\begin{equation}
\langle Q_j^\beta, Q_i^{\bar\beta}\rangle = \left\langle Q_j, \left(Q_i^{\bar\beta}\right)^\beta \right\rangle = \langle Q_j,Q_i\rangle = \delta_{ij},
\end{equation}
we have
\begin{equation}
\left\langle Q_j^\beta,\calQ_\beta[O] \right\rangle = \langle Q_j^\beta,O\rangle.
\end{equation}
That is, $\calQ_\beta[O]$ preserves the projection of $O$ onto the subspace spanned by the operators $\{Q_i^\beta\}$, although it is not an orthogonal projection onto this subspace.

\subsection{Properties of unnormalized Scrooge-like ensembles}

Scrooge ensembles~\cite{jozsaLowerBoundAccessible1994,parfionovLazyQuantumEnsembles2006,mokNatureStingyUniversality2026,mannaProjectedEnsembleSystem2025,mcginleyScroogeEnsembleManybody2025,cotlerEmergentQuantumState2023,liuDeepThermalizationGaussian2024,markMaximumEntropyPrinciple2024a} are a generalization of the Haar ensemble in the case where the first moment, or the density matrix, of the ensemble is not a maximally mixed state. Given a density matrix $\rho$, the Scrooge ensemble takes a Haar random state $|\psi\rangle$ and deform it to $\sqrt{\rho}|\psi\rangle$.

There are two versions of the Scrooge ensemble, the normalized and the unnormalized. The normalized ensemble has
\begin{equation}
\text{Scr}_\calE:\begin{cases}
|\Phi\rangle = \frac{\sqrt{\rho} |\psi\rangle}{\sqrt{\langle \psi|\rho|\psi\rangle}}, \\
p(\Phi) = \langle \psi|D\rho|\psi\rangle  p_\calE(\psi).
\end{cases}
\end{equation}
$\calE$ is a ``base'' ensemble. In the usual definition of the Scrooge ensemble, $\calE$ is chosen as the Haar ensemble. Here, we keep it general, up to $\expt_{\psi\sim\calE} |\psi\rangle \langle \psi|=I/D$, where $D$ is the dimension of the Hilbert space. In this ensemble, $\langle \Phi|\Phi\rangle=1$. On the other hand, the unnormalized Scrooge ensemble is defined by
\begin{equation}
\widetilde{\text{Scr}}_\calE:\begin{cases}
|\Phi\rangle = \sqrt{D\rho} |\psi\rangle, \\
p(\Phi) =  p_\calE(\psi).
\end{cases}
\end{equation}
It is obvious that in both cases, $\expt|\Phi\rangle \langle \Phi|=\rho$. The difference occurs at higher moments. For example, at the second moment, or equivalently, for the ensemble covariance inner product, one has
\begin{equation}
\langle A,B\rangle_{\widetilde{\text{Scr}}_\calE} = \left\langle \sqrt{D\rho} A\sqrt{D\rho}, \sqrt{D\rho} B\sqrt{D\rho} \right\rangle_\calE.
\end{equation}
This is exactly the thermal dressing we have discussed in the main text. However, for the normalized Scrooge ensemble, the expression is
\begin{multline}
\langle A,B\rangle_{\text{Scr}_\calE} \\ = \expt_{\psi\sim\calE} \frac{\langle \psi|\sqrt{D\rho}A^\dagger\sqrt{D\rho}|\psi\rangle \langle \psi|\sqrt{D\rho}B\sqrt{D\rho}|\psi\rangle}{\langle \psi|D\rho|\psi\rangle}.
\end{multline}
This expression does not adopt a clear simplification due to the presence of $|\psi\rangle$ on the denominator.

We see that the unnormalized Scrooge ensemble enables more analytical calculations. One might worry that the unnormalized nature of the ensemble makes it less physical. We show that the ensemble variance norm of $\widetilde{\text{Scr}}_\calE$ is actually proportional to another normalized ensemble,
\begin{equation}
\text{Scr}^{(2)}_\calE:\begin{cases}
|\Phi\rangle = \frac{\sqrt{\rho} |\psi\rangle}{\sqrt{\langle \psi|\rho|\psi\rangle}}, \\
p(\Phi) =  \frac{\langle \psi|D\rho|\psi\rangle^2}{\|D\rho\|_\calE^2}  p_\calE(\psi).
\end{cases}
\end{equation}
It is straightforward to see that
\begin{equation}
\langle A,B\rangle_{\text{Scr}^{(2)}_\calE} = \frac{\langle A,B\rangle_{\widetilde{\text{Scr}}_\calE}}{\|D\rho\|_\calE^2}.
\end{equation}
Therefore, in a sense, we can see $\langle A,B\rangle_{\widetilde{\text{Scr}}_\calE}/\|D\rho\|_\calE^2$ as a ``physical'' ensemble inner product. The tradeoff is that
\begin{equation}
\rho_{\text{Scr}^{(2)}_\calE} = \frac{\sqrt{\rho}\calM_\calE[\rho]\sqrt{\rho}}{D\|\rho\|_\calE^2} \neq \rho.
\end{equation}

\section{Joint numerical range and the identification of simple slow operators} \label{sec:jnr}

Recall that the region in the $\nu$-$\tau$ plane occupied by all legitimate operators is given by
\begin{equation}
\calR = \left\{\left(1/\|[H,A]\|, \|A\|_\calE^2\right) \middle| \|A\|=1, A \perp \{Q_i\} \right\}.
\end{equation}
We will drop the constraint $A \perp \{Q_i\}$ for convenience, by implicitly restricting $[H,\cdot]$ and $\calM_\calE$ to the subspace orthogonal to all $\{Q_i\}$. Details of implementing this numerically is given in Appendix~\ref{sec:thermalization-finite-time-projections}.

The construction $\calR$ is highly similar to the definition of the JNR. Given two Hermitian operators $A$ and $B$, the joint numerical range (JNR) of them is defined as the subset of the plane
\begin{equation}
W(A,B) = \left\{ (v^\dagger Av,v^\dagger Bv) \middle| v^\dagger v=1 \right\}.
\end{equation}
Therefore, $\calR$ is essentially $W([H,\cdot]^2,\calM_\calE)$, with a global transformation on the $\tau$ axis.

It is known that for any Hermitian operators $A$ and $B$, the JNR $W(A,B)$ is always a convex subset of the plane. This result is known as Brickman's theorem. A proof can be found in Refs.~\cite{barvinokCourseConvexity2002,quasicoherent2021brickman}.

The convexity ensures that finding the envelope of $W(A,B)$ via a Lagrange multiplier is always feasible. Consider
\begin{equation}
(B-\theta A)v_\theta = \lambda_\theta v_\theta, \label{eq:jnr-eigen-eq}
\end{equation}
where $\lambda_\theta$ is the largest eigenvalue of the matrix $B-\theta A$. Let $a_\theta = v_\theta^\dagger A v_\theta$, $b_\theta = v_\theta^\dagger B v_\theta$. From the equation, it is obvious that
\begin{equation}
b_\theta - \theta a_\theta = \lambda_\theta.
\end{equation}
Differentiating this gives
\begin{equation}
b_\theta^\prime - \theta a_\theta^\prime - a_\theta = \lambda_\theta^\prime. \label{eq:jnr-diff-1}
\end{equation}
On the other hand, if the eigenvalue $\lambda_\theta$ is not degenerate, then Eq.~\eqref{eq:jnr-eigen-eq} is differentiable with respect to $\theta$, so that
\begin{equation}
-Av_\theta = -(B-\theta A-\lambda_\theta) \frac{\partial v_\theta}{\partial \theta} + \frac{\partial \lambda_\theta}{\partial \theta} v_\theta.
\end{equation}
Contracting with $v_\theta^\dagger$, this implies
\begin{equation}
a_\theta +\lambda_\theta^\prime= 0.
\end{equation}
Together with Eq.~\eqref{eq:jnr-diff-1}, this yields
\begin{equation}
b_\theta^\prime = \theta a_\theta^\prime.
\end{equation}
Therefore, Eq.~\eqref{eq:jnr-eigen-eq} exactly probes the point on the envelope where $\frac{\mathrm db}{\mathrm da} = \theta$. Due to convexity, it is also obvious that $a_\theta^\prime$ and $b_\theta^\prime$ are non-positive.

At a point where the largest eigenvalue of $B-\theta A$ is degenerate, we would have two eigenvectors $v_\theta^{(1)}$ and $v_\theta^{(2)}$. An arbitrary linear combination of them would also be an eigenvector. Therefore, consider
\begin{equation}
v_\theta^{(\phi)} = \cos \phi v_\theta^{(1)} + \sin\phi v_{\theta}^{(2)},
\end{equation}
one would have
\begin{align}
a_\theta^{(\phi)} & = \left(v_\theta^{(\phi)}\right)^\dagger A v_\theta^{(\phi)} \\& = \cos^2\phi a_\theta^{(1)} + \sin^2\phi a_\theta^{(2)} + \sin 2\phi  \re \left[\left(v_\theta^{(1)}\right)^\dagger A v_\theta^{(2)}\right]; \nonumber \\
b_\theta^{(\phi)} & = \left(v_\theta^{(\phi)}\right)^\dagger B v_\theta^{(\phi)} \\ &= \cos^2\phi b_\theta^{(1)} + \sin^2\phi b_\theta^{(2)} + \sin 2\phi  \re \left[\left(v_\theta^{(1)}\right)^\dagger B v_\theta^{(2)}\right] \nonumber.
\end{align}
Notably, we find that
\begin{equation}
b_\theta^{(\phi)}  - \theta a_\theta^{(\phi)} = \lambda_\theta.
\end{equation}
This means that the trajectory of $\left(a_\theta^{(\phi)}, b_\theta^{(\phi)}\right)$ is a line segment connecting $\left(a_\theta^{(1)}, b_\theta^{(1)}\right)$ and $\left(a_\theta^{(2)}, b_\theta^{(2)}\right)$, with the slope being $\theta$. Therefore, degenerate eigenvalues correspond to straight line segments of the boundary.

Furthermore, it can be shown that the boundary must be everywhere differentiable provided that at no point is $v_\theta$ a simultaneous eigenvector of $A$ and $B$. We delay the proof to the later parts of this section.

In conclusion, $\nu(1/\tau^2)$ must be a concave function that is everywhere differentiable, with
\begin{equation}
\theta = \frac{\mathrm d\nu}{\mathrm d(1/\tau^2)} = - \frac{\tau^3}{2} \frac{\mathrm d\nu}{\mathrm d\tau}
\end{equation}
being a monotonically increasing function with respect to $\tau$. Therefore, given any point $(\nu_\ast, \tau_\ast, \theta_\ast)$, for any $\tau > \tau_\ast$, one has
\begin{equation}
\nu(\tau) \leq \nu_\ast - \frac{\theta_\ast}{\tau_\ast^2} + \frac{\theta_\ast}{\tau^2}.
\end{equation}
This gives an upper bound on $\nu(\tau)$ at large $\tau$.

\subsection{Proof of differentiability of the boundary}

We will show that the boundary of the joint numerical range $W(A,B)$ is non-differentiable if and only if $v_\theta$ becomes the simultaneous eigenvector of $A$ and $B$ at some point $\theta$.

Without loss of generality, assume that the non-differentiable point occurs at $a=b=0$. A vector $v$ exists such that $v^\dagger A v = v^\dagger B v=0$. Due to the convexity of the region, the non-differentiable point must be isolated, and left- and right-derivatives at this point must exist. We can further linearly recombine $A$ and $B$ such that the two segments of the boundary have
\begin{equation}
\left.\frac{\mathrm db}{\mathrm da}\right|_{\pm} = \pm \tan \alpha,
\end{equation}
with $W$ lying on the $a\geq 0$ side. Due to convexity, this implies that $W$ must entirely lie in the region bound by $a\geq 0$, $-\tan(\alpha) a \leq b \leq \tan(\alpha
)a.$ As a result, for any $|\beta| < \frac{\pi}{2}-\alpha$, the matrix
\begin{equation}
H_\beta = \cos\beta A + \sin\beta B
\end{equation}
must be positive-semidefinite. Furthermore, $v^\dagger H_\beta v=0$, which implies that $H_\beta v=0$ for any $\beta$. Therefore, one must have $Av=Bv=0$.

\subsection{Orthogonalization in diagonalizing superoperator}

\label{sec:thermalization-finite-time-projections}

We now return to the problem of orthogonalization. Recall that our goal is to find the largest eigenvalue of the matrix $B-\theta A$ subject to the constraint that the vector $v$ is orthogonal to certain given vectors, which we will write as $U$, with $U^\dagger U=I$, and the constraint is $U^\dagger v=0$.

The problem is equivalent to finding the largest eigenvalue of $B-\theta A$ in the subspace orthogonal to $U$. We can construct the projector
\begin{equation}
P = I-UU^\dagger,
\end{equation}
and diagonalize $P(B-\theta A)P$. This is easy to implement in Lanczos iteration. We define a linear operator object, whose matvec function first takes $v\mapsto v- U(U^\dagger v)$, then applies $B-\theta A$, then applies the orthogonalization again. Note that to make this work when we find the largest eigenvalue, it is necessary that the top eigenvalue of $B-\theta A$ is positive; otherwise, the algorithm can converge to an eigenvalue of $0$ with a vector lying in $U$. The positivity is guaranteed in our setting as $B$ is positive.

It is notable that in our cases, $AU=0$, which implies that $PAP=A$. Therefore, the projected superoperator is equivalent to $PBP-\theta A$. That is, the projection only affects $B$. In the full physics notation, this implies that maximizing $\|O\|_\calE^2 - \theta \|[H,O]\|^2$ subject to $\calP[O]=O$ is equivalent to maximizing $\|\calP[O]\|_\calE^2 - \theta \| [H,O]\|^2$.

\bibliographystyle{apsrev4-2-etal}
\bibliography{MPS-CB}

\end{document}